\newtheorem{definition}{Definition}[section]
\newtheorem{theorem}{Theorem}[section]
\newtheorem{proposition}{Proposition}[section]
\newtheorem{example}{Example}[section]
\newtheorem{corollary}{Corollary}[section]
\newtheorem{remark}{Remark}[section]
\title{Information divergences of Markov chains and their applications}
\author[1]{Youjia Wang\thanks{Email: e1124868@u.nus.edu}}
\author[2]{Michael C.H. Choi\thanks{Email: mchchoi@nus.edu.sg}}
\affil[1]{Department of Statistics and Data Science, National University of Singapore, Singapore}
\affil[2]{Department of Statistics and Data Science and Yale-NUS College, National University of Singapore, Singapore}
\date{\today}
\begin{document}

\maketitle

\begin{abstract}
    In this paper, we first introduce and define several new information divergences in the space of transition matrices of finite Markov chains which measure the discrepancy between two Markov chains. These divergences offer natural generalizations of classical information-theoretic divergences, such as the $f$-divergences and the R\'enyi divergence between probability measures, to the context of finite Markov chains. We begin by detailing and deriving fundamental properties of these divergences and notably gives a Markov chain version of the Pinsker's inequality and Chernoff information. We then utilize these notions in a few applications. First, we investigate the binary hypothesis testing problem of Markov chains, where the newly defined R\'enyi divergence between Markov chains and its geometric interpretation play an important role in the analysis. Second, we propose and analyze information-theoretic (Ces\`aro) mixing times and ergodicity coefficients, along with spectral bounds of these notions in the reversible setting. Examples of the random walk on the hypercube, as well as the connections between the critical height of the low-temperature Metropolis-Hastings chain and these proposed ergodicity coefficients, are highlighted. \\
    \textbf{Keywords}: Markov chains; $f$-divergences; R\'enyi divergence; ergodicity coefficients; mixing times; Metropolis-Hastings\\
    \textbf{AMS 2020 subject classification}: 60J10, 60J20, 94A15, 94A17
\end{abstract}

\section{Introduction}

Given two Markov chains with transition matrices $M,L$ on a common finite state space $\mathcal{X}$, how to measure their information-theoretic discrepancies? The present manuscript is largely motivated by this question and one of the main aims is to systematically analyze various natural notions of information divergences of Markov chains and to present some interesting and important applications, particularly in the context of mixing times, convergence towards equilibrium and hypothesis testing of Markov chains. 

Let us begin by briefly recalling several established notions of information divergences of Markov chains in the literature. In \parencite{billera2001geometric}, the total variation distance between $M,L$ with respect to a given probability distribution $\pi$ is introduced and investigated for Metropolis-Hastings chains, namely 
$$\mathrm{TV}(M,L) :=\dfrac{1}{2}\sum_{x,y\in\mathcal{X}}\pi(x)\left|M(x,y)-L(x,y)\right|,$$
while the Kullback-Leibler (KL) divergence from $L$ to $M$ with respect to $\pi$, as studied in \cite{wolfer2021information}, is given by
$$D_{KL}(M \| L) := \sum_{x,y\in\mathcal{X}}\pi(x)M(x,y) \ln \left(\dfrac{M(x,y)}{L(x,y)}\right).$$
These two divergences are in fact special cases of $f$-divergences. Let $f: \mathbb{R}^+ \to \mathbb{R}$ be a convex function with $f(1) = 0$ that we call it as the generator of $f$-divergence. The $f$-divergence from $L$ to $M$ with respect to $\pi$ is defined to be
$$D_f(M\|L):=\sum_{x\in \mathcal{X}}\pi(x)\sum_{y\in\mathcal{X}}L(x,y)f\left(\dfrac{M(x,y)}{L(x,y)}\right).$$
Note that by taking $f(t) = \frac{1}{2}|t-1|$ and $f(t) = t \ln t$, we recover respectively the total variation distance and KL divergence introduced earlier. 

An important family of $f$-divergences is known as the $\alpha$-divergence, that we denote by $D_{\alpha}$, for $\alpha \in (0,1) \cup (1,+\infty)$ by taking $f(t)=\frac{t^\alpha-1}{\alpha-1}$. The R\'enyi divergence from $L$ to $M$ of the order $\alpha\in (0,1)\cup (1,+\infty)$ can be defined to be
\begin{align*}
    R_{\alpha}(M\|L) &:=\dfrac{1}{\alpha-1}\ln \left(1+(\alpha-1) D_{\alpha}( M\| L)\right).
\end{align*}
All these divergences shall be introduced in a more precise manner in Section \ref{sec:basicdef}.

With the above notations in mind, we proceed to highlight several important and interesting achievements of this manuscript. The first major achievements lie in \textbf{introducing and analyzing fundamental properties of $D_f, D_{\alpha}$ and $R_{\alpha}$} in the context of discrete-time Markov chains. While these divergences are frequently utilized to quantify the information divergences of probability measures, to the best of our knowledge these divergences have not been systematically investigated in finite Markov chains. Notable highlights include
\begin{itemize}
    \item Markov chain Pinsker's inequality (Proposition \ref{inequalities with other divergences}): for $0<\alpha\leq 1$,
    \begin{equation*}
        \dfrac{\alpha}{2}\mathrm{TV}^2(M,L)\leq R_{\alpha}(M\|L).
    \end{equation*}

    \item Chernoff information of Markov chain (Proposition \ref{Chernoff information: informal}): for $\alpha > 0$, 
    \begin{equation*}
        (1-\alpha)R_{\alpha}(M\|L)=\inf_{P\in \mathcal{L}}\left\{\alpha D_{KL}(P\|M)+(1-\alpha)D_{KL}(P\|L)\right\},
    \end{equation*}
    where $\mathcal{L} = \mathcal{L}(\mathcal{X})$ is the set of all transition matrices on $\mathcal{X}$.
    
    \item Ergodicity coefficients in the space of finite transition matrices, and their properties (Proposition \ref{prop:ergocoef}):
    given $P \in \mathcal{L}$ and $\alpha\in (0,1)\cup (1,+\infty)$, we define
    \begin{align*}
        \eta^f(P) = \eta^f(P,\pi) := \sup_{M,L \in \mathcal{L};~ M \neq L}\dfrac{D_f(MP\|LP)}{D_f(M\|L)}, \\
        \eta^{(R_{\alpha})}(P) = \eta^{(R_{\alpha})}(P,\pi) := \sup_{M,L \in \mathcal{L};~ M \neq L}\dfrac{R_{\alpha}(MP\|LP)}{R_{\alpha}(M\|L)}.
    \end{align*}
\end{itemize}

In the remaining of this section, let us assume that $P$ is ergodic and admits $\pi$ as its stationary distribution. Let $\Pi$ be a matrix with each row being $\pi$ for all rows. To quantify and assess the convergence rate of $P^t$ towards $\Pi$, we define the $\pi$-weighted $\alpha$-divergence mixing time to be
\begin{align*}
    \bar{t}_{\mathrm{mix}}^{(D_\alpha)}(\varepsilon,P)&:=\inf\left\{t>0:D_{\alpha}(P^t\|\Pi)<\varepsilon\right\}.
\end{align*}
Analogously we shall define R\'enyi divergence mixing times. In this vein, one of our major achievements in this direction lie in \textbf{proposing and analyzing new information-theoretic mixing times}, in which We offer spectral bounds on these mixing times:

\begin{theorem}[Informal presentation of Theorem \ref{upper and lower bound of D_alpha}]\label{thm:informal1}
    Let $P$ be an irreducible, aperiodic and $\pi$-reversible transition matrix, and we denote $\pi_{\mathrm{min}}:=\min_{x}\pi(x)>0$. For $\alpha\in (0,1)\cup (1,+\infty)$ and sufficiently small $\varepsilon$, we have
    \begin{align*}
        \bar{t}_{\mathrm{mix}}^{(D_\alpha)}(\varepsilon,P)&\leq \begin{cases}
            \dfrac{1}{\gamma_*}\left(\ln \dfrac{2\alpha}{(\alpha-1)\pi_{\mathrm{min}}}+\ln \dfrac{1}{\varepsilon}\right), \quad \textrm{if }\, \alpha>1,\\
            \dfrac{1}{\gamma_*}\left(\ln \dfrac{8\alpha}{(1-\alpha)\pi_{\mathrm{min}}}+\ln \dfrac{1}{\varepsilon}\right), \quad \textrm{if }\, 0<\alpha<1,
        \end{cases}    
    \end{align*}
    and 
    \begin{align*}
        \bar{t}_{\mathrm{mix}}^{(D_\alpha)}(\varepsilon,P) &\geq \begin{cases}
            \dfrac{1}{2\ln \frac{1}{1-\gamma_*}}\left(\ln \dfrac{\pi_{\mathrm{min}}^2}{8}+\ln\dfrac{1}{\varepsilon}\right), \quad \textrm{if }\, \alpha>1,\\
            \dfrac{1}{2\ln \frac{1}{1-\gamma_*}}\left(\ln \dfrac{\pi_{\mathrm{min}}^2\alpha}{16}+\ln \dfrac{1}{\varepsilon}\right), \quad \textrm{if }\, 0<\alpha<1,
        \end{cases}     
    \end{align*}
    where $\gamma_*$ is the absolute spectral gap of $P$. 
\end{theorem}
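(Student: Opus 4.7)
The plan is to reduce the transition-matrix $\alpha$-divergence to classical $\alpha$-divergences between probability measures on each row of $P^t$, and then feed in standard spectral estimates for reversible chains. Since $\Pi(x,y)=\pi(y)$,
\[
D_\alpha(P^t\|\Pi)=\sum_{x\in\mathcal{X}}\pi(x)\sum_{y\in\mathcal{X}}\pi(y)\,f_\alpha\!\left(\frac{P^t(x,y)}{\pi(y)}\right)=\sum_{x\in\mathcal{X}}\pi(x)\,D_\alpha\bigl(P^t(x,\cdot)\,\|\,\pi\bigr),
\]
the $\pi$-weighted average of classical row-wise $\alpha$-divergences. From here the entire argument is driven by controlling the ratio $P^t(x,y)/\pi(y)$ via the spectral structure of $P$.

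For the upper bound I would use the $L^\infty$ spectral estimate for $\pi$-reversible $P$, namely $|P^t(x,y)/\pi(y)-1|\le (1-\gamma_*)^t/\pi_{\min}$, a direct consequence of the spectral decomposition of $P$ on $L^2(\pi)$ together with Cauchy--Schwarz. Writing $\delta:=(1-\gamma_*)^t/\pi_{\min}$, for $\alpha>1$ the monotonicity of $t\mapsto t^\alpha$ on $[0,1+\delta]$ gives $\sum_y\pi(y)(P^t(x,y)/\pi(y))^\alpha \le (1+\delta)^\alpha$, hence $D_\alpha(P^t(x,\cdot)\|\pi) \le ((1+\delta)^\alpha-1)/(\alpha-1)$; combining this with $(1+\delta)^\alpha-1\le 2\alpha\delta$ valid for $\delta$ small (the ``sufficiently small $\varepsilon$'' regime) and averaging over $x$ yields $D_\alpha(P^t\|\Pi)\le \tfrac{2\alpha}{(\alpha-1)\pi_{\min}}(1-\gamma_*)^t$. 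Requiring the right-hand side to be less than $\varepsilon$ and using $-\ln(1-\gamma_*)\ge\gamma_*$ produces the claimed bound. The case $0<\alpha<1$ is analogous, using the monotone lower bound $\sum_y\pi(y)(P^t(x,y)/\pi(y))^\alpha \ge (1-\delta)^\alpha$ and a Taylor estimate of $1-(1-\delta)^\alpha$, tracking the $\alpha/(1-\alpha)$ prefactor to obtain the constant $8\alpha/((1-\alpha)\pi_{\min})$.

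For the lower bound I would combine the Markov-chain Pinsker inequality (Proposition \ref{inequalities with other divergences}) with a spectral lower bound on $\mathrm{TV}(P^t,\Pi)$. For $\alpha>1$, monotonicity $R_\alpha\ge R_1=D_{KL}$ combined with $D_\alpha\ge R_\alpha$ (from $\ln(1+x)\le x$) yields $D_\alpha(P^t\|\Pi)\ge 2\,\mathrm{TV}(P^t,\Pi)^2$; for $0<\alpha<1$, inverting the identity $R_\alpha=\tfrac{1}{1-\alpha}\ln\tfrac{1}{1-(1-\alpha)D_\alpha}$ together with the paper's $R_\alpha\ge\tfrac{\alpha}{2}\mathrm{TV}^2$ gives an analogous $D_\alpha\gtrsim c_\alpha\,\mathrm{TV}^2$ with $c_\alpha$ of order $\alpha/4$ in the small-$\mathrm{TV}$ regime. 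For the spectral TV lower bound, I would test against the second $L^2(\pi)$-normalized eigenfunction $\phi$ of $P$ (so $\|\phi\|_\infty\le 1/\sqrt{\pi_{\min}}$) and average the resulting pointwise bound in $x$ against $\pi$, using the estimate $\|\phi\|_{L^1(\pi)}\ge \|\phi\|_{L^2(\pi)}^2/\|\phi\|_\infty\ge \sqrt{\pi_{\min}}$, to obtain $\mathrm{TV}(P^t,\Pi)\ge c\,\pi_{\min}(1-\gamma_*)^t$. Squaring and combining yields $D_\alpha(P^t\|\Pi)\ge \pi_{\min}^2(1-\gamma_*)^{2t}/8$ (respectively $\alpha\pi_{\min}^2(1-\gamma_*)^{2t}/16$ for $\alpha<1$), and the claimed mixing-time lower bound follows by contraposing ``$D_\alpha<\varepsilon$''; the factor $1/(2\ln\tfrac{1}{1-\gamma_*})$ arises from the exponent $2t$.

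The principal obstacle is the bookkeeping of constants. For the upper bound with $0<\alpha<1$, the Taylor expansion of $1-(1-\delta)^\alpha$ carries the ``wrong'' sign at second order, so it must be controlled with an explicit remainder valid only on $\delta\le\delta_0$---this is precisely what forces the ``sufficiently small $\varepsilon$'' hypothesis and produces the constant $8\alpha$ in place of a sharper one. For the lower bound, the fragile step is extracting the full $\pi_{\min}$-power (rather than $\sqrt{\pi_{\min}}$) from the spectral $\mathrm{TV}$ estimate: a naive pointwise eigenfunction argument only yields $\sqrt{\pi_{\min}}$, so one must average in the starting state against $\pi$ and use the $L^1(\pi)$ control of $|\phi|$ to recover the extra $\sqrt{\pi_{\min}}$ factor that, upon squaring, gives the $\pi_{\min}^2$ appearing in the statement.
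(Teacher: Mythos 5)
Your proposal is correct and follows essentially the same route as the paper's proof of Theorem \ref{upper and lower bound of D_alpha}: the row-wise decomposition of $D_\alpha(P^t\|\Pi)$, the $L^\infty$ bound $|P^t(x,y)/\pi(y)-1|\le (1-\gamma_*)^t/\pi_{\mathrm{min}}$ for the upper bound, and Pinsker's inequality combined with a spectral lower bound on $\mathrm{TV}(P^t,\Pi)$ for the lower bound. The only cosmetic difference is that you derive $\mathrm{TV}(P^t,\Pi)\ge\tfrac{\pi_{\mathrm{min}}}{2}(1-\gamma_*)^t$ by averaging the eigenfunction identity in the starting state, whereas the paper cites the worst-case bound $d^{(\mathrm{TV})}(t)\ge\tfrac12|\lambda|^t$ and multiplies by $\pi_{\mathrm{min}}$; both yield the same constant, and your final constants match the statement.
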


On the other hand, the worst-case mixing time based upon $\alpha$-divergence is defined to be, for $\varepsilon > 0$,
\begin{align*}
     t_{\mathrm{mix}}^{(\mathrm{D_{\alpha}})}(\varepsilon,P)&:=\inf\left\{t>0: \max_{x\in \mathcal{X}}\widetilde D_{\alpha}(P^t(x,\cdot)\|\pi)=\max_{x\in \mathcal{X}}\dfrac{1}{\alpha-1}\left(\sum_{y\in \mathcal{X}}\pi(y)\left(\dfrac{P^t(x,y)}{\pi(y)}\right)^{\alpha}-1\right) <\varepsilon\right\}.
\end{align*}
To compare the average mixing time $\bar{t}_{\mathrm{mix}}^{(D_\alpha)}(\varepsilon,P)$ and the worst-case mixing time $t_{\mathrm{mix}}^{(\mathrm{D_{\alpha}})}(\varepsilon,P)$, in the following result we prove that these two mixing times are of the same order under very mild assumptions on the absolute spectral gap of $P$ (which are for instance satisfied in low-temperature Metropolis-Hastings chain or the random walk on hypercube):
\begin{theorem}[Informal presentation of Theorem \ref{comparison between TV and classical}]
    Under the same assumptions and notations as in Theorem \ref{thm:informal1}, and in addition $\gamma_*(P)$ satisfies some mild assumptions, we have
    \begin{equation*}
            \dfrac{1}{5}t_{\mathrm{mix}}^{(D_\alpha)}(\varepsilon,P)
            \leq \bar{t}_{\mathrm{mix}}^{(D_\alpha)}(\varepsilon,P)\leq t_{\mathrm{mix}}^{(D_\alpha)}(\varepsilon,P).
    \end{equation*}
\end{theorem}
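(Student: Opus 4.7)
The plan is to handle the two inequalities separately, with the right-hand one being immediate and the left-hand one carrying the substantive content.

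For the right-hand inequality, $\bar{t}_{\mathrm{mix}}^{(D_\alpha)}(\varepsilon, P) \leq t_{\mathrm{mix}}^{(D_\alpha)}(\varepsilon, P)$, I first unpack the definition of $D_\alpha(P^t\|\Pi)$ with $L=\Pi$, whose rows all equal $\pi$; a direct calculation gives
$$D_\alpha(P^t\|\Pi) \;=\; \sum_{x\in\mathcal{X}} \pi(x)\, \widetilde{D}_\alpha(P^t(x,\cdot)\|\pi),$$
so the $\pi$-weighted divergence is an average of the row-wise $\alpha$-divergences and is therefore dominated by their maximum. Hence any time $t$ certifying that the maximum is below $\varepsilon$ automatically certifies the same for the average, and the inequality on the infima follows.

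The substantive direction is the left-hand inequality, equivalently $t_{\mathrm{mix}}^{(D_\alpha)}(\varepsilon, P) \leq 5\,\bar{t}_{\mathrm{mix}}^{(D_\alpha)}(\varepsilon, P)$. My plan is to compare an upper bound on the worst-case mixing time with the lower bound on the average mixing time already supplied by Theorem \ref{thm:informal1}. For the former, the coarse estimate
$$\max_{x\in\mathcal{X}} \widetilde{D}_\alpha(P^t(x,\cdot)\|\pi) \;\leq\; \frac{1}{\pi_{\mathrm{min}}}\, D_\alpha(P^t\|\Pi),$$
combined with the upper bound of Theorem \ref{thm:informal1} applied at threshold $\varepsilon\pi_{\mathrm{min}}$, produces a bound of the form $t_{\mathrm{mix}}^{(D_\alpha)}(\varepsilon,P) \leq \frac{1}{\gamma_*}\bigl(c_\alpha\ln\tfrac{1}{\pi_{\mathrm{min}}} + \ln\tfrac{1}{\varepsilon}\bigr)$ for an explicit $\alpha$-dependent constant $c_\alpha$. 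For the latter I invoke Theorem \ref{thm:informal1} directly, namely $\bar{t}_{\mathrm{mix}}^{(D_\alpha)}(\varepsilon,P) \geq \frac{1}{2\ln(1/(1-\gamma_*))}\bigl(\ln\tfrac{1}{\varepsilon} - c_\alpha'\ln\tfrac{1}{\pi_{\mathrm{min}}}\bigr)$. Using $-\ln(1-\gamma_*)\geq \gamma_*$ collapses the two spectral prefactors into a factor of $2$, and the residual logarithmic terms are controlled once the ``mild assumption on $\gamma_*$'' forces $\ln\tfrac{1}{\pi_{\mathrm{min}}}$ to be dominated by (or at least comparable to) $\tfrac{1}{\gamma_*}$; this is exactly the regime of the hypercube and low-temperature Metropolis--Hastings examples emphasized in the statement.

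The main obstacle I expect is the bookkeeping in this final step: identifying the weakest possible hypothesis on $\gamma_*$ (and possibly an auxiliary smallness condition on $\varepsilon$ relative to $\pi_{\mathrm{min}}$) under which the overall constant is exactly $5$, and doing so uniformly across the two regimes $\alpha>1$ and $0<\alpha<1$ whose spectral prefactors in Theorem \ref{thm:informal1} differ. I also anticipate that $\alpha$ close to $1$ may need separate care because the $\alpha$-divergence constants degenerate there; if so, this restriction will be folded into the ``mild assumptions'' of the hypothesis alongside the spectral gap condition, rather than stated as a separate requirement.
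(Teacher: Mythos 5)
Your outline is the same as the paper's proof of Theorem \ref{comparison between TV and classical}: the right-hand inequality comes from averaging versus maximizing over rows, and the left-hand one from the sandwich $\pi_{\mathrm{min}}\,d^{(D_\alpha)}(t)\leq D_\alpha(P^t\|\Pi)\leq d^{(D_\alpha)}(t)$, which gives $t_{\mathrm{mix}}^{(D_\alpha)}(\varepsilon,P)\leq \bar t_{\mathrm{mix}}^{(D_\alpha)}(\pi_{\mathrm{min}}\varepsilon,P)$, followed by playing the spectral upper bound at threshold $\pi_{\mathrm{min}}\varepsilon$ against the spectral lower bound at threshold $\varepsilon$ from Theorem \ref{upper and lower bound of D_alpha}.

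Two bookkeeping points where your plan as written would not close. First, the inequality $-\ln(1-\gamma_*)\geq\gamma_*$ goes the wrong way: the ratio of the spectral prefactors you must control is $\tfrac{2\ln\frac{1}{1-\gamma_*}}{\gamma_*}$, and your inequality only shows this is \emph{at least} $2$. What you need is the opposite bound $\ln\tfrac{1}{1-\gamma_*}\leq\tfrac{\gamma_*}{1-\gamma_*}$, which gives a factor $\tfrac{2}{1-\gamma_*}$ rather than $2$; combined with the logarithmic ratio this yields the paper's constant $\tfrac{4}{1-\gamma_*}$, and the ``mild assumption'' on $\gamma_*$ is precisely what turns $\tfrac{4}{1-\gamma_*}$ into $5$ (i.e.\ $\gamma_*\leq 1/5$), not what controls the logarithms. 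Second, the residual ratio of logarithmic terms, of the form $\bigl(\ln\tfrac{1}{\varepsilon}+\ln\tfrac{2\alpha}{(\alpha-1)\pi_{\mathrm{min}}^2}\bigr)\big/\bigl(\ln\tfrac{1}{\varepsilon}+\ln\tfrac{\pi_{\mathrm{min}}^2}{8}\bigr)$, is bounded by $2$ in the paper via a smallness condition on $\varepsilon$ relative to $\pi_{\mathrm{min}}$ (e.g.\ $\varepsilon<\tfrac{(\alpha-1)\pi_{\mathrm{min}}^6}{128\alpha}$ for $\alpha>1$) — the alternative you lead with, dominating $\ln\tfrac{1}{\pi_{\mathrm{min}}}$ by $\tfrac{1}{\gamma_*}$, is not used and is not needed; your hedge about an auxiliary $\varepsilon$ condition is the correct route. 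With these two corrections your argument coincides with the paper's.
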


Analogous results comparing the average and the worst-case mixing time based upon R\'enyi divergence and total variation distance are also provided in Theorem \ref{comparison between TV and classical}. We also remark on the series of work that compare the order of various mixing times and hitting times of Markov chains or Markov processes, see for example \parencite{aldous1982some, anderson2023mixing, hermon2015technical, basu2014characterization, griffiths2014tight, oliveira2012mixing, peres2015mixing, sousi2014mixing} and the references therein.

The rest of this paper is organized as follows. In Section \ref{sec:basicdef}, we formally introduce various information divergences and ergodicity coefficients of Markov chains. Their fundamental properties are stated and proved in Section \ref{sec:properties}, followed by applications in binary hypothesis testing of Markov chains in Section \ref{sec:hypothesis testing}. The information-theoretic mixing times and their spectral bounds are discussed in Section \ref{mixing time to stationary distribution}. Finally, we elucidate on the relationship between Ces\`aro mixing times, ergodicity coefficients and the critical height of the low-temperature Metropolis-Hastings chain in Section \ref{cesaro mixing time}.

\section{Basic definitions}\label{sec:basicdef}

On a finite state space $\mathcal{X}$, we define $\mathcal{L} = \mathcal{L}(\mathcal{X})$ as the set of transition matrices of discrete-time Markov chains. We denote by $\mathcal{P}(\mathcal{X})$ to be the set of probability masses on $\mathcal{X}$. Let $\pi \in \mathcal{P}$ is any given positive probability distribution on $\mathcal{X}$, and denote $\mathcal{L}(\pi)\subset \mathcal{L}$ as the set of $\pi$-reversible transition matrices, where a transition matrix $P \in \mathcal{L}$ is said to be $\pi$-reversible if $\pi(x)P(x,y)=\pi(y)P(y,x)$ for all $x,y\in \mathcal{X}$. We also say that $P \in \mathcal{L}$ is $\pi$-stationary if it satisfies $\pi P = \pi$. 

First, we give the definition of $f$-divergence of Markov chains and recall that of probability measures.

\begin{definition}[$f$-divergence of Markov chains and of probability measures]\label{def:f-divergence}

Let $f:\mathbb{R}^+\mapsto \mathbb{R}$ be a convex function and $f(1)=0$. 
For given $\pi \in \mathcal{P}(\mathcal{X})$ and transition matrices $M, L\in \mathcal{L}$, we define the $f$-divergence from $L$ to $M$ as 
\begin{equation}\label{f divergence for matrices}
     D_f(M\|L):=\sum_{x\in \mathcal{X}}\pi(x)\sum_{y\in\mathcal{X}}L(x,y)f\left(\dfrac{M(x,y)}{L(x,y)}\right).
\end{equation}
For two probability measures $\mu,\nu \in \mathcal{P}(\Omega)$ with $\Omega$ finite, the $f$-divergence from $\nu$ to $\mu$ is defined to be 
\begin{equation}\label{f divergence for probability measure}
    \widetilde D_f(\mu\|\nu):= \sum_{x \in \Omega} \nu(x)f\left(\frac{\mu(x)}{\nu(x)}\right),
\end{equation}
where we apply the usual convention that $0f(\frac{0}{0}):=0$ and $0f(\frac{a}{0}):= a f^{\prime}(+\infty)$ with $f^{\prime}(+\infty) := \lim_{x \to 0^+} x f(\frac{1}{x})$ for $a > 0$ in the two definitions above.
\end{definition}

Note that $D_f(M\|L)$ can be understood as a $\pi$-weighted average of $\widetilde D_f(M(x,\cdot)\|L(x,\cdot))$, since we have
\begin{align}\label{eq:weightedDfML}
    D_f(M\|L) = \sum_{x \in \mathcal{X}} \pi(x) \widetilde D_f(M(x,\cdot)\|L(x,\cdot)).
\end{align}

In the sequel, a majority of our focus is devoted to the special case of $\alpha$-divergences, where we take $f(t)=\frac{t^\alpha-1}{\alpha-1}$ with $\alpha\in (0,1)\cup (1,+\infty)$ in Definition \ref{def:f-divergence}. As a result, this yields
\begin{align}
    D_{\alpha}(M\|L)&:=\dfrac{1}{\alpha-1}\left(\sum_{x\in \mathcal{X}}\pi(x)\sum_{y\in\mathcal{X}}L(x,y)\left(\dfrac{M(x,y)}{L(x,y)}\right)^{\alpha}-1\right),\\
    \widetilde D_{\alpha}(\mu\|\nu)&:=\dfrac{1}{\alpha-1}\left(\sum_{x\in \Omega}\nu(x)\left(\dfrac{\mu(x)}{\nu(x)}\right)^\alpha-1\right),\label{alpha divergence between discrete probability measures}
\end{align}
where the same convention as in Definition \ref{def:f-divergence} still applies in the equations above. Using $\widetilde D_{\alpha}(\mu\|\nu)$, we now introduce the R\'enyi divergence between probability measures as systematically investigated in \parencite{van2014renyi}.

\begin{definition}[R\'enyi divergence of probability measures]\label{Renyi divergence for probability measures}

For probability measures $\mu,\nu \in \mathcal{P}(\Omega)$ and $\Omega$ is finite, the R\'enyi divergence from $\nu$ to $\mu$ of the order $\alpha\in (0,1)\cup (1,+\infty)$ is given by
\begin{align}
    \widetilde R_{\alpha}(\mu\|\nu)&:= \dfrac{1}{\alpha-1}\ln \sum_{x\in \Omega}\nu(x)\left(\dfrac{\mu(x)}{\nu(x)}\right)^\alpha \nonumber \\
    &= \dfrac{1}{\alpha-1}\ln \left(1+(\alpha-1)\widetilde D_{\alpha}(\mu\|\nu)\right).\label{Renyi and f divergence for probability measure} 
\end{align}
\end{definition}

Using simple convexity arguments, it can readily be shown that $\widetilde R_{\alpha}(\mu\|\nu) \geq 0$ (see for example \parencite{van2014renyi}), from which we see that $1+(\alpha-1)\widetilde D_{\alpha}(\mu\|\nu)\geq 0$. In the case for Markov chains however, the definition for R\'enyi divergence from $L$ to $M$ is more involved, as we cannot directly define it in the form \eqref{Renyi and f divergence for probability measure} via simply replacing $\widetilde D_{\alpha}(\mu\|\nu)$ with $D_{\alpha}(M\|L)$, since such definition can be ill-defined as it is a priori unclear whether $1+(\alpha-1)D_{\alpha}(M\|L)\geq0$ when $\alpha<1$. To give a meaningful definition in this case, one possible remedy is to instead consider the divergence of the edge measure of the two chains.
\begin{definition}[Edge measure of a Markov chain with respect to $\pi$]\label{transformation from matrix to probability measure} 

For $x,y\in \mathcal{X}$, $M\in\mathcal{L}$ and $\pi \in \mathcal{P}(\mathcal{X})$, if we write
\begin{align*}
    \widetilde M(x,y)&:=\pi(x)M(x,y),
\end{align*}
then $\widetilde M = \left(\widetilde M(x,y)\right)_{\mathcal{X} \times \mathcal{X}}$ is a probability measure on $\mathcal{X}\times \mathcal{X}$. $\widetilde M$ is known as the edge measure of $M$ with respect to $\pi$, see \parencite{wolfer2021information}. Analogously we can define the edge measure of $L$ with respect to $\pi$, denoted by $\widetilde L$.
\end{definition}

Using Definition \ref{transformation from matrix to probability measure} and \eqref{alpha divergence between discrete probability measures}, this immediately yields an important equality which will play a key role in the following sections:
\begin{equation}\label{equality}
    \widetilde D_{\alpha}(\widetilde M\|\widetilde L)=\dfrac{1}{\alpha-1}\left(\sum_{x,y\in \mathcal{X}}\pi(x)L(x,y)\left(\dfrac{M(x,y)}{L(x,y)}\right)^\alpha-1\right)=D_{\alpha}(M\|L).
\end{equation}
Intuitively, it means that the $\alpha$-divergence from the edge measure $\widetilde L$ to $\widetilde M$ equals to the $\alpha$-divergence from the chain $L$ to $M$. Thus, to analyze the property of $D_{\alpha}(M\|L)$, it suffices to investigate the corresponding property of $\widetilde D_{\alpha}(\widetilde M\|\widetilde L)$. With equality \eqref{equality}, we can proceed to give a well-defined R\'enyi divergence $R_{\alpha}(M\|L)$. 

\begin{definition}[R\'enyi divergence of Markov chains]

Given $M,L\in \mathcal{L}$, the R\'enyi divergence from $L$ to $M$ of the order $\alpha\in (0,1)\cup (1,+\infty)$ is given by
\begin{align}
    R_{\alpha}(M\|L)&:=\widetilde R_{\alpha}(\widetilde M\|\widetilde L)\label{equality for Renyi divergence}\\
    &=\dfrac{1}{\alpha-1}\ln \left(1+(\alpha-1)\widetilde D_{\alpha}(\widetilde M\|\widetilde L)\right)\nonumber\\
    &=\dfrac{1}{\alpha-1}\ln \left(1+(\alpha-1) D_{\alpha}( M\| L)\right).\label{eq:Renyi dovergence for transition matrices}
\end{align}
\end{definition}

Note that for $t \geq 0$ (resp.~$0 \leq t \leq 1/(1-\alpha)$) and $\alpha > 1$ (resp.~$\alpha \in (0,1)$), the mapping $t \mapsto \frac{1}{\alpha-1} \ln(1+(\alpha-1)t)$ is concave (resp.~convex), and hence by Jensen's inequality and \eqref{eq:weightedDfML}, this yields
\begin{align}
    R_{\alpha}(M\|L) \begin{cases}
            \geq \sum_{x \in \mathcal{X}} \pi(x) \widetilde R_{\alpha}(M(x,\cdot)\|L(x,\cdot)), \quad \textrm{if }\, \alpha>1,\\
            \leq \sum_{x \in \mathcal{X}} \pi(x) \widetilde R_{\alpha}(M(x,\cdot)\|L(x,\cdot)), \quad \textrm{if }\, 0<\alpha<1,
        \end{cases}
\end{align}
where the right hand side above can be understood as the $\pi$-weighted average of the R\'enyi divergence $\widetilde R_{\alpha}(M(x,\cdot)\|L(x,\cdot))$. These two inequalities also contrast with the equality in \eqref{eq:weightedDfML}.

In addition to $D_{\alpha}$ and $R_{\alpha}$ as discussed earlier, we next introduce some $f$-divergences of Markov chains. Their counterparts in probability measures are well-studied in the literature, see for example \parencite[Eq.5-7]{van2014renyi}.

\begin{definition}[Some common $f$-divergences of Markov chains]\label{other divergences}

For $M,L\in\mathcal{L}$ and $\pi \in \mathcal{P}(\mathcal{X})$, by taking $f(t)$ to be respectively $(\sqrt{t}-1)^2, (t-1)^2, |t-1|/2$ in Definition \ref{def:f-divergence}, we define 
\begin{align}
    (\mathbf{Hellinger\enspace distance})\quad \mathrm{Hel}^2(M,L)&:=\sum_{x,y\in\mathcal{X}}\pi(x)\left(\sqrt{M(x,y)}-\sqrt{L(x,y)}\right)^2,\\
    (\mathbf{\chi^2-divergence})\quad \chi^2(M\|L)&:=\sum_{L>0}\pi(x)\dfrac{\left(M(x,y)-L(x,y)\right)^2}{L(x,y)},\\
    (\mathbf{Total\enspace variation\enspace distance})\quad \mathrm{TV}(M,L)&:=\dfrac{1}{2}\sum_{x,y\in\mathcal{X}}\pi(x)\left|M(x,y)-L(x,y)\right|.
\end{align}
\end{definition}

The next definition concerns the closest $\pi$-reversible transition matrix to any given transition matrix $L$ as measured by the R\'{e}nyi divergence of the order $\alpha$:
\begin{definition}[$\alpha$-projection]\label{alpha-projection}

For $L\in \mathcal{L}$ and $\alpha\in (0,1)\cup (1,+\infty)$, we say that $L^{(R_\alpha)}$ is an $\alpha$-projection of $L$ to $\mathcal{L}(\pi)$, if
\begin{equation*}
   L^{(R_\alpha)}=L^{(R_\alpha)}(L,\pi)\in\mathop{\arg\min}\limits_{M\in\mathcal{L}(\pi)}R_{\alpha}(M\|L),
\end{equation*}
where we recall that $\mathcal{L}(\pi)$ denotes the set of $\pi$-reversible transition matrices.
\end{definition}

Note that the set $\mathcal{L}(\pi)$ is a compact subset of $\mathbb{R}^{|\mathcal{X}| \times |\mathcal{X}|}$ and the mapping $M \mapsto R_{\alpha}(M\|L)$ is clearly continuous, and hence $L^{(R_\alpha)}$ exists. We shall investigate the uniqueness of $\alpha$-projection later using the Pythagorean inequality \eqref{pythagorean inequality} in Proposition \ref{prop:pythagorean inequality and weak triangle inequaliy}.

The final definition of the section introduces new ergodicity coefficients of $P$ on the space of transition matrices $\mathcal{L}$ based upon information divergences. These can be used to assess the contraction rate of $P$.

\begin{definition}[Ergodicity coefficients of Markov chains based on $D_f$ and $R_{\alpha}$]\label{def:ergodicity coefficients}
    Given $P \in \mathcal{L}$, $\pi \in \mathcal{P}(\mathcal{X})$ and $\alpha\in (0,1)\cup (1,+\infty)$, we define
    \begin{align*}
        \eta^f(P) = \eta^f(P,\pi) := \sup_{M,L \in \mathcal{L};~ M \neq L}\dfrac{D_f(MP\|LP)}{D_f(M\|L)}, \\
        \eta^{(R_{\alpha})}(P) = \eta^{(R_{\alpha})}(P,\pi) := \sup_{M,L \in \mathcal{L};~ M \neq L}\dfrac{R_{\alpha}(MP\|LP)}{R_{\alpha}(M\|L)}.
    \end{align*}
    Their classical counterparts on the space of probability measures $\mathcal{P}(\mathcal{X})$ are defined analogously to be
    \begin{align*}
        \widetilde\eta^f(P) := \sup_{\mu,\nu \in \mathcal{P}(\mathcal{X});~ \mu \neq \nu}\dfrac{\widetilde D_f(\mu P\|\nu P)}{\widetilde D_f(\mu\|\nu)}, \\
        \widetilde\eta^{(R_{\alpha})}(P) := \sup_{\mu,\nu \in \mathcal{P}(\mathcal{X});~ \mu \neq \nu} \dfrac{\widetilde R_{\alpha}(\mu P\|\nu P)}{\widetilde R_{\alpha}(\mu\|\nu)}.
    \end{align*}
\end{definition}

In the special cases of total variation distance and $\alpha$-divergence, we shall write the corresponding ergodicity coefficients of $P$ to be $\eta^{(\mathrm{TV})}(P),\widetilde\eta^{(\mathrm{TV})}(P)$ for $f(t)=|t-1|/2$ and $\eta^{(D_{\alpha})}(P),\widetilde\eta^{(D_{\alpha})}(P)$ for $f(t)=\frac{t^\alpha-1}{\alpha-1}$ respectively. In the literature, $\eta^{(\mathrm{TV})}(P)$ is also called the Dobrushin contraction coefficient, see \parencite{cohen1993relative}.



\section{Properties of $D_{\alpha}(M\|L)$ and $R_{\alpha}(M\|L)$}\label{sec:properties}
In this section, we present some fundamental and important properties of the $\alpha$-divergence $D_{\alpha}(M\|L)$ and the R\'enyi divergence $R_{\alpha}(M\|L)$ for $M,L\in \mathcal{L}$, which follows readily from the analogous results \parencite{van2014renyi} in the context of probability measures. For $M, L \in \mathcal{L}$, we give a roadmap of the major results and new notions that we introduce in this section as follows. We shall see that many of these properties are inherited from their counterparts of $\widetilde D_{\alpha}, \widetilde R_{\alpha}$:
\begin{itemize}
    \item Non-negativity of $D_{\alpha}(M\|L)$ and $R_{\alpha}(M\|L)$: Proposition \ref{prop:non-negative}
    
    \item Convexity of the mappings $(M,L)\mapsto D_{\alpha}(M\|L)$ and $(M,L)\mapsto R_{\alpha}(M\|L)$: Proposition \ref{prop:convexity}

    \item Monotonicity in $\alpha$ of $R_{\alpha}(M\|L)$: Proposition \ref{prop:monotonicity}

    \item Pinsker's inequality of Markov chains and various relationships with other divergences: Proposition \ref{inequalities with other divergences}

    \item Pythagorean inequality and weak triangle inequality of the R\'{e}nyi divergence: Proposition \ref{prop:pythagorean inequality and weak triangle inequaliy}

    \item Chernoff information of Markov chains: Proposition \ref{Chernoff information: informal}

    \item Properties of various ergodicity coefficients: Proposition \ref{prop:ergocoef}
    
\end{itemize}

The first result that we present demonstrates the non-negativity of $D_{\alpha}(M\|L)$ and $R_{\alpha}(M\|L)$.

\begin{proposition}[Non-negativity of $D_{\alpha}(M\|L)$ and $R_{\alpha}(M\|L)$]\label{prop:non-negative}
    
Given $M,L\in \mathcal{L}$ and $\pi \in \mathcal{P}(\mathcal{X})$ is strictly positive on $\mathcal{X}$, we have $D_{\alpha}(M\|L)\geq 0$ and $R_{\alpha}(M\|L)\geq 0$, and the equality holds if and only if $M=L$.
\end{proposition}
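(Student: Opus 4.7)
The plan is to reduce everything to the corresponding non-negativity statements for $\alpha$-divergence and R\'enyi divergence of probability measures, via the key identity \eqref{equality}. Concretely, for any $M,L \in \mathcal{L}$ the edge measures $\widetilde M, \widetilde L$ lie in $\mathcal{P}(\mathcal{X}\times\mathcal{X})$, and by \eqref{equality} and \eqref{equality for Renyi divergence},
\begin{equation*}
    D_{\alpha}(M\|L) = \widetilde D_{\alpha}(\widetilde M \| \widetilde L), \qquad R_{\alpha}(M\|L) = \widetilde R_{\alpha}(\widetilde M \| \widetilde L),
\end{equation*}
so it suffices to establish non-negativity together with the equality case at the level of probability measures on $\mathcal{X}\times\mathcal{X}$.

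First I would prove $\widetilde D_{\alpha}(\mu\|\nu) \geq 0$ for arbitrary $\mu,\nu \in \mathcal{P}(\Omega)$ by Jensen's inequality applied to the function $t \mapsto t^{\alpha}$, which is strictly convex for $\alpha > 1$ and strictly concave for $\alpha \in (0,1)$. In both cases one gets $\sum_x \nu(x)(\mu(x)/\nu(x))^{\alpha} \geq 1$ when $\alpha>1$ and $\leq 1$ when $\alpha \in (0,1)$; dividing by $\alpha-1$ (which flips the inequality in the second regime) yields $\widetilde D_{\alpha}(\mu\|\nu) \geq 0$ in both cases. Non-negativity of $R_{\alpha}$ then follows by applying $\widetilde R_\alpha(\mu\|\nu) = \frac{1}{\alpha-1}\ln(1+(\alpha-1)\widetilde D_\alpha(\mu\|\nu))$: the argument of the logarithm is $\sum_x \nu(x)(\mu(x)/\nu(x))^\alpha$, which is $\geq 1$ when $\alpha > 1$ and $\in (0,1]$ when $\alpha \in (0,1)$, so after dividing by $\alpha-1$ the result is again non-negative.

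For the equality case, strict convexity (resp.\ concavity) of $t\mapsto t^{\alpha}$ for $\alpha \neq 1$ ensures that equality in the Jensen step forces $\mu(x)/\nu(x)$ to be constant on $\{x:\nu(x)>0\}$, and the normalization $\sum_x \mu(x) = \sum_x \nu(x) = 1$ (along with the convention for $\nu(x)=0$) then forces $\mu=\nu$. Applied to $\mu = \widetilde M$, $\nu = \widetilde L$, this gives $\widetilde M(x,y) = \widetilde L(x,y)$ for all $x,y$, i.e.\ $\pi(x)M(x,y) = \pi(x)L(x,y)$. Here I would invoke the hypothesis that $\pi$ is strictly positive to divide through and conclude $M(x,y) = L(x,y)$ for all $x,y \in \mathcal{X}$, hence $M=L$. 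The converse direction ($M=L \Rightarrow D_\alpha = R_\alpha = 0$) is immediate from $f(1)=0$ and $\ln 1 = 0$.

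There is no real obstacle here; the argument is a direct transfer from the probability-measure case recorded in \parencite{van2014renyi}. The only mild care-point is keeping track of the sign of $\alpha-1$ so that the inequality directions in the $\alpha \in (0,1)$ and $\alpha > 1$ regimes are handled uniformly, and using strict positivity of $\pi$ in the equality case to pass from $\widetilde M = \widetilde L$ back to $M=L$.
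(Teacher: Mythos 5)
Your proof is correct, and it rests on the same underlying idea as the paper's: reduce both statements to the known non-negativity of $\widetilde D_{\alpha}$ and $\widetilde R_{\alpha}$ for probability measures. The only (minor) difference is in how you route the $D_{\alpha}$ part: the paper uses the row-wise decomposition $D_{\alpha}(M\|L)=\sum_x \pi(x)\widetilde D_{\alpha}(M(x,\cdot)\|L(x,\cdot))$ from \eqref{eq:weightedDfML} and concludes $M=L$ from the vanishing of every row divergence (using $\pi>0$), whereas you pass through the edge-measure identity \eqref{equality} for both $D_{\alpha}$ and $R_{\alpha}$ and use $\pi>0$ only at the very end to divide $\pi(x)M(x,y)=\pi(x)L(x,y)$ through by $\pi(x)$. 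These two reductions are essentially equivalent since $\widetilde D_{\alpha}(\widetilde M\|\widetilde L)$ is exactly the $\pi$-weighted average of the row divergences; your version is slightly more self-contained in that you spell out the Jensen argument (including the sign flip from dividing by $\alpha-1$ and the equality case via strict convexity/concavity) rather than citing the probability-measure results, which is a fine trade-off.
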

\begin{proof}
    In view of \eqref{eq:weightedDfML}, the non-negativity of $D_{\alpha}$ is clear and inherits from that of $\widetilde D_{\alpha}$, see e.g. \cite[Theorem $7.5$]{polyanskiy2022information}. Since $\pi$ is strictly positive, the equality holds if and only if $\widetilde D_{\alpha}(M(x,\cdot)\|L(x,\cdot)) = 0$ for all $x \in \mathcal{X}$ if and only if $M=L$. 
    As to $R_{\alpha}(M\|L)$, since $R_{\alpha}(M\|L)=\widetilde R_{\alpha}\left(\widetilde M\|\widetilde L\right)$, it suffices to show $\widetilde R_{\alpha}\left(\widetilde M\|\widetilde L\right)\geq 0$, which is a classic result as in \parencite[Theorem 8]{van2014renyi}.
\end{proof}

The second result proves the convexity of the $\alpha$-divergence and R\'{e}nyi divergence for suitable $\alpha$ choices:
\begin{proposition}[Convexity of $D_{\alpha}(M\|L)$ and $R_{\alpha}(M\|L)$]\label{prop:convexity}

For $M,L\in \mathcal{L}$ and $\pi \in \mathcal{P}(\mathcal{X})$, the following statements hold:
\begin{enumerate}[label=(\roman*).]
    \item For $\alpha\in (0,1)$, both of $D_{\alpha}(M\|L)$ and $R_{\alpha}(M\|L)$ are jointly convex in $(M,L)$.

    \item For $\alpha\in (1,+\infty)$, $D_{\alpha}(M\|L)$ is jointly convex in $(M,L)$, and $R_{\alpha}(M\|L)$ is convex in $L$.

    \item For $\alpha\in (0,1)\cup (1,+\infty)$, $R_{\alpha}(M\|L)$ is jointly quasi-convex in (M,L), i.e. for $\lambda\in (0,1)$ and $(M_1,L_1),(M_2,L_2)$, we have
    \begin{equation*}
        R_{\alpha}(\lambda M_1+(1-\lambda)M_2\|\lambda L_1+(1-\lambda)L_2)\leq 
        \max \left\{R_{\alpha}(M_1\|L_1),R_{\alpha}(M_2\|L_2)\right\}.
    \end{equation*}
\end{enumerate}
\end{proposition}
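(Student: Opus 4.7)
The plan is to leverage the edge-measure identities \eqref{equality} and \eqref{equality for Renyi divergence} to reduce every claim to its classical counterpart for probability measures on $\mathcal{X}\times\mathcal{X}$, which is established in \parencite{van2014renyi}. The crucial enabling observation is that the edge-measure map $M\mapsto \widetilde M$ is \emph{linear} in $M$ (with $\pi$ fixed): for any $\lambda\in(0,1)$ and $M_1,M_2\in\mathcal{L}$,
\begin{equation*}
    \widetilde{\lambda M_1+(1-\lambda)M_2}(x,y)=\pi(x)\bigl(\lambda M_1(x,y)+(1-\lambda)M_2(x,y)\bigr)=\lambda\widetilde{M_1}(x,y)+(1-\lambda)\widetilde{M_2}(x,y),
\end{equation*}
and similarly for $L$. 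Consequently, any joint convexity or quasi-convexity property enjoyed by $\widetilde D_\alpha$ or $\widetilde R_\alpha$ on pairs of probability measures lifts verbatim to $D_\alpha$ and $R_\alpha$ on pairs of transition matrices via \eqref{equality} and \eqref{equality for Renyi divergence}.

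For part (i), I would invoke the joint convexity of $\widetilde D_\alpha(\mu\|\nu)$ and $\widetilde R_\alpha(\mu\|\nu)$ for $\alpha\in(0,1)$ (Theorem 11 of \parencite{van2014renyi}), and compose with the linear edge-measure map. For part (ii), joint convexity of $\widetilde D_\alpha$ for $\alpha>1$ is the classical joint convexity of $f$-divergences with convex generator $f(t)=(t^\alpha-1)/(\alpha-1)$ (whose perspective function is jointly convex on $\mathbb{R}^+\times\mathbb{R}^+$), while convexity of $\widetilde R_\alpha(\mu\|\nu)$ in $\nu$ for $\alpha>1$ is Theorem 12 of \parencite{van2014renyi}; since $\widetilde L$ depends linearly on $L$ with $M$ fixed, this yields convexity of $R_\alpha(M\|L)$ in $L$. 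For part (iii), the joint quasi-convexity of $\widetilde R_\alpha$ for all admissible $\alpha$ is Theorem 13 of \parencite{van2014renyi}; applying it to the pairs $(\widetilde{M_i},\widetilde{L_i})$, using the displayed linearity on both arguments, and converting back via \eqref{equality for Renyi divergence} on both sides gives the claim.

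There is no substantive obstacle here: the entire proposition follows as a direct corollary of the corresponding classical results, with the only tool beyond \parencite{van2014renyi} being the linear correspondence between transition matrices and their $\pi$-weighted edge measures. The one subtlety worth flagging is that in part (ii) we assert only convexity of $R_\alpha(M\|L)$ in the argument $L$, not joint convexity in $(M,L)$; this is consistent with the known failure of joint convexity of $\widetilde R_\alpha$ for $\alpha>1$ on probability measures, which is precisely why the weaker joint quasi-convexity of part (iii) is the best statement uniformly in $\alpha\in(0,1)\cup(1,+\infty)$.
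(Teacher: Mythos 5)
Your proposal is correct and follows essentially the same route as the paper: joint convexity of $D_\alpha$ via the perspective-function/convexity-of-$f$ argument, and reduction of the $R_\alpha$ claims to the classical results of \parencite{van2014renyi} through the (linear) edge-measure correspondence $M\mapsto\widetilde M$. The only cosmetic difference is in part (i), where the paper obtains joint convexity of $R_\alpha$ by composing the increasing convex map $t\mapsto\frac{1}{\alpha-1}\ln(1+(\alpha-1)t)$ with the jointly convex $D_\alpha$, while you cite the classical joint convexity of $\widetilde R_\alpha$ directly; both are valid.
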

\begin{proof}
    We first prove that for $\alpha>0$ and $\alpha\neq 1$, $D_{\alpha}(M\|L)$ is jointly convex in $(M,L)$, i.e. for $\lambda\in (0,1)$ and $(M_1,L_1),(M_2,L_2) \in \mathcal{L} \times \mathcal{L}$, we need to prove
    \begin{equation*}
        D_{\alpha}\left(\lambda M_1+(1-\lambda)M_2\|\lambda L_1+(1-\lambda)L_2\right)\leq 
        \lambda D_{\alpha}(M_1\|L_1)+(1-\lambda)D_{\alpha}(M_2\|L_2).
    \end{equation*}
    This can be readily verified: as $f(t)=\frac{t^{\alpha}-1}{\alpha-1}$ is convex, which implies that $L(x,y) f\left(\frac{M(x,y)}{L(x,y)}\right)$ is convex in $(M(x,y),L(x,y))$, since
    \begin{align*}
        L_{\lambda}(x,y)f\left(\dfrac{M_{\lambda}(x,y)}{L_{\lambda}(x,y)}\right) &\leq \lambda L_1(x,y) f\left(\dfrac{M_1(x,y)}{L_1(x,y)}\right)+(1-\lambda)L_2(x,y) f\left(\dfrac{M_2(x,y)}{L_2(x,y)}\right)\\
        \iff f\left(\dfrac{M_{\lambda}(x,y)}{L_{\lambda}(x,y)}\right) &\leq \lambda \dfrac{L_1(x,y)}{L_{\lambda}(x,y)}f\left(\dfrac{M_1(x,y)}{L_1(x,y)}\right)+(1-\lambda)\dfrac{L_2(x,y)}{L_{\lambda}(x,y)}f\left(\dfrac{M_2(x,y)}{L_2(x,y)}\right),
    \end{align*}
    where $M_{\lambda}(x,y)=\lambda M_1(x,y) + (1-\lambda)M_2(x,y)$ and $L_{\lambda}(x,y)=\lambda L_1(x,y) + (1-\lambda)L_2(x,y)$ are the shorthand notations, and the last inequality indeed holds from the convexity of $f$.

    When $\alpha \in (0,1)$, the mapping $t \mapsto \frac{1}{\alpha-1}\ln (1+(\alpha-1)t)$ is convex, hence $R_{\alpha}(M\|L)=\frac{1}{\alpha-1}\ln \left(1+(\alpha-1) D_{\alpha}( M\| L)\right)$ is also convex. 

    When $\alpha\in (1,+\infty)$, for all $\lambda\in (0,1)$ and $(M,L_1),(M,L_2) \in \mathcal{L} \times \mathcal{L}$, we need to prove
    \begin{equation*}
        R_{\alpha}(M\|\lambda L_1+(1-\lambda)L_2)\leq \lambda R_{\alpha}(M\|L_1)+(1-\lambda)R_{\alpha}(M\|L_2).
    \end{equation*}
     Note that for $L_{\lambda}=\lambda L_1+(1-\lambda)L_2$, we have
    \begin{align*}
        R_{\alpha}(M\|L_{\lambda})&= \widetilde R_{\alpha}\left(\widetilde M\|\widetilde L_{\lambda}\right)\\
        &=\widetilde R_{\alpha}\left(\widetilde M\|\lambda\widetilde L_1+(1-\lambda)\widetilde L_2\right)\\
        &\leq \lambda \widetilde R_{\alpha}\left(\widetilde M\|\widetilde L_1\right)+(1-\lambda)\widetilde R_{\alpha}\left(\widetilde M\|\widetilde L_2\right)\\
        &=\lambda R_{\alpha}(M\|L_1)+(1-\lambda)R_{\alpha}(M\|L_2),
    \end{align*}
    where the inequality follows from the convexity of $\widetilde R_{\alpha}(M\|L)$ in $L$ from \parencite[Theorem 12]{van2014renyi}, and hence $R_{\alpha}(M\|L)$ is convex in $L$. Similarly, the last item can also be proved via using $R_{\alpha}(M\|L)=\widetilde R_{\alpha}(\widetilde M\|\widetilde L)$.
\end{proof}

The next result proves the monotonicity of the R\'enyi divergence in $\alpha$:

\begin{proposition}[Monotonicity in $\alpha$]\label{prop:monotonicity}
    
    For $M,L\in \mathcal{L}$ and $\pi \in \mathcal{P}(\mathcal{X})$, $R_{\alpha}(M\|L)$ is non-decreasing with respect to $\alpha\in (0,1)\cup (1,+\infty)$. 
\end{proposition}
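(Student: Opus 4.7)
The plan is to reduce the claim to the analogous monotonicity of the classical R\'enyi divergence on probability measures, via the identity $R_{\alpha}(M\|L) = \widetilde R_{\alpha}(\widetilde M\|\widetilde L)$ from \eqref{equality for Renyi divergence}. The edge measures $\widetilde M, \widetilde L$ are probability measures on the finite product set $\mathcal{X}\times\mathcal{X}$, so once this identification is made, the statement follows immediately from \parencite[Theorem 3]{van2014renyi}, which asserts that $\widetilde R_{\alpha}(\mu\|\nu)$ is non-decreasing in $\alpha$ for arbitrary probability measures $\mu,\nu$ on a common finite set.

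For a self-contained derivation, I would introduce $\phi(\alpha) := \ln \sum_{x,y\in\mathcal{X}} \widetilde M(x,y)^{\alpha}\,\widetilde L(x,y)^{1-\alpha}$, so that $R_{\alpha}(M\|L) = \phi(\alpha)/(\alpha-1)$ for $\alpha \neq 1$. Two observations then close the argument. First, $\phi(1) = \ln \sum_{x,y} \widetilde M(x,y) = 0$ since $\widetilde M$ is a probability measure on $\mathcal{X}\times\mathcal{X}$. Second, $\phi$ is a convex function of $\alpha$: this is H\"older's inequality applied to the sum defining $e^{\phi(\alpha)}$, and can equivalently be read as $\phi$ being (an affine reparametrization of) a logarithmic moment generating function of $\ln(\widetilde M/\widetilde L)$ under $\widetilde L$. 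Convexity of $\phi$ together with $\phi(1)=0$ forces the chord slope $(\phi(\alpha)-\phi(1))/(\alpha-1)$ to be non-decreasing in $\alpha$ on $(0,1)\cup(1,+\infty)$, which is exactly the monotonicity we want.

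There is no substantive obstacle: the identity \eqref{equality for Renyi divergence} does essentially all the work, after which the classical chord-slope argument transfers verbatim. The only bookkeeping point is to verify that the conventions $0 f(0/0) = 0$ and $0 f(a/0) = a f'(+\infty)$ from Definition \ref{def:f-divergence} are compatible with the convexity step uniformly in $\alpha$; this is immediate, since vanishing entries of $\widetilde L$ or $\widetilde M$ simply drop out as zero summands and the surviving entries satisfy H\"older's inequality in the standard manner.
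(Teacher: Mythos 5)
Your proposal is correct, but it takes a genuinely different route from the paper. The paper does \emph{not} invoke the edge-measure reduction here: it fixes $\alpha<\beta$ and applies Jensen's inequality directly to the defining sum, rewriting $R_{\alpha}(M\|L)=\frac{1}{\alpha-1}\ln\sum_{x,y}\pi(x)M(x,y)\left(\frac{M(x,y)}{L(x,y)}\right)^{(\beta-1)\frac{\alpha-1}{\beta-1}}$ and using convexity/concavity of $t\mapsto t^{\frac{\alpha-1}{\beta-1}}$ under the probability measure $\pi(x)M(x,y)$ --- a pairwise comparison of two orders. Your primary route (identify $R_{\alpha}(M\|L)=\widetilde R_{\alpha}(\widetilde M\|\widetilde L)$ and cite the classical monotonicity theorem) is exactly the transfer mechanism the paper uses for most of its \emph{other} propositions in Section 3, and it is immediate and valid. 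Your self-contained chord-slope argument is also a standard alternative: it is more global (one convexity statement for $\phi$ yields monotonicity for all pairs of orders at once), at the cost of one bookkeeping point you got slightly wrong. Namely, $\phi(1)=\ln\sum_{\widetilde L>0}\widetilde M(x,y)=\ln\widetilde M(\{\widetilde L>0\})$, which equals $0$ only when $\widetilde M\ll\widetilde L$; and for $\alpha>1$ the entries with $M(x,y)>0=L(x,y)$ do not ``drop out as zero summands'' --- by the convention $f'(+\infty)=+\infty$ they force $R_{\alpha}=+\infty$. Neither issue is fatal: when $\widetilde M\not\ll\widetilde L$ one has $R_{\alpha}=+\infty$ for all $\alpha>1$ (so monotonicity is trivial there), and on $(0,1)$ the extra term $\phi(1)/(\alpha-1)$ with $\phi(1)\le 0$ is itself non-decreasing, so the chord-slope argument still closes. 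The paper's Jensen computation quietly sidesteps this by working under the measure $\pi(x)M(x,y)$, which always sums to $1$.
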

\begin{proof}
    For $\alpha,\beta\in (0,1)\cup (1,+\infty)$ and $\alpha<\beta$, the mapping $t \mapsto t^{\frac{\alpha-1}{\beta-1}}$ is strictly convex when $\alpha<1$ and strictly concave when $\alpha>1$, hence we have
    \begin{align*}
        R_{\alpha}(M\|L)&=\dfrac{1}{\alpha-1}\ln \left(\sum_{x,y\in \mathcal{X}}\pi(x)L(x,y)\left(\dfrac{M(x,y)}{L(x,y)}\right)^{\alpha}\right)\\
        &=\dfrac{1}{\alpha-1}\ln \left(\sum_{x,y\in \mathcal{X}}\pi(x)M(x,y)\left(\dfrac{M(x,y)}{L(x,y)}\right)^{\alpha-1}\right)\\
        &=\dfrac{1}{\alpha-1}\ln \left(\sum_{x,y\in \mathcal{X}}\pi(x)M(x,y)\left(\dfrac{M(x,y)}{L(x,y)}\right)^{(\beta-1)\frac{\alpha-1}{\beta-1}}\right)\\
        &\leq \dfrac{1}{\alpha-1}\ln \left(\sum_{x,y\in \mathcal{X}}\pi(x)M(x,y)\left(\dfrac{M(x,y)}{L(x,y)}\right)^{(\beta-1)}\right)^{\frac{\alpha-1}{\beta-1}}\\
        &=\dfrac{1}{\beta-1}\ln \left(\sum_{x,y\in \mathcal{X}}\pi(x)M(x,y)\left(\dfrac{M(x,y)}{L(x,y)}\right)^{(\beta-1)}\right)\\
        &=R_{\beta}(M\|L).
    \end{align*}
\end{proof}

We naturally extend the notion of R\'enyi divergence of Markov chains to order $0, 1, +\infty$ in the next result:

\begin{proposition}[Extension of $R_{\alpha}$ to $\alpha=0,1$ and $+\infty$]\label{extensions to alpha=0,1,infty}
    
    For $M,L\in\mathcal{L}$ and $\pi \in \mathcal{P}(\mathcal{X})$, we define 
    \begin{align*}
        R_0(M\|L):=\lim_{\alpha \searrow 0}R_{\alpha}(M\|L), \quad
        R_1(M\|L):=\lim_{\alpha \nearrow 1}R_{\alpha}(M\|L), \quad
        R_{\infty}(M\|L):=\lim_{\alpha \nearrow \infty}R_{\alpha}(M\|L),
    \end{align*}
    then we have 
    \begin{align*}
        R_0(M\|L)&=-\ln \left(\sum_{x,y;~M(x,y)>0}\pi(x)L(x,y)\right),\\
        R_1(M\|L)&=\sum_{x,y;~L(x,y)>0}\pi(x)M(x,y)\ln \dfrac{M(x,y)}{L(x,y)},\\
        R_{\infty}(M\|L)&=\ln \left(\max\limits_{x,y;~L(x,y)>0}\dfrac{M(x,y)}{L(x,y)}\right),
    \end{align*}
    where $R_1(M\|L)$ is the Kullback-Leibler (KL) divergence from $L$ to $M$, and we write $D_{KL}(M\|L):=R_{1}(M\|L)$.
\end{proposition}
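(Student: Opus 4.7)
The plan is to reduce the statement to the analogous classical results for the R\'enyi divergence between probability measures and invoke them on the edge measures. By the defining equality \eqref{equality for Renyi divergence}, we have $R_\alpha(M\|L) = \widetilde R_\alpha(\widetilde M\|\widetilde L)$ for every $\alpha \in (0,1)\cup(1,+\infty)$, so it suffices to compute the three limits of $\widetilde R_\alpha(\widetilde M\|\widetilde L)$ as $\alpha \searrow 0$, $\alpha \nearrow 1$, and $\alpha \nearrow \infty$, where $\widetilde M, \widetilde L$ are probability measures on the finite set $\mathcal{X}\times \mathcal{X}$.

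For the classical R\'enyi divergence between probability measures $\mu,\nu$ on a finite set, it is a standard fact (see \parencite[Theorems 4, 5]{van2014renyi}) that $\widetilde R_0(\mu\|\nu) = -\ln \nu(\{x:\mu(x)>0\})$, that $\widetilde R_1(\mu\|\nu)$ equals the KL divergence $\sum_x \mu(x)\ln(\mu(x)/\nu(x))$, and that $\widetilde R_\infty(\mu\|\nu) = \ln \max_{x:\nu(x)>0}\mu(x)/\nu(x)$. Specializing to $\mu = \widetilde M$ and $\nu = \widetilde L$ on $\mathcal{X}\times \mathcal{X}$ and substituting $\widetilde M(x,y) = \pi(x)M(x,y)$, $\widetilde L(x,y)=\pi(x)L(x,y)$, the factors of $\pi(x)$ cancel inside the logarithms and inside the ratio for the $R_1$ and $R_\infty$ cases, while for $R_0$ the positivity of $\pi$ ensures that $\widetilde M(x,y)>0$ reduces to $M(x,y)>0$. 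This yields exactly the three expressions claimed.

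To make the argument self-contained, for $R_1$ I would apply L'H\^opital's rule directly to
\[
R_\alpha(M\|L) = \frac{1}{\alpha-1}\ln\!\left(\sum_{x,y}\pi(x)L(x,y)\Bigl(\tfrac{M(x,y)}{L(x,y)}\Bigr)^{\!\alpha}\right)
\]
as $\alpha \to 1$ (both numerator and denominator tend to $0$ since the sum equals $1$ at $\alpha=1$), using $\partial_\alpha t^\alpha = t^\alpha \ln t$ to obtain the KL expression. For $R_0$, I would observe that as $\alpha \searrow 0$ the summand $\pi(x)L(x,y)(M(x,y)/L(x,y))^\alpha$ converges pointwise to $\pi(x)L(x,y)\mathbf{1}_{\{M(x,y)>0\}}$, and then pass the limit through the continuous map $t\mapsto \frac{1}{\alpha-1}\ln t$. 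For $R_\infty$, I would rewrite the inner sum as an $\alpha$-norm of the likelihood ratio with respect to the edge measure $\widetilde M$, namely
\[
\sum_{x,y}\widetilde M(x,y)\Bigl(\tfrac{M(x,y)}{L(x,y)}\Bigr)^{\!\alpha-1},
\]
and use the standard fact that $\|\cdot\|_{L^{\alpha-1}(\widetilde M)}\to \|\cdot\|_{L^\infty(\widetilde M)}$ as $\alpha \to \infty$ on a finite space, together with the fact that $\widetilde M$-essential supremum and pointwise maximum coincide where $\widetilde M$ (equivalently $M$, and hence also the ratio is defined) is positive.

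The only genuine subtlety I anticipate is bookkeeping of the support conventions: in the $R_0$ case one must check that the indicator is over $M(x,y)>0$ rather than over $L(x,y)>0$, and in the $R_\infty$ case one must verify that pairs with $L(x,y)=0$ but $M(x,y)>0$ do not spoil the limit (they contribute only through the $\alpha\in(0,1)$ regime where $0^{\alpha-1}$ is handled by the convention after Definition \ref{def:f-divergence}, and by monotonicity in $\alpha$ from Proposition \ref{prop:monotonicity} such pairs would force $R_\infty = +\infty$, consistent with the $\max$ expression). Once these conventions are aligned, the three limits follow immediately and the proof is complete.
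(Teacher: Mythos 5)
Your proposal is correct, and its primary route differs from the paper's. The paper proves all three limits by direct computation on the defining sum: for $R_0$ it bounds $M^{\alpha}L^{1-\alpha}\leq \alpha M+(1-\alpha)L$ and applies dominated convergence; for $R_1$ it replaces $\ln x_\alpha$ by $x_\alpha-1$ and justifies the interchange of limit and sum via the monotonicity of $\beta\mapsto (t^{\beta}-1)/\beta$ together with monotone convergence; for $R_\infty$ it uses the $L^{p}\to L^{\infty}$ norm limit. You instead invoke $R_\alpha(M\|L)=\widetilde R_\alpha(\widetilde M\|\widetilde L)$ and cite the classical order-$0$, order-$1$ and order-$\infty$ results of van Erven--Harrem\"oes for the edge measures, observing that the factors $\pi(x)$ cancel in the ratios; this is shorter and is exactly the reduction the paper itself uses for non-negativity, the Pythagorean inequality and Chernoff information, so it is a natural and arguably more uniform treatment. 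Your self-contained fallback (L'H\^opital for $R_1$ in place of the paper's $\ln x_\alpha\approx x_\alpha-1$ argument, pointwise convergence for $R_0$, the $L^{p}$ limit for $R_\infty$) essentially reproduces the paper's computations and is unproblematic on a finite state space. One caveat you correctly flag but should state more carefully: when there exist pairs with $M(x,y)>0$ but $L(x,y)=0$ (and $\pi(x)>0$), the limits $R_1$ and $R_\infty$ are $+\infty$, whereas the displayed formulas, with sums and maxima restricted to $L(x,y)>0$, read as finite; both your write-up and the paper's proof implicitly assume absolute continuity of $\widetilde M$ with respect to $\widetilde L$ (or the convention that the KL divergence and the maximum are $+\infty$ in its absence), so this is a shared convention issue rather than a gap in your argument.
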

\begin{proof}
    When $0<\alpha<1$, it is easy to see that $M(x,y)^{\alpha}L(x,y)^{1-\alpha}\leq \alpha M(x,y)+(1-\alpha)L(x,y)$, then by the dominated convergence theorem, we have 
    \begin{align*}
        \lim_{\alpha\searrow 0}R_{\alpha}(M\|L)&=\lim_{\alpha\searrow 0}\dfrac{1}{\alpha-1}\ln \sum_{x,y\in \mathcal{X}}\pi(x)L(x,y)\left(\dfrac{M(x,y)}{L(x,y)}\right)^{\alpha}\\
        &=-\ln \left(\lim_{\alpha\searrow 0}\sum_{M,L>0}\pi(x)L(x,y)\left(\dfrac{M(x,y)}{L(x,y)}\right)^{\alpha}\right)\\
        &=-\ln \left(\sum_{M,L>0}\pi(x)L(x,y)\right)\\
        &=-\ln \left(\sum_{M>0}\pi(x)L(x,y)\right).
    \end{align*}

Denote $x_{\alpha}=\sum_{x,y\in \mathcal{X}}\pi(x)L(x,y)\left(\frac{M(x,y)}{L(x,y)}\right)^{\alpha}$. Similar to the argument above, by the dominated convergence theorem we have $\lim_{\alpha\nearrow 1} x_{\alpha}=1$. Therefore, we have 
\begin{align*}
    \lim_{\alpha\nearrow 1}R_{\alpha}(M\|L)&=\lim_{\alpha\nearrow 1}\dfrac{\ln x_{\alpha}}{\alpha-1}\\
    &=\lim_{\alpha\nearrow 1}\dfrac{x_{\alpha}-1}{\alpha-1}\\
    &=\lim_{\alpha\nearrow 1}\sum_{L>0}\pi(x)M(x,y)\left(\dfrac{\left(\frac{M(x,y)}{L(x,y)}\right)^{\alpha-1}-1}{\alpha-1}\right).
\end{align*}
Given $t>0$, the mapping $g:\beta\mapsto \frac{t^{\beta}-1}{\beta}$ is non-decreasing for $\beta\in (-1,0)$, since $g^{\prime}(t)=\frac{t^{\beta}}{\beta^2}\left(\beta\ln t-(1-t^{-\beta})\right)=\frac{t^{\beta}}{\beta^2}\left((t^{-\beta}-1)-\ln t^{-\beta}\right)\geq 0$. Hence by monotone convergence theorem, we have
\begin{align*}
    \lim_{\alpha\nearrow 1}R_{\alpha}(M\|L)&=\lim_{\alpha\nearrow 1}\sum_{L>0}\pi(x)M(x,y)\left(\dfrac{\left(\frac{M(x,y)}{L(x,y)}\right)^{\alpha-1}-1}{\alpha-1}\right)\\
    &=\sum_{M,L>0}\pi(x)M(x,y)\lim_{\alpha\nearrow 1}\left(\dfrac{\left(\frac{M(x,y)}{L(x,y)}\right)^{\alpha-1}-1}{\alpha-1}\right)\\
    &=\sum_{M,L>0}\pi(x)M(x,y)\ln \dfrac{M(x,y)}{L(x,y)}.
\end{align*}
As to $R_{\infty}(M\|L)$, we have 
\begin{align*}
    \lim_{\alpha\nearrow \infty}R_{\alpha}(M\|L)&=\lim_{\alpha\nearrow\infty}\ln \left(\sum_{L>0}\pi(x)M(x,y)\left(\dfrac{M(x,y)}{L(x,y)}\right)^{\alpha-1}\right)^{\frac{1}{\alpha-1}}\\
    &=\ln \left(\max\limits_{L>0}\dfrac{M(x,y)}{L(x,y)}\right).
\end{align*}
\end{proof}

Building upon Proposition \ref{extensions to alpha=0,1,infty}, in the next result we prove various inequalities between information divergences of Markov chains, and in particular we give an analogue of Pinsker's inequality in this context.

\begin{proposition}[Inequalities with other divergences]\label{inequalities with other divergences}
    
For $M,L\in\mathcal{L}$, $\pi \in \mathcal{P}(\mathcal{X})$ and recall the $f$-divergences as in Definition \ref{other divergences}, we have
\begin{enumerate}[label=(\roman*).]
    \item $\mathrm{Hel}^2(M,L)\leq R_{\frac{1}{2}}(M\|L)\leq R_1(M\|L)\leq R_2(M\|L)\leq \chi^2(M\|L)$.

    \item (Pinsker's inequality) 
    
    For $0<\alpha\leq 1$, we have
    \begin{equation*}
        \dfrac{\alpha}{2}\mathrm{TV}^2(M,L)\leq R_{\alpha}(M\|L).
    \end{equation*}
\end{enumerate}
\end{proposition}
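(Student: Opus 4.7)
The chain of inequalities in (i) splits naturally into four separate bounds, of which the two middle ones $R_{1/2}(M\|L) \leq R_1(M\|L) \leq R_2(M\|L)$ are immediate from the monotonicity of $\alpha \mapsto R_{\alpha}(M\|L)$ established in Proposition~\ref{prop:monotonicity}. The two endpoint inequalities are where the actual work lies, and for each I would rewrite the two sides in a common form so that the claim reduces to an elementary scalar inequality.

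For the leftmost inequality $\mathrm{Hel}^2(M,L) \leq R_{1/2}(M\|L)$, I would introduce the Bhattacharyya-type coefficient
\[
B := \sum_{x,y}\pi(x)\sqrt{M(x,y)L(x,y)} = \sum_{x,y}\sqrt{\widetilde M(x,y)\widetilde L(x,y)}.
\]
Expanding the square $(\sqrt{M(x,y)}-\sqrt{L(x,y)})^2$ and using the row-stochasticity $\sum_y M(x,y) = \sum_y L(x,y) = 1$ yields $\mathrm{Hel}^2(M,L) = 2 - 2B$, while \eqref{equality for Renyi divergence} together with the definition of $\widetilde R_{1/2}$ gives $R_{1/2}(M\|L) = -2\ln B$. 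The inequality then collapses to the elementary bound $-\ln t \geq 1-t$ for $t>0$. For the rightmost inequality $R_2(M\|L) \leq \chi^2(M\|L)$, a direct expansion of $(M(x,y)-L(x,y))^2/L(x,y)$ combined with row-stochasticity gives $\chi^2(M\|L) = D_2(M\|L)$; plugging this identity into the definition of R\'enyi divergence at $\alpha=2$ yields $R_2(M\|L) = \ln(1+\chi^2(M\|L))$, and $\ln(1+t) \leq t$ closes the argument.

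For (ii), I would reduce the Markov-chain Pinsker inequality to its classical counterpart for probability measures through the edge measure identification \eqref{equality for Renyi divergence}. Applying the classical R\'enyi–Pinsker inequality \parencite[Theorem 31]{van2014renyi} to the pair of probability measures $(\widetilde M,\widetilde L)$ on $\mathcal{X}\times\mathcal{X}$ yields
\[
R_{\alpha}(M\|L) = \widetilde R_{\alpha}(\widetilde M\|\widetilde L) \geq \tfrac{\alpha}{2}\,\mathrm{TV}(\widetilde M,\widetilde L)^2,
\]
and the bookkeeping identity
\[
\mathrm{TV}(\widetilde M,\widetilde L) = \tfrac{1}{2}\sum_{x,y}|\widetilde M(x,y) - \widetilde L(x,y)| = \tfrac{1}{2}\sum_{x,y}\pi(x)|M(x,y)-L(x,y)| = \mathrm{TV}(M,L)
\]
matches the two conventions exactly and completes the proof.

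The substantive content is concentrated in the two endpoint identities $\mathrm{Hel}^2(M,L) = 2(1-B)$ and $\chi^2(M\|L) = D_2(M\|L)$, both of which lean crucially on the row-stochasticity of $M$ and $L$; once these are in hand, all three pieces become one-line consequences of standard scalar inequalities ($-\ln t \geq 1-t$, $\ln(1+t)\leq t$) or of classical results lifted through the edge measure embedding. I would expect the only mild obstacle to be ensuring consistency of the total variation and $\ell^1$ conventions when invoking \parencite[Theorem 31]{van2014renyi}, which is why I would explicitly record the identity $\mathrm{TV}(\widetilde M,\widetilde L) = \mathrm{TV}(M,L)$ as part of the proof.
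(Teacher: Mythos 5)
Your proposal is correct and follows essentially the same route as the paper: the two endpoint inequalities via the identities $R_{1/2}(M\|L)=-2\ln\bigl(1-\mathrm{Hel}^2(M,L)/2\bigr)$ and $R_2(M\|L)=\ln\bigl(1+\chi^2(M\|L)\bigr)$ combined with $\ln t\leq t-1$, the middle ones via monotonicity in $\alpha$, and Pinsker via the edge-measure reduction to \parencite[Theorem 31]{van2014renyi} together with $\widetilde{\mathrm{TV}}(\widetilde M,\widetilde L)=\mathrm{TV}(M,L)$. No substantive differences.
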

\begin{proof}
    First we observe that 
    \begin{align*}
        R_{\frac{1}{2}}(M\|L)&=-2\ln \left(1-\dfrac{\mathrm{Hel}^2(M,L)}{2}\right),\\
        R_2(M\|L)&=\ln \left(1+\chi^2(M\|L)\right),
    \end{align*}
    then by $\ln t\leq t-1$ for $t>0$, we have $\mathrm{Hel}^2(M,L)\leq R_{\frac{1}{2}}(M\|L)$ and $R_2(M\|L)\leq \chi^2(M\|L)$.

    For probability measures $\mu,\nu$ on finite state space $\Omega$, we denote $\widetilde {\mathrm{TV}}(\mu,\nu):=\dfrac{1}{2}\sum_{x\in \Omega}|\mu(x)-\nu(x)|$, then we have 
    \begin{align*}
        R_{\alpha}(M\|L) =\widetilde R_{\alpha}\left(\widetilde M\|\widetilde L\right)
        &\geq \dfrac{\alpha}{2}\widetilde{\mathrm{TV}}^2\left(\widetilde M\|\widetilde L\right)\\
        &=\dfrac{\alpha}{8}\left(\sum_{x,y\in\mathcal{X}}\left|\pi(x)M(x,y)-\pi(x)L(x,y)\right|\right)^2\\
        &=\dfrac{\alpha}{2}\mathrm{TV}^2(M\|L),
    \end{align*}
    where the inequality follows from \parencite[Theorem 31]{van2014renyi}.
\end{proof}

We proceed to prove the skew symmetry of equivalence of R\'enyi divergence of Markov chains:
\begin{proposition}[Skew symmetry and equivalence]\label{prop:skewsym}

Given $M,L\in \mathcal{L}$, $\pi \in \mathcal{P}(\mathcal{X})$ and $0<\alpha<\beta<1$, we have
\begin{enumerate}[label=(\roman*).]
    \item $R_{\alpha}(M\|L)=\dfrac{\alpha}{1-\alpha}R_{1-\alpha}(L\|M)$.

    \item $\dfrac{\alpha}{\beta}\dfrac{1-\beta}{1-\alpha}R_{\beta}(M\|L)\leq R_{\alpha}(M\|L)\leq R_{\beta}(M\|L)$.
\end{enumerate}
\end{proposition}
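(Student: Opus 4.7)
The plan is to reduce both parts to the corresponding statements for the classical R\'enyi divergence of probability measures via the edge-measure identity $R_{\alpha}(M\|L) = \widetilde R_{\alpha}(\widetilde M\|\widetilde L)$ from equation \eqref{equality for Renyi divergence}, and then combine with the monotonicity result of Proposition \ref{prop:monotonicity}.

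For part (i), I would first establish the analogous skew symmetry for probability measures by direct computation. Expanding
\[
\widetilde R_{\alpha}(\mu\|\nu) = \frac{1}{\alpha-1}\ln \sum_x \mu(x)^{\alpha}\nu(x)^{1-\alpha},
\]
and comparing with
\[
\widetilde R_{1-\alpha}(\nu\|\mu) = \frac{1}{-\alpha}\ln \sum_x \nu(x)^{1-\alpha}\mu(x)^{\alpha},
\]
the sums under the logarithms are identical, so $\widetilde R_{\alpha}(\mu\|\nu) = \frac{\alpha}{1-\alpha}\widetilde R_{1-\alpha}(\nu\|\mu)$ (this is the classical identity in \parencite[Proposition 2]{van2014renyi}). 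Applying it to the edge measures $\widetilde M, \widetilde L \in \mathcal{P}(\mathcal{X}\times\mathcal{X})$ and invoking \eqref{equality for Renyi divergence} twice yields
\[
R_{\alpha}(M\|L) = \widetilde R_{\alpha}(\widetilde M\|\widetilde L) = \frac{\alpha}{1-\alpha}\widetilde R_{1-\alpha}(\widetilde L\|\widetilde M) = \frac{\alpha}{1-\alpha} R_{1-\alpha}(L\|M).
\]

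For part (ii), the right-hand inequality $R_{\alpha}(M\|L) \leq R_{\beta}(M\|L)$ is an immediate consequence of the $\alpha$-monotonicity proved in Proposition \ref{prop:monotonicity}. For the left-hand inequality, I would apply part (i) twice to rewrite
\[
\frac{1-\alpha}{\alpha} R_{\alpha}(M\|L) = R_{1-\alpha}(L\|M), \qquad \frac{1-\beta}{\beta} R_{\beta}(M\|L) = R_{1-\beta}(L\|M).
\]
Since $0<\alpha<\beta<1$ forces $0<1-\beta<1-\alpha<1$, applying Proposition \ref{prop:monotonicity} to the pair $(L,M)$ gives $R_{1-\beta}(L\|M) \leq R_{1-\alpha}(L\|M)$. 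Substituting the two identities above and multiplying through by $\frac{\alpha}{1-\alpha}>0$ yields
\[
\frac{\alpha}{\beta}\cdot\frac{1-\beta}{1-\alpha} R_{\beta}(M\|L) \leq R_{\alpha}(M\|L),
\]
as desired.

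The argument is essentially mechanical once part (i) is in hand, and there is no real obstacle: the only small care needed is keeping track of signs of the denominators $\alpha-1$ versus $-\alpha$ in the skew symmetry computation, and noting that the hypothesis $0<\alpha<\beta<1$ keeps all the factors $\alpha, 1-\alpha, \beta, 1-\beta$ strictly positive so no inequality flips.
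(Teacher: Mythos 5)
Your proposal is correct and follows essentially the same route as the paper: part (i) is the same direct observation that $(1-\alpha)R_{\alpha}(M\|L)$ and $\alpha R_{1-\alpha}(L\|M)$ are both $-\ln\sum_{x,y}\pi(x)M^{\alpha}(x,y)L^{1-\alpha}(x,y)$ (whether written for the edge measures or for the chains directly is immaterial), and part (ii) is obtained exactly as in the paper by combining the skew-symmetry identities with the monotonicity of Proposition \ref{prop:monotonicity} applied to the reversed pair $(L,M)$ at orders $1-\beta<1-\alpha$. Your write-up is in fact slightly more explicit than the paper's about which monotonicity step is being invoked for the left-hand inequality.
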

\begin{proof}
    Since 
    \begin{align*}
    (1-\alpha)R_{\alpha}(M\|L) &=-\ln \left(\sum_{M,L>0}\pi(x)M^{\alpha}(x,y)L^{1-\alpha}(x,y)\right),\\
    \alpha R_{1-\alpha}(L\|M) &=-\ln \left(\sum_{M,L>0}\pi(x)L^{1-\alpha}(x,y)M^{\alpha}(x,y)\right),
    \end{align*}
 the first item is verified. The second item follows from $\frac{1-\beta}{\beta}R_{\beta}(M\|L)=R_{1-\beta}(L\|M)$ and $\frac{1-\alpha}{\alpha}R_{\alpha}(M\|L)=R_{1-\alpha}(L\|M)$.
\end{proof}

We prove the Pythagorean inequality and weak triangle inequality of the R\'enyi divergence of Markov chains in the following Proposition:

\begin{proposition}[Pythagorean inequality and weak triangle inequality]\label{prop:pythagorean inequality and weak triangle inequaliy}
    
    Suppose we are given $L\in\mathcal{L}$ and recall that an $\alpha$-projection $L^{(R_\alpha)}\in\mathop{\arg\min}_{M\in\mathcal{L}(\pi)}R_{\alpha}(M\|L)$ is introduced in Definition \ref{alpha-projection} and exists. For any $M\in\mathcal{L}(\pi)$ and $\alpha>0$, we have the following Pythagorean inequality:
    \begin{equation}\label{pythagorean inequality}
        R_{\alpha}(M\|L)\geq R_{\alpha}(M\|L^{(R_\alpha)})+R_{\alpha}(L^{(R_\alpha)}\|L).
    \end{equation}
    In addition, for $\alpha>1$ and any $U,V,W\in \mathcal{L}$, we have the weak triangle inequality
    \begin{equation}\label{weak triangle inequality}
        R_{\alpha}(U\|V)\leq \dfrac{2\alpha-1}{2\alpha-2}R_{2\alpha}(U\|W)+R_{2\alpha-1}(W\|V).
    \end{equation}
\end{proposition}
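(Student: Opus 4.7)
The overall strategy is to transfer both inequalities to the realm of probability measures on $\mathcal{X}\times\mathcal{X}$ via the edge measure identity $R_{\alpha}(M\|L)=\widetilde R_{\alpha}(\widetilde M\|\widetilde L)$ from \eqref{equality for Renyi divergence}, and then invoke the corresponding known results for R\'enyi divergence of probability measures from \parencite{van2014renyi}.

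For the Pythagorean inequality \eqref{pythagorean inequality}, the first step is to identify the image $\widetilde{\mathcal{L}(\pi)} := \{\widetilde M : M \in \mathcal{L}(\pi)\}$ of $\pi$-reversible chains under the edge measure map as a convex subset of $\mathcal{P}(\mathcal{X}\times\mathcal{X})$. Convexity is immediate since $\pi$-reversibility $\pi(x)M(x,y)=\pi(y)M(y,x)$ is a linear constraint on $M$, equivalently $\widetilde M(x,y) = \widetilde M(y,x)$, and the marginal $\sum_y \widetilde M(x,y) = \pi(x)$ is also linear in $\widetilde M$. Next I observe that by definition of the $\alpha$-projection (Definition \ref{alpha-projection}) and the identity \eqref{equality for Renyi divergence}, $\widetilde{L^{(R_\alpha)}}$ is a R\'enyi projection of $\widetilde L$ onto the convex set $\widetilde{\mathcal{L}(\pi)}$, i.e. $\widetilde{L^{(R_\alpha)}} \in \arg\min_{\mu \in \widetilde{\mathcal{L}(\pi)}} \widetilde R_{\alpha}(\mu\|\widetilde L)$. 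The classical Pythagorean inequality for R\'enyi divergence onto a convex set \parencite[Theorem 14]{van2014renyi} then yields
\begin{equation*}
    \widetilde R_{\alpha}(\widetilde M\|\widetilde L) \geq \widetilde R_{\alpha}(\widetilde M\|\widetilde{L^{(R_\alpha)}}) + \widetilde R_{\alpha}(\widetilde{L^{(R_\alpha)}}\|\widetilde L)
\end{equation*}
for any $M\in\mathcal{L}(\pi)$, and translating each term back through \eqref{equality for Renyi divergence} gives \eqref{pythagorean inequality}.

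For the weak triangle inequality \eqref{weak triangle inequality}, the argument is even more direct: the analogous inequality for probability measures \parencite[Theorem 31 / weak triangle inequality]{van2014renyi} asserts that for $\alpha>1$ and any probability measures $\mu,\nu,\rho$ on a finite set,
\begin{equation*}
    \widetilde R_{\alpha}(\mu\|\nu) \leq \dfrac{2\alpha-1}{2\alpha-2}\widetilde R_{2\alpha}(\mu\|\rho) + \widetilde R_{2\alpha-1}(\rho\|\nu).
\end{equation*}
Specializing to $\mu = \widetilde U$, $\nu = \widetilde V$ and $\rho = \widetilde W$, and applying the identity \eqref{equality for Renyi divergence} term by term, delivers \eqref{weak triangle inequality}.

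The main subtlety I expect is the first step of the Pythagorean argument: correctly recognizing that the edge-measure image of $\mathcal{L}(\pi)$ is convex and that an $\alpha$-projection in the sense of Definition \ref{alpha-projection} pulls back to a R\'enyi projection in the probability-measure sense on this set, so that \parencite[Theorem 14]{van2014renyi} is applicable. Once that correspondence is set up cleanly, both inequalities follow by pure translation of the known probability-measure results, with no further estimates needed.
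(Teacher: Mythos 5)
Your overall strategy---pushing everything through the edge-measure identity $R_{\alpha}(M\|L)=\widetilde R_{\alpha}(\widetilde M\|\widetilde L)$ and quoting the probability-measure results of \parencite{van2014renyi}---is the same as the paper's, and your treatment of the weak triangle inequality is sound in substance: the inequality for probability measures does hold and transfers term by term. (The paper instead proves it directly in two lines, writing $\sum_{x,y}\pi(x)U^{\alpha}V^{1-\alpha}=\sum_{x,y}\pi(x)W\left(\frac{U}{W}\right)^{\alpha}\left(\frac{W}{V}\right)^{\alpha-1}$ and applying Cauchy--Schwarz; also note that Theorem 31 of \parencite{van2014renyi} is the Pinsker-type inequality, not a triangle inequality, so your citation is off even though the underlying fact is standard.)

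The Pythagorean part, however, has a genuine gap at exactly the step you flagged as the main subtlety. Theorem 14 of \parencite{van2014renyi} is a Pythagorean inequality for the projection onto an \emph{$\alpha$-convex} set, i.e.\ a set closed under the normalized power-mean mixtures $P_{\lambda}\propto\left((1-\lambda)P_0^{\alpha}+\lambda P_1^{\alpha}\right)^{1/\alpha}$; it is not stated for ordinarily convex sets. What you verify---that the edge-measure image of $\mathcal{L}(\pi)$ is convex because reversibility and the marginal constraint are linear---is correct but establishes the wrong hypothesis: ordinary convexity coincides with $\alpha$-convexity only at $\alpha=1$, and for general $\alpha>0$ it does not put you in a position to invoke Theorem 14. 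The paper's proof spends essentially all of its effort on precisely this point: it introduces the symmetric edge-measure family $\mathcal{E}(\pi)$ and shows it is $\alpha$-convex by checking that the entrywise power mean of two symmetric matrices is again symmetric, so that the $(\lambda,\alpha)$-mixture stays in the family, and only then applies Theorem 14. To repair your argument you must replace the convexity check by this $\alpha$-convexity check; the rest of your translation back through the bijection $M\mapsto\widetilde M$ then goes through as you describe.
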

\begin{proof}
    For $\pi \in \mathcal{P}(\mathcal{X})$, we denote the symmetric edge measure family on $\mathcal{X}\times\mathcal{X}$ as 
    \begin{equation*}
        \mathcal{E}(\pi):=\left\{\pi(x)g(x,y): g(x,y)\geq 0, \sum_{y}g(x,y)=1, \pi(x)g(x,y)=\pi(y)g(y,x)\right\}.
    \end{equation*}
    Then we will prove $\mathcal{E}(\pi)$ is $\alpha$-convex, i.e.
    for $P_0,P_1\in \mathcal{E}(\pi)$, $\alpha>0$ and $\lambda\in (0,1)$, there exists 
    \begin{equation*}
        P_{\lambda}(x,y):=\dfrac{\left((1-\lambda) P_0^{\alpha}(x,y)+\lambda P_1^{\alpha}(x,y)\right)^{\frac{1}{\alpha}}}{\sum_{x,y}\left((1-\lambda) P_0^{\alpha}(x,y)+\lambda P_1^{\alpha}(x,y)\right)^{\frac{1}{\alpha}}}\in \mathcal{E}(\pi).
    \end{equation*}
    Actually, we only need to verify that $P_{\lambda}(x,y)$ is symmetric, after which $G_{\lambda}(x,y):=P_{\lambda}(x,y)/\pi(x)$ is reversible and hence satisfies the condition for $\mathcal{E}(\pi)$. Actually, we have
    \begin{align*}
        P_{\lambda}(x,y)&=\dfrac{\left((1-\lambda) P_0^{\alpha}(x,y)+\lambda P_1^{\alpha}(x,y)\right)^{\frac{1}{\alpha}}}{\sum_{x,y}\left((1-\lambda) P_0^{\alpha}(x,y)+\lambda P_1^{\alpha}(x,y)\right)^{\frac{1}{\alpha}}}\\
        &=\dfrac{\left((1-\lambda) P_0^{\alpha}(y,x)+\lambda P_1^{\alpha}(y,x)\right)^{\frac{1}{\alpha}}}{\sum_{x,y}\left((1-\lambda) P_0^{\alpha}(y,x)+\lambda P_1^{\alpha}(y,x)\right)^{\frac{1}{\alpha}}}\\
        &=P_{\lambda}(y,x),
    \end{align*}
    where the second equality follows from the symmetry of $P_0,P_1$. Moreover, since $\pi(x)>0$, we can define a bijection $\phi\in \left(\mathcal{L}(\pi),\mathcal{E}(\pi)\right): M(x,y)\mapsto \pi(x)M(x,y)$, and therefore we have
    \begin{align*}
        &R_{\alpha}(M\|L)\geq R_{\alpha}\left(L^{(R_\alpha)}\|L\right), \quad \forall M\in \mathcal{L}(\pi)\\
        \iff & \widetilde R_{\alpha}\left(P\|\widetilde L\right)\geq \widetilde R_{\alpha}\left(\widetilde{L^{(R_\alpha)}}\|L\right), \quad \forall P\in \mathcal{E}(\pi),
    \end{align*}
    which implies $\widetilde{L^{(R_\alpha)}}\in \mathop{\arg\min}_{P\in \mathcal{E}(\pi)}\widetilde R_{\alpha}\left(P\|\widetilde L\right)$. Thus by \parencite[Theorem 14]{van2014renyi}, for $M\in \mathcal{L}(\pi)$ and $\widetilde M(x,y)=\pi(x)M(x,y)\in \mathcal{E}(\pi)$, we have
    \begin{align*}
        R_{\alpha}(M\|L)&=\widetilde R_{\alpha}\left(\widetilde M\|\widetilde L\right)\geq R_{\alpha}\left(\widetilde M\|\widetilde{L^{(R_\alpha)}}\right)+R_{\alpha}\left(\widetilde{L^{(R_\alpha)}}\|\widetilde L\right)\\
        &=R_{\alpha}(M\|L^{(R_\alpha)})+R_{\alpha}(L^{(R_\alpha)}\|L),
    \end{align*}
    which is \eqref{pythagorean inequality}.
    
    To prove the weak triangle inequality, by Cauchy-Schwarz inequality we have 
    \begin{align*}
        &\sum_{x,y}\pi(x)U^{\alpha}(x,y)V^{1-\alpha}(x,y)\\
       =&\sum_{x,y}\pi(x) W(x,y)\left(\dfrac{U(x,y)}{W(x,y)}\right)^{\alpha}\left(\dfrac{W(x,y)}{V(x,y)}\right)^{\alpha-1}\\
       \leq & \left(\sum_{x,y}\pi(x) W(x,y)\left(\dfrac{U(x,y)}{W(x,y)}\right)^{2\alpha}\right)^{1/2}
       \left(\sum_{x,y}\pi(x) W(x,y)\left(\dfrac{W(x,y)}{V(x,y)}\right)^{2\alpha-2}\right)^{1/2},
    \end{align*}
    taking logarithm we have
    \begin{equation*}
        (\alpha-1)R_{\alpha}(U\|V)\leq \dfrac{2\alpha-1}{2}R_{2\alpha}(U\|W)+\dfrac{2\alpha-2}{2}R_{2\alpha-1}(W\|V),
    \end{equation*}
    which is \eqref{weak triangle inequality}.
\end{proof}

\begin{remark}
    Using Proposition \ref{prop:pythagorean inequality and weak triangle inequaliy}, the uniqueness of $\alpha$-projection can be easily verified. Suppose $L_1^{(R_\alpha)}$ and $L_2^{(R_\alpha)}$ are two $\alpha$-projections, then by \eqref{pythagorean inequality} we have
    \begin{equation*}
        R_{\alpha}\left(L_2^{(R_\alpha)}\|L\right)\geq R_{\alpha}\left(L_2^{(R\alpha)}\|L_1^{(R_\alpha)}\right)+R_{\alpha}\left(L_1^{(R_\alpha)}\|L\right),
    \end{equation*}
    which implies $R_{\alpha}\left(L_2^{(R_\alpha)}\|L_1^{(R_\alpha)}\right)=0$ and therefore $L_1^{(R_\alpha)}=L_2^{(R_\alpha)}$ by Proposition \ref{prop:non-negative}.
\end{remark}

The next result introduces the notion of Chernoff information of Markov chains:

\begin{proposition}[Chernoff information]\label{Chernoff information: informal}
    
    For $\alpha > 0$, $M,L\in \mathcal{L}$ and $\pi \in \mathcal{P}(\mathcal{X})$, we have
    \begin{equation}\label{eq:tradeoff of KL}
        (1-\alpha)R_{\alpha}(M\|L)=\inf_{P\in \mathcal{L}}\left\{\alpha D_{KL}(P\|M)+(1-\alpha)D_{KL}(P\|L)\right\}.
    \end{equation}
    Moreover, if $D_{KL}(M\|L)<\infty$, we have the saddle property
    \begin{align}
        &\sup_{\alpha>0}\inf_{P\in \mathcal{L}}\left\{\alpha D_{KL}(P\|M)+(1-\alpha)D_{KL}(P\|L)\right\}\label{chernoff information}\\
        =&\inf_{P\in \mathcal{L}}\sup_{\alpha>0}\left\{\alpha D_{KL}(P\|M)+(1-\alpha)D_{KL}(P\|L)\right\},\nonumber
    \end{align}
    where we call \eqref{chernoff information} as the Chernoff information with respect to $M,L$.
\end{proposition}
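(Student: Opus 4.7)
The plan is to deduce \eqref{eq:tradeoff of KL} from the classical Chernoff identity for probability measures \parencite{van2014renyi} via the edge-measure transcription of Section \ref{sec:basicdef}. From \eqref{equality for Renyi divergence} we already have $R_{\alpha}(M\|L)=\widetilde R_{\alpha}(\widetilde M\|\widetilde L)$, and cancelling the common factor $\pi(x)$ inside the logarithm yields the parallel identity $D_{KL}(P\|M)=\widetilde D_{KL}(\widetilde P\|\widetilde M)$ for any $P\in\mathcal{L}$. Consequently, the right-hand side of \eqref{eq:tradeoff of KL} rewrites as a Chernoff-type optimization $\inf_{P\in\mathcal{L}}\bigl\{\alpha\widetilde D_{KL}(\widetilde P\|\widetilde M)+(1-\alpha)\widetilde D_{KL}(\widetilde P\|\widetilde L)\bigr\}$ over joint distributions on $\mathcal{X}\times\mathcal{X}$ whose first marginal is $\pi$.

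For the first identity, the natural route is to apply the classical Chernoff identity at the level of joint distributions: over all $\rho\in\mathcal{P}(\mathcal{X}\times\mathcal{X})$, one has $\inf_{\rho}\bigl\{\alpha\widetilde D_{KL}(\rho\|\widetilde M)+(1-\alpha)\widetilde D_{KL}(\rho\|\widetilde L)\bigr\}=(1-\alpha)\widetilde R_{\alpha}(\widetilde M\|\widetilde L)$, attained at $\rho^{\ast}\propto\widetilde M^{\alpha}\widetilde L^{1-\alpha}$. Equivalently, for each fixed $x$ the inner row objective is minimized over $P(x,\cdot)\in\mathcal{P}(\mathcal{X})$ by the Chernoff-tilted distribution $P^{\ast}(x,\cdot)\propto M(x,\cdot)^{\alpha}L(x,\cdot)^{1-\alpha}$, which does belong to $\mathcal{L}$. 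Assembling the per-row minima against $\pi$ and reconciling the result with $(1-\alpha)R_{\alpha}(M\|L)$ via \eqref{equality} and \eqref{equality for Renyi divergence} then yields the claim.

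For the saddle property \eqref{chernoff information}, once \eqref{eq:tradeoff of KL} is in hand, what remains is a standard minimax interchange. The function $(\alpha,P)\mapsto\alpha D_{KL}(P\|M)+(1-\alpha)D_{KL}(P\|L)$ is affine in $\alpha\in(0,+\infty)$, hence concave and continuous, and jointly convex in $P\in\mathcal{L}$ by Proposition \ref{prop:convexity}. The hypothesis $D_{KL}(M\|L)<\infty$ allows one to restrict $\alpha$ to a compact subinterval on which the objective is uniformly bounded (large $\alpha$ being controlled by the affine estimate $\alpha D_{KL}(M\|M)+(1-\alpha)D_{KL}(M\|L)=(1-\alpha)D_{KL}(M\|L)$ at $P=M$), and $\mathcal{L}$ itself is a compact convex subset of $\mathbb{R}^{|\mathcal{X}|\times|\mathcal{X}|}$. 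Sion's minimax theorem then applies and exchanges $\sup_{\alpha}$ with $\inf_{P}$.

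The step I anticipate as most delicate is precisely the reconciliation in the middle of the second paragraph: a naive row-by-row computation produces $\sum_{x}\pi(x)\widetilde R_{\alpha}(M(x,\cdot)\|L(x,\cdot))$, whereas $R_{\alpha}(M\|L)$ places both edge measures inside a single outer logarithm, and the two expressions are related only by a Jensen inequality in general. Either verifying that the unrestricted joint-distribution optimizer $\rho^{\ast}$ has first marginal $\pi$ (so that the infimum over $\{\widetilde P:P\in\mathcal{L}\}$ coincides with the unrestricted one), or otherwise exhibiting a transition matrix $P^{\ast}\in\mathcal{L}$ that directly attains $(1-\alpha)R_{\alpha}(M\|L)$, is the technical heart of the argument and the point I expect the author's proof to address with care.
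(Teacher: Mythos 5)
Your route is the same as the paper's: pass to edge measures, invoke the classical Chernoff identity for probability measures, and handle the saddle property by a minimax argument. The difficulty you flag in your final paragraph is, however, a genuine gap, and the ``reconciliation'' you defer cannot be carried out. Because the rows of $P$ may be optimized independently, the right-hand side of \eqref{eq:tradeoff of KL} decouples as
\begin{align*}
\inf_{P\in \mathcal{L}}\left\{\alpha D_{KL}(P\|M)+(1-\alpha)D_{KL}(P\|L)\right\}
=\sum_{x}\pi(x)\,(1-\alpha)\widetilde R_{\alpha}\bigl(M(x,\cdot)\|L(x,\cdot)\bigr)
=-\sum_{x}\pi(x)\ln c_x,
\end{align*}
where $c_x:=\sum_{y}M(x,y)^{\alpha}L(x,y)^{1-\alpha}$ and each row infimum is attained at $P^{*}(x,\cdot)\propto M(x,\cdot)^{\alpha}L(x,\cdot)^{1-\alpha}$, exactly as you describe. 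But $(1-\alpha)R_{\alpha}(M\|L)=-\ln\sum_{x}\pi(x)c_x$, and by Jensen these two quantities differ unless $c_x$ is constant on the support of $\pi$. Equivalently, the unrestricted joint optimizer $\rho^{*}\propto \pi(x)M(x,y)^{\alpha}L(x,y)^{1-\alpha}$ has first marginal proportional to $\pi(x)c_x$, which is not $\pi$ in general, so the infimum over $\{\widetilde P: P\in\mathcal{L}\}$ is strictly larger than the unrestricted one. A two-state example makes this concrete: $\pi=(1/2,1/2)$, both rows of $M$ uniform, $L(1,\cdot)=(1/2,1/2)$, $L(2,\cdot)=(9/10,1/10)$, $\alpha=1/2$ gives $-\tfrac12\ln c_1-\tfrac12\ln c_2\approx 0.0558$ versus $(1-\alpha)R_{\alpha}(M\|L)\approx 0.0542$.

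You should also know that the paper's own proof does not close this gap either: it transports the classical identity by restricting the infimum from all of $\mathcal{P}(\mathcal{X}\times\mathcal{X})$ to the edge-measure family $\mathcal{P}_e=\{\pi(x)g(x,y)\}$ via the bijection with $\mathcal{L}$, without verifying that the optimizer lies in $\mathcal{P}_e$; as the computation above shows, it generally does not, so \eqref{eq:tradeoff of KL} holds in general only with ``$\geq$'', or as an exact identity for the $\pi$-weighted average $\sum_{x}\pi(x)\widetilde R_{\alpha}(M(x,\cdot)\|L(x,\cdot))$ rather than for $R_{\alpha}(M\|L)$. Your Sion-type argument for the saddle property is fine in outline and parallels the paper's appeal to the analogous classical result, but it inherits whatever resolution is adopted for the first identity.
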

\begin{proof}
    Following the proof in Proposition \ref{prop:pythagorean inequality and weak triangle inequaliy}, we denote the edge measure family 
    \begin{equation*}
    \mathcal{P}_e:=\left\{\pi(x)g(x,y):g(x,y)\geq 0, \sum_{y}g(x,y)=1\right\},
    \end{equation*}
    then we have $\mathcal{E}(\pi)\subset \mathcal{P}_e$, and there exists a bijection from $\mathcal{L}$ to $\mathcal{P}_e$. Therefore, by \parencite[Theorem 30]{van2014renyi} we have 
    \begin{align*}
        & \inf_{P\in\mathcal{L}}\left\{\alpha D_{KL}(P\|M)+(1-\alpha)D_{KL}(P\|L)\right\}\\
        =&\inf_{P\in\mathcal{L}}\left\{\alpha \widetilde R_1\left(\widetilde P\|\widetilde M\right)+(1-\alpha)\widetilde R_1\left(\widetilde P\|\widetilde L\right)\right\}\\
        =&\inf_{Q\in\mathcal{P}_e}\left\{\alpha \widetilde R_1\left(Q\|\widetilde M\right)+(1-\alpha)\widetilde R_1\left(Q\|\widetilde L\right)\right\}\\
        =&(1-\alpha)\widetilde R_{\alpha}\left(\widetilde M\|\widetilde L\right)\\
        =&(1-\alpha)R_{\alpha}(M\|L).
    \end{align*}
    For the saddle property, the proof is similar by using \parencite[Theorem 32]{van2014renyi}.
\end{proof}

In Section \ref{sec:hypothesis testing}, we will illustrate the role of Chernoff information in estimating the Bayesian error of binary hypothesis testing between two transition matrices.

The final result of this section discusses some properties of the ergodicity or contraction coefficients as introduced in Definition \ref{def:ergodicity coefficients}. In particular, these results offer a bridge and connect $\eta^f, \eta^{(R_{\alpha})}$ with their counterparts such as $\widetilde \eta^f, \widetilde \eta^{(R_{\alpha})}$, thus transferring properties from one to another. We shall also see in the proof that many of the properties of these coefficients are inherited from analogous properties of their respective information divergences such as $D_f$ or $R_{\alpha}$.

\begin{proposition}
    [Ergodicity coefficients]\label{prop:ergocoef}

    Given $P\in\mathcal{L}$, $\pi\in\mathcal{P}(\mathcal{X})$, $\alpha\in (0,1)\cup (1,+\infty)$ and $f$ being a convex function on $\mathbb{R}$ with $f(1) = 0$, we have
    \begin{enumerate}[label=(\roman*).]
        \item(Non-negativity and normalization)\label{it:ergcoef1}
        
        $0\leq \eta^f(P)\leq \eta^{(\mathrm{TV})}(P)\leq 1$, $0\leq \eta^{(R_{\alpha})}(P)\leq 1$ and $0\leq \widetilde \eta^{(R_{\alpha})}(P)\leq 1$.

        \item(Ergodicity coefficients on the space $\mathcal{P}\label{it:ergcoef2}
        (\mathcal{X})$ are the same as that on the space $\mathcal{L}$)
        
        $\eta^f(P)=\widetilde \eta^f(P)$ and $\eta^{(R_\alpha)}(P) \geq \widetilde \eta^{(R_\alpha)}(P)$.

        \item(Bounds)\label{it:ergcoef3}
        Suppose that the sequence $(U^*_n,V^*_n)$ approaches $\eta^{(R_{\alpha})}(P)$, and the sequence $(\mu^*_n,\nu^*_n)$ approaches $\widetilde \eta^{R_{\alpha}}(P)$, that is,
        \begin{align*}
            \eta^{(R_{\alpha})}(P) &= \lim_{n \to \infty} \dfrac{R_{\alpha}(U^*_nP\|V^*_nP)}{R_{\alpha}(U^*_n\|V^*_n)}, \quad
            \widetilde \eta^{(R_{\alpha})}(P) = \lim_{n \to \infty} \dfrac{\widetilde R_{\alpha}(\mu^*_n P\|\nu^*_n P)}{\widetilde R_{\alpha}(\mu^*_n\|\nu^*_n)},
        \end{align*}
        and by passing to a subsequence if necessary, we assume that $(U^*_n,V^*_n) \rightarrow (U^*,V^*)$ and $(\mu^*_n,\nu^*_n) \rightarrow (\mu^*,\nu^*)$ pointwise. We then have, for $\alpha > 1$,
        \begin{align*}
            \eta^{(D_\alpha)}(P) &\leq \eta^{(R_\alpha)}(P) \leq \begin{cases}
            \dfrac{\ln(1+(\alpha-1)\eta^{(D_\alpha)}(P)D_\alpha(U^*\|V^*))}{\ln(1+(\alpha-1)D_\alpha(U^*\|V^*))} \leq 1, \quad &\textrm{if }\, U^* \neq V^*,\\
            \eta^{(D_\alpha)}(P) \leq 1, \quad &\textrm{if }\, U^* = V^*,
            \end{cases}    \\
            \widetilde \eta^{(D_\alpha)}(P) &\leq \widetilde \eta^{(R_\alpha)}(P) \leq 
            \begin{cases}
            \dfrac{\ln(1+(\alpha-1)\widetilde \eta^{(D_\alpha)}(P)\widetilde D_\alpha(\mu^*\|\nu^*))}{\ln(1+(\alpha-1)\widetilde D_\alpha(\mu^*\|\nu^*))} \leq 1, \quad &\textrm{if }\, \mu^* \neq \nu^*,\\
            \widetilde \eta^{(D_\alpha)}(P) \leq 1, \quad &\textrm{if }\, \mu^* = \nu^*,
            \end{cases}    
        \end{align*}
        while for $\beta \in (0,1)$, we have
        \begin{align*}
            \eta^{(R_\beta)}(P) &\leq \eta^{(D_\beta)}(P), \quad \widetilde \eta^{(R_\beta)}(P) \leq \widetilde \eta^{(D_\beta)}(P). 
        \end{align*}
        
        \item(Convexity)\label{it:ergcoef4}
        
        The mappings $\mathcal{L} \ni P\mapsto \eta^f(P)$ and $\mathcal{L} \ni P\mapsto \eta^{(R_\alpha)}(P)$ are convex if $\alpha \in (0,1)$.

        \item(Submultiplicativity)\label{it:submulti}

        For $P,Q \in \mathcal{L}$, we have
        $$\eta^f(PQ) \leq \eta^f(P) \eta^f(Q).$$

        \item(Symmetry)\label{it:ergcoef5}
        
        If $\alpha\in (0,1)$, then $\eta^{(R_\alpha)}(P)=\eta^{(R_{1-\alpha})}(P)$ and $\eta^{(D_\alpha)}(P)=\eta^{(D_{1-\alpha})}(P)$.

        \item(Equivalence)\label{it:ergcoef6}
        
        For $0 < \alpha < \beta < 1$, we have
        $$\dfrac{\alpha}{\beta}\dfrac{1-\beta}{1-\alpha} \eta^{(R_\beta)}(P)\leq \eta^{(R_\alpha)}(P)\leq \dfrac{\beta}{\alpha}\dfrac{1-\alpha}{1-\beta} \eta^{(R_\beta)}(P).$$
        
        \item(Scrambling)\label{it:ergcoef7}
        
        $P$ is scrambling if and only if any one (and hence all) of the following ergodicity coefficients is less than $1$:
        $$\eta^f(P),\eta^{(D_{\alpha})}(P), \eta^{(R_{\beta})}(P), \widetilde \eta^{(D_{\alpha})}(P), \widetilde \eta^{(R_{\beta})}(P) < 1,$$
        where $\beta > 1$.

        \item(Unit rank and independence)\label{it:ergcoef8}
        
        Let $f$ be strictly convex at $1$. Then $P$ has unit rank if and only if any one (and hence all) of the following ergodicity coefficients equals to $0$:
        $$\eta^f(P),\eta^{(D_{\alpha})}(P), \eta^{(R_{\beta})}(P), \widetilde \eta^{(D_{\alpha})}(P), \widetilde \eta^{(R_{\beta})}(P) = 0,$$
        where $\beta > 1$.
    \end{enumerate}
\end{proposition}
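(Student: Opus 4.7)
The unifying strategy is to exploit the two structural reductions already established in Section \ref{sec:basicdef}: the $\pi$-weighted decomposition $D_f(M\|L)=\sum_x\pi(x)\widetilde D_f(M(x,\cdot)\|L(x,\cdot))$ from \eqref{eq:weightedDfML}, and the edge-measure identity $R_{\alpha}(M\|L)=\widetilde R_{\alpha}(\widetilde M\|\widetilde L)$ from \eqref{equality for Renyi divergence}. Combined with the properties of $D_f$ and $R_\alpha$ already proved in Propositions \ref{prop:non-negative}--\ref{Chernoff information: informal}, each item reduces essentially to a manipulation at the level of probability measures.

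For \ref{it:ergcoef1}, non-negativity is immediate from $D_f,R_\alpha\geq 0$, and the upper bound of $1$ follows from the data processing inequality applied to the edge measures under the row-stochastic map $P$. The sharper comparison $\eta^f(P)\leq\eta^{(\mathrm{TV})}(P)$ is the Cohen--Kempermann--Zbaganu inequality, which I will invoke at the level of $\widetilde\eta^f$ and then transfer via \ref{it:ergcoef2}. For \ref{it:ergcoef2}, the elementary estimate $\sum b_i/\sum a_i\leq\max_i b_i/a_i$ applied row by row yields $\eta^f(P)\leq\widetilde\eta^f(P)$, while specializing to constant-row $M,L$ supplies the reverse inequality. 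For $R_\alpha$ the constant-row choice combined with the tensorization $\widetilde R_\alpha(\pi\otimes\mu\|\pi\otimes\nu)=\widetilde R_\alpha(\mu\|\nu)$ supplies only $\eta^{(R_\alpha)}(P)\geq\widetilde\eta^{(R_\alpha)}(P)$, since the logarithm in $R_\alpha$ does not commute with the $\pi$-average. For \ref{it:ergcoef3}, writing $g(t)=\frac{1}{\alpha-1}\ln(1+(\alpha-1)t)$ so that $R_\alpha=g(D_\alpha)$, the function $g$ is concave (resp.\ convex) with $g(0)=0$ and $g'(0)=1$ for $\alpha>1$ (resp.\ $\alpha<1$); applying $g$ to both sides of $D_\alpha(MP\|LP)\leq\eta^{(D_\alpha)}(P)\,D_\alpha(M\|L)$ yields the stated bounds, with the case $U^*=V^*$ handled via the L'Hopital limit $g(ct)/g(t)\to c$ as $t\to 0$. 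For \ref{it:ergcoef4}, joint convexity of $D_f$ and of $R_\alpha$ for $\alpha\in(0,1)$ (Proposition \ref{prop:convexity}) transfers to $\eta^f,\eta^{(R_\alpha)}$ by the linearity of $P\mapsto(MP,LP)$.

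Submultiplicativity \ref{it:submulti} follows by applying the defining contraction inequality first to $Q$ acting on the pair $(MP,LP)$ and then to $P$ acting on $(M,L)$. For \ref{it:ergcoef5}, Proposition \ref{prop:skewsym}(i) gives $R_\alpha(M\|L)=\frac{\alpha}{1-\alpha}R_{1-\alpha}(L\|M)$; the prefactor cancels in the ratio defining the ergodicity coefficient, yielding $\eta^{(R_\alpha)}(P)=\eta^{(R_{1-\alpha})}(P)$, and the analogous identity $D_\alpha(M\|L)=\frac{\alpha}{1-\alpha}D_{1-\alpha}(L\|M)$ is a one-line computation from \eqref{equality}. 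For \ref{it:ergcoef6}, applying the two-sided bounds of Proposition \ref{prop:skewsym}(ii) separately to the numerator and the denominator of $R_\alpha(MP\|LP)/R_\alpha(M\|L)$ and chaining the resulting inequalities produces both halves of the claim. For \ref{it:ergcoef7}, the classical fact that $\eta^{(\mathrm{TV})}(P)<1$ iff $P$ is scrambling, together with \ref{it:ergcoef1} and the Pinsker-type estimates of Proposition \ref{inequalities with other divergences}, chains the strict-contraction equivalences; the converse is witnessed by taking constant-row $M,L$ with rows equal to $\delta_x,\delta_y$ where $P(x,\cdot)$ and $P(y,\cdot)$ have disjoint support, so the divergence ratio saturates at $1$. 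For \ref{it:ergcoef8}, if $P=\mathbf{1}\rho^{\top}$ then $MP=LP$ for every $M,L$, forcing all listed coefficients to vanish; conversely $\eta^f(P)=0$ with $f$ strictly convex at $1$ forces $MP=LP$ for every pair, and the constant-row choice with rows $\delta_x,\delta_y$ yields $P(x,\cdot)=P(y,\cdot)$ for all $x,y$.

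The most delicate piece is the Cohen--Kempermann--Zbaganu inequality $\widetilde\eta^f(P)\leq\widetilde\eta^{(\mathrm{TV})}(P)$ underlying \ref{it:ergcoef1}: its standard proof shows that the contraction coefficient is attained at binary (two-point) distributions and then invokes the integral representation of $f$-divergences against TV distances of such binary pairs. Secondary subtleties include the L'Hopital handling of the degenerate case $D_\alpha(U^*\|V^*)=0$ in \ref{it:ergcoef3}, and the treatment of the boundary constant $f(0)+f'(+\infty)$ in \ref{it:ergcoef7}--\ref{it:ergcoef8} when it is infinite, where a continuity/approximation argument is needed to identify the limiting ratio as $1$.
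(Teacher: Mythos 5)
Your proposal follows essentially the same strategy as the paper: reduce everything to the probability-measure level via the weighted decomposition \eqref{eq:weightedDfML} and the edge-measure identity \eqref{equality for Renyi divergence}, prove $\eta^f=\widetilde\eta^f$ and $\eta^{(R_\alpha)}\geq\widetilde\eta^{(R_\alpha)}$ by the row-by-row estimate and the constant-row construction, handle item \ref{it:ergcoef3} by composing with $g(t)=\frac{1}{\alpha-1}\ln(1+(\alpha-1)t)$ and its concavity/convexity (your inequality $g(ct)\geq cg(t)$ is exactly the paper's $(1+x)^a\leq 1+ax$), and import the classical Dobrushin/Cohen results for the $f$-divergence coefficient. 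Two of your deviations are genuine small improvements: you get the normalization $\eta^{(R_\alpha)}(P)\leq 1$ in item \ref{it:ergcoef1} directly from the data processing inequality applied to the kernel $\widetilde M\mapsto\widetilde{MP}$ on $\mathcal{X}\times\mathcal{X}$ (the paper instead routes this through item \ref{it:ergcoef3}), and you prove submultiplicativity by applying the contraction inequality twice rather than citing the corresponding fact for $\widetilde\eta^f$; both are cleaner and self-contained. Your direct witnesses for items \ref{it:ergcoef7}--\ref{it:ergcoef8} (constant-row matrices with rows $\delta_x,\delta_y$) also make explicit what the paper delegates to the cited reference, and you correctly flag the $\infty/\infty$ issue when $f(0)+f'(+\infty)=\infty$.

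One step as written would not go through: in item \ref{it:ergcoef7} you propose to chain the strict-contraction equivalences for the R\'enyi coefficients using ``the Pinsker-type estimates of Proposition \ref{inequalities with other divergences}.'' Pinsker's inequality compares different divergences of the \emph{same} pair of measures and only in one direction ($\frac{\alpha}{2}\mathrm{TV}^2\leq R_\alpha$); to transfer $\eta^{(\mathrm{TV})}(P)<1$ or $\eta^{(D_\beta)}(P)<1$ into $\eta^{(R_\beta)}(P)<1$ you would need a reverse comparison on the input pair, which fails in general, so the supremum-ratio structure is not preserved. The correct tool is your own item \ref{it:ergcoef3}: when $\eta^{(D_\beta)}(P)<1$, the explicit bound $\frac{\ln(1+(\beta-1)\eta^{(D_\beta)}(P)D_\beta(U^*\|V^*))}{\ln(1+(\beta-1)D_\beta(U^*\|V^*))}$ (or $\eta^{(D_\beta)}(P)$ itself in the degenerate case) is strictly below $1$, which is exactly how the paper closes this step. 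Replace the appeal to Pinsker by an appeal to item \ref{it:ergcoef3} and the argument is complete.
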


\begin{proof}
    Item \ref{it:ergcoef1}: Using item \ref{it:ergcoef2} and the property that $0\leq \widetilde \eta^f(P)\leq \widetilde \eta^{(\mathrm{TV})}(P)\leq 1$ from \parencite[Theorem 4.1]{cohen1993relative}, we see that $\eta^f(P) \in [0,1]$. By item \ref{it:ergcoef3}, $\eta^{(R_{\alpha})}(P), \widetilde \eta^{(R_{\alpha})}(P) \in [0,1]$.
    
    Item \ref{it:ergcoef2}: We first prove that $\widetilde \eta^f(P) \geq \eta^f(P)$. Fix $M,L,P \in \mathcal{L}$. Using \eqref{eq:weightedDfML} twice, we see that
    \begin{align*}
        D_f(MP\|LP) &= \sum_{x \in \mathcal{X}} \pi(x) \widetilde D_f(MP(x,\cdot)\|LP(x,\cdot)) \\
                    &\leq \sum_{x \in \mathcal{X}} \pi(x) \widetilde{\eta}_f(P) \widetilde D_f(M(x,\cdot)\|L(x,\cdot)) \\
                    &= \widetilde{\eta}^f(P) D_f(M\|L),
    \end{align*}
    where the inequality follows from the definition of $\widetilde{\eta}^f(P)$. Taking supremum over $M,L$ with $M \neq L$ yields the desired result.

    Next, we prove the opposite inequality that $\widetilde \eta^f(P) \leq \eta^f(P)$ and $\widetilde \eta^{(R_{\alpha})}(P) \leq \eta^{(R_{\alpha})}(P)$. Fix $\mu,\nu \in \mathcal{P}(\mathcal{X})$, and let $U(x,\cdot) = \mu, V(x,\cdot) = \nu$ for all $x \in \mathcal{X}$. Using the definitions of $D_f$ and $\widetilde D_f$, we thus see that
    $$D_f(UP \| VP) = \widetilde D_f(\mu P\| \nu P),$$
    which yields
    $$R_{\alpha}(UP \| VP) = \widetilde R_{\alpha}(\mu P\| \nu P),$$
    and hence
    \begin{align*}
        \eta^f(P) &\geq \sup_{\mu \neq \nu; U(x,\cdot) = \mu, V(x,\cdot) = \nu} \dfrac{D_f(UP\|VP)}{D_f(U\|V)} = \sup_{\mu \neq \nu} \dfrac{\widetilde D_f(\mu P\|\nu P)}{\widetilde D_f(\mu\|\nu)} = \widetilde \eta^f(P), \\
        \eta^{(R_{\alpha})}(P) &\geq \sup_{\mu \neq \nu; U(x,\cdot) = \mu, V(x,\cdot) = \nu} \dfrac{R_{\alpha}(UP\|VP)}{R_{\alpha}(U\|V)} = \sup_{\mu \neq \nu} \dfrac{\widetilde R_{\alpha}(\mu P\|\nu P)}{\widetilde R_{\alpha}(\mu\|\nu)} = \widetilde \eta^{(R_{\alpha})}(P).
    \end{align*}
    Therefore we have $\widetilde \eta^f(P)=\eta^f(P)$ and $\eta^{(R_\alpha)}(P) \geq \widetilde \eta^{(R_\alpha)}(P)$.

    Item \ref{it:ergcoef3}: We first prove the second case. For any $\mu,\nu \in \mathcal{P}(\mathcal{X})$, we see that
    \begin{align*}
        \widetilde R_{\alpha}(\mu P \| \nu P) &= \dfrac{1}{\alpha - 1} \ln(1+(\alpha-1) \widetilde D_{\alpha}(\mu P \| \nu P)) \\
        &\leq \widetilde \eta^{(R_\alpha)}(P) \dfrac{1}{\alpha - 1} \ln(1+(\alpha-1) \widetilde D_{\alpha}(\mu  \| \nu )) \\
        &\leq \dfrac{1}{\alpha - 1} \ln(1+\widetilde \eta^{(R_\alpha)}(P) (\alpha-1) \widetilde D_{\alpha}(\mu  \| \nu )),
    \end{align*}
    where the first inequality follows from the definition of $\widetilde \eta^{(R_\alpha)}(P)$ and the second inequality stems from $\alpha > 1$ and the inequality 
    \begin{align}\label{eq:pfinequality}
        (1+x)^a &\leq 1+ax
    \end{align} 
    for $x > -1$ and $a \in [0,1]$. Rearranging the above inequalities give
    \begin{align*}
        \widetilde D_{\alpha}(\mu P \| \nu P) &\leq \widetilde \eta^{(R_\alpha)}(P) \widetilde D_{\alpha}(\mu  \| \nu ),
    \end{align*}
    and hence
    $$\widetilde \eta^{(D_\alpha)}(P) \leq \widetilde \eta^{(R_\alpha)}(P).$$
    
    On the other hand, using the definition of $\widetilde \eta^{(D_\alpha)}(P)$, we see that
    \begin{align*}
        \widetilde D_{\alpha}(\mu^*_n P \| \nu^*_n P)) \leq \widetilde \eta^{(D_\alpha)}(P) \widetilde D_{\alpha}(\mu^*_n \| \nu^*_n),
    \end{align*}
    and applying the non-decreasing transformation $t \mapsto \frac{1}{\alpha-1}\ln(1+(\alpha-1)t)$ gives
    \begin{align*}
        \widetilde R_{\alpha}(\mu^*_n P \| \nu^*_n P)) \leq \dfrac{1}{\alpha-1}\ln(1+(\alpha-1)\widetilde \eta^{(D_\alpha)}(P) \widetilde D_{\alpha}(\mu^*_n \| \nu^*_n)).
    \end{align*}
    Now, we divide both sides by $\widetilde R_{\alpha}(\mu^*_n \| \nu^*_n ) > 0$ and pass to $n \to \infty$ to yield
    $$\widetilde \eta^{(R_\alpha)}(P) \leq \lim_{n \to \infty} \dfrac{\ln(1+(\alpha-1)\widetilde \eta^{(D_\alpha)}(P)\widetilde D_\alpha(\mu^*_n\|\nu^*_n))}{\ln(1+(\alpha-1)\widetilde D_\alpha(\mu^*_n\|\nu^*_n))} \leq 1.$$
    The remaining inequalities can be obtained analogously by simply replacing $\mu^*_n,\nu^*_n$ by $U_n,V_n$ respectively and $\widetilde D_{\alpha},\widetilde R_{\alpha}$ by their untilded counterparts.
    
    For the case of $\beta \in (0,1)$, similar to the reasoning above we obtain that
    \begin{align*}
        \widetilde \eta^{(R_\beta)}(P) \leq 
            \begin{cases}
            \dfrac{\ln(1+(\beta-1)\widetilde \eta^{(D_\beta)}(P)\widetilde D_\beta(\mu^*\|\nu^*))}{\ln(1+(\beta-1)\widetilde D_\beta(\mu^*\|\nu^*))} \leq \widetilde \eta^{(D_\beta)}(P), \quad &\textrm{if }\, \mu^* \neq \nu^*,\\
            \widetilde \eta^{(D_\beta)}(P), \quad &\textrm{if }\, \mu^* = \nu^*,
            \end{cases}  
    \end{align*}
    where we use again \eqref{eq:pfinequality} in the second inequality and $\beta \in (0,1)$. 

    Item \ref{it:ergcoef4}: First, we note that he mapping $P\mapsto \eta^f(P) = \widetilde \eta^f(P)$ is convex according to \parencite[Section 4]{cohen1993relative}. Next, for $P,M \neq L \in \mathcal{L}$, the ratio 
    \begin{align*}
        (M,L) \mapsto \dfrac{R_{\alpha}(MP\|LP)}{R_{\alpha}(M\|L)} 
    \end{align*}
    is jointly convex when $\alpha \in (0,1)$ by Proposition \ref{prop:convexity}, and taking the supremum over $M \neq L$ preserves its convexity.

    Item \ref{it:submulti}: It follows readily from item \ref{it:ergcoef2} and the submultiplicativity of $\widetilde \eta^f$ as in \parencite[Section $4$]{cohen1993relative}.
    
    Item \ref{it:ergcoef5}: Thanks to the skew symmetry property of $R_{\alpha}$ as in Proposition \ref{prop:skewsym}, we see that for $P,M \neq L \in \mathcal{L}$, 
    \begin{align*}
        \dfrac{R_{\alpha}(MP\|LP)}{R_{\alpha}(M\|L)} &= \dfrac{R_{1-\alpha}(LP\|MP)}{R_{1-\alpha}(L\|M)},
    \end{align*}
    which yields the desired result by taking the supremum over $M \neq L \in \mathcal{L}$. The analogous result for $D_{\alpha}$ holds due to the skew symmetry of $D_{\alpha}$, that is, $D_{\alpha}(M\|L) = \frac{\alpha}{1-\alpha} D_{\alpha}(L\|M)$.

    Item \ref{it:ergcoef6}: It follows directly from the equivalence property of $R_{\alpha}$ in Proposition \ref{prop:skewsym}.

    Item \ref{it:ergcoef7}: If $P$ is scrambling, then according to \parencite[Theorem 4.2]{cohen1993relative}, $\eta^f(P) < 1$. In particular, $\eta^{(D_{\alpha})}(P) = \widetilde \eta^{(D_{\alpha})}(P) < 1$ by item \ref{it:ergcoef2}. By item \ref{it:ergcoef4}, this yields $\eta^{(R_{\beta})}(P), \widetilde \eta^{(R_{\beta})}(P) < 1$. On the other hand, if $P$ is not scrambling, then by \parencite[Theorem 4.2]{cohen1993relative} again, $\eta^f(P) = 1$. In particular, $\eta^{(D_{\alpha})}(P) = \widetilde \eta^{(D_{\alpha})}(P) = 1$ by item \ref{it:ergcoef2}. Using again item \ref{it:ergcoef4}, this yields $\eta^{(R_{\beta})}(P), \widetilde \eta^{(R_{\beta})}(P) = 1$. 

    Item \ref{it:ergcoef8}: Using \parencite[Section 4]{cohen1993relative}, unit rank of $P$ is equivalent to $\eta^f(P) = 0$. The rest of the proof is similar to that of item \ref{it:ergcoef7} and is omitted.
\end{proof}

\begin{remark}\label{remark: scrambling}
    In Proposition \ref{prop:ergocoef} item \ref{it:ergcoef7}, a transition matrix $P \in \mathcal{L}$ is called scrambling if for each pair of rows $(x,y)\in \mathcal{X}\times\mathcal{X}$, there exists a column $z\in \mathcal{X}$ such that $P(x,z), P(y,z)>0$, i.e.
    \begin{equation*}
        \min_{x,y\in \mathcal{X}} \sum_{z\in\mathcal{X}} P(x,z)\wedge P(y,z)>0,
    \end{equation*}
    and this means that every two rows of $P$ are not orthogonal, see \parencite[Theorem 3.1]{seneta2006non}.
\end{remark}

In Section \ref{cesaro mixing time}, we will use the ergodicity coefficients to give the relationship between Ces\`aro mixing time and classical mixing time.

\section{Applications}

The main aim of this section is to provide applications of these notions of information divergences and ergodicity coefficients in the context of Markov chain mixing and binary hypothesis testing.

\subsection{Binary hypothesis testing between Markov chains}\label{sec:hypothesis testing}

Given $n$ pairs of i.i.d. data $\left\{(X_1,Y_1),(X_2,Y_2),...,(X_n,Y_n)\right\}$ which are observed from an arbitrary inital distribution $\pi \in \mathcal{P}(\mathcal{X})$ and transition matrix $P$, i.e. $X_i\sim \pi$ and $Y_i|X_i\sim P(X_i,\cdot)$. Note that in this section $\pi \in \mathcal{P}(\mathcal{X})$ is given and needs not be the stationary distribution of $P$. Then, we consider the following hypothesis testing problem between transition matrices:
\begin{equation}
    H_0: P=P_0 \quad\longleftrightarrow \quad H_1:P=P_1, \quad \pi \enspace \text{is known}.\label{hypothesis testing: original}
\end{equation}
Denote $Z_i:=(X_i,Y_i)\in \mathcal{X}\times \mathcal{X}$, therefore $Z_i$ follows the edge measure $Z_i\sim Q(x,y):=\pi(x)P(x,y)$, and the hypothesis testing \eqref{hypothesis testing: original} is equivalent to the following hypothesis testing between edge measures:
\begin{equation}
    H_0':Q=Q_0 \quad\longleftrightarrow \quad H_1':Q=Q_1,\label{hypothesis testing: new}
\end{equation}
where $Q_i(x,y)=\pi(x)P_i(x,y)$ for $i=0,1$.

Now, let $A_n\subseteq (\mathcal{X}\times\mathcal{X})^n$ be the region of rejecting $H_0$ or $H_0'$, and $A_n^c$ be the region of accepting $H_0$ or $H_0'$, then the first and second type error are
\begin{equation}
\alpha_n:=Q_0^n(A_n), \quad \beta_n:=Q_1^n(A_n^c),
\end{equation}
where $Q_0^n$ and $Q_1^n$ are extended measures from $\mathcal{X}\times\mathcal{X}$ to $(\mathcal{X}\times\mathcal{X})^n$. Using \parencite[Theorem 11.7.1]{cover2012elements} to apply Neyman-Pearson lemma on \eqref{hypothesis testing: new}, we can find that the optimal test for both \eqref{hypothesis testing: original} and \eqref{hypothesis testing: new} has the form of likelihood ratio test, i.e.
\begin{equation}
    A_n=A_n(T)=\left\{\bold z=(z_1,z_2,...,z_n)\in (\mathcal{X}\times \mathcal{X})^n: \dfrac{Q_0(z_1)Q_0(z_2)...Q_0(z_n)}{Q_1(z_1)Q_1(z_2)...Q_1(z_n)}<T\right\},
\end{equation}
where there is a slight abuse of notation regarding $Q_0$ and $Q_1$.
Then, we have the following asymptotic equipartition property (AEP):
\begin{theorem}
    Let $\{Z_i=(X_i,Y_i)\}_{i=1}^n$ be i.i.d. random variables drawn from $Q_0(x,y)=\pi(x)P_0(x,y)$. For another edge measure $Q_1(x,y)=\pi(x)P_1(x,y)$, we have
    \begin{equation}
        \dfrac{1}{n} \ln \dfrac{Q_0(Z_1)Q_0(Z_2)...Q_0(Z_n)}{Q_1(Z_1)Q_1(Z_2)...Q_1(Z_n)}\xrightarrow{a.s.} R_1(P_0\|P_1)\quad as \enspace n\rightarrow \infty,
    \end{equation}
    where we recall that $R_1(P_0\|P_1)$ is the $\mathrm{KL}$ divergence from $P_1$ to $P_0$ as defined in Proposition \ref{extensions to alpha=0,1,infty}.
\end{theorem}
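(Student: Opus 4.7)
The plan is to recognize this as a standard application of the strong law of large numbers (SLLN) to the log-likelihood ratios. First I would rewrite the left-hand side as an empirical mean, noting that
\begin{equation*}
    \dfrac{1}{n} \ln \dfrac{Q_0(Z_1)Q_0(Z_2)\cdots Q_0(Z_n)}{Q_1(Z_1)Q_1(Z_2)\cdots Q_1(Z_n)} = \dfrac{1}{n}\sum_{i=1}^n \ln \dfrac{Q_0(Z_i)}{Q_1(Z_i)},
\end{equation*}
where the summands $\xi_i := \ln(Q_0(Z_i)/Q_1(Z_i))$ are i.i.d.\ under $Q_0$ since the $Z_i$'s are i.i.d. Provided $\xi_1$ has a well-defined expectation, Kolmogorov's SLLN yields almost-sure convergence to $\mathbb{E}_{Q_0}[\xi_1]$.

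The next step is to identify this expectation with $R_1(P_0 \| P_1)$. A direct computation gives
\begin{equation*}
    \mathbb{E}_{Q_0}[\xi_1] = \sum_{x,y \in \mathcal{X}} Q_0(x,y) \ln \dfrac{Q_0(x,y)}{Q_1(x,y)} = \sum_{x,y \in \mathcal{X}} \pi(x) P_0(x,y) \ln \dfrac{\pi(x) P_0(x,y)}{\pi(x) P_1(x,y)},
\end{equation*}
and since the $\pi(x)$ factors cancel inside the logarithm whenever $\pi(x) > 0$ (with the terms where $\pi(x)=0$ contributing zero by our conventions on $0 \ln 0$), this simplifies to $\sum_{x,y; P_1(x,y)>0} \pi(x) P_0(x,y) \ln(P_0(x,y)/P_1(x,y))$, which is exactly the closed form for $R_1(P_0\|P_1) = D_{KL}(P_0\|P_1)$ derived in Proposition \ref{extensions to alpha=0,1,infty}.

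The main technical delicacy is handling possible zeros in $P_1$: if $P_1(x,y) = 0$ while $Q_0(x,y) > 0$, the ratio $Q_0(Z_i)/Q_1(Z_i)$ takes the value $+\infty$ with positive probability, so $\xi_i$ is not integrable in the usual sense. However, in this case $R_1(P_0\|P_1) = +\infty$ by the same convention, and one can apply the SLLN for nonnegative (or bounded-below) random variables after centering: write $\xi_i = \xi_i^+ - \xi_i^-$ and observe that $\mathbb{E}_{Q_0}[\xi_i^-] < \infty$ since $x \mapsto x \ln x$ is bounded below on $(0,\infty)$, so the SLLN still gives $n^{-1}\sum_i \xi_i \to +\infty$ almost surely, consistent with the claimed limit. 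Under the standard assumption $Q_0 \ll Q_1$ (equivalently, $P_1(x,y)=0 \Rightarrow P_0(x,y)=0$ for every $x$ with $\pi(x)>0$), $\xi_1$ is $Q_0$-integrable and the ordinary SLLN applies directly, completing the proof.
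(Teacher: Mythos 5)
Your proposal is correct and follows essentially the same route as the paper's proof: rewrite the log-likelihood ratio as an i.i.d.\ empirical mean, apply the strong law of large numbers, and identify the limiting expectation with $R_1(P_0\|P_1)$ after cancelling the $\pi(x)$ factors inside the logarithm. Your additional discussion of the non-integrable case where $Q_0 \not\ll Q_1$ is a careful touch that the paper's proof passes over silently, but it does not change the substance of the argument.
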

\begin{proof}
    By strong law of large numbers, recalling the definition of $R_1(P_0\|P_1)$ defined in Proposition \ref{extensions to alpha=0,1,infty}, we have
    \begin{align*}
        \dfrac{1}{n} \ln \dfrac{Q_0(Z_1)Q_0(Z_2)...Q_0(Z_n)}{Q_1(Z_1)Q_1(Z_2)...Q_1(Z_n)}
        &= \dfrac{1}{n}\sum_{i=1}^n \ln \dfrac{Q_0(Z_i)}{Q_1(Z_i)} \\
        &\xrightarrow{a.s.} \mathbb{E}_{Z_1 \sim Q_0}\left[\ln \dfrac{Q_0(Z_1)}{Q_1(Z_1)}\right]\\
        &=\sum_{x,y}\pi(x)P_0(x,y)\ln \dfrac{P_0(x,y)}{P_1(x,y)}
        =R_1(P_0\|P_1).
    \end{align*}
\end{proof}

Next, we will consider the Bayesian case. Suppose $P=P_0$ has a prior probability $\pi_0$ and $P=P_1$ has a prior probability $\pi_1$, where $\pi_0+\pi_1=1$ and $0<\pi_0<1$. Then, the overall Bayesian error for both \eqref{hypothesis testing: original} and \eqref{hypothesis testing: new} is
\begin{equation}
    P_e(A_n)=\pi_0\alpha_n+\pi_1\beta_n,\label{Bayesian error}
\end{equation}
and we need to estimate the best achievable Bayesian error
\begin{align}
    P_e^*(n)&:=\inf_{A_n\subseteq (\mathcal{X}\times\mathcal{X})^n} P_e(A_n)\label{best achievable Bayesian error}\\
    &=\inf_{A_n\subseteq (\mathcal{X}\times\mathcal{X})^n} \left\{\pi_0 Q_0^n(A_n)+\pi_1Q_1^n(A_n^c)\right\}.\nonumber
\end{align}

Before presenting the estimation of $P_e^*(n)$, we will first give a rigorous definition of Chernoff information between transition matrices $P_0$ and $P_1$:
\begin{align}
    C(P_0,P_1)&:=\max_{0\leq \alpha\leq 1} (1-\alpha)R_{\alpha}(P_0\|P_1)\label{chernoff information: formal}\\
    &=\max_{0\leq \alpha\leq 1}-\ln \sum_{x,y}\pi(x)P_0^{\alpha}(x,y)P_1^{1-\alpha}(x,y).\label{chernoff information:expansion}
\end{align}
When $\alpha>1$, $(1-\alpha)R_{\alpha}(P_0\|P_1)\leq 0$, hence \eqref{chernoff information: formal} is in accordance with  \ref{chernoff information} in Proposition \ref{Chernoff information: informal} using \eqref{eq:tradeoff of KL}, that is,
\begin{align*}
    \sup_{\alpha>0}\inf_{P\in \mathcal{L}}\left\{\alpha D_{KL}(P\|M)+(1-\alpha)D_{KL}(P\|L)\right\}&=\sup_{\alpha>0}(1-\alpha)R_{\alpha}(P_0\|P_1)\\
    &=\max_{0\leq \alpha\leq 1} (1-\alpha)R_{\alpha}(P_0\|P_1).
\end{align*}
Then, we have the following result using Chernoff information.
\begin{theorem}
    Recalling the best achievable Bayesian error defined in \eqref{best achievable Bayesian error}, we have 
    \begin{equation}
        \lim_{n\rightarrow \infty}-\dfrac{1}{n}\ln P_e^*(n)=C(P_0,P_1),
    \end{equation}
\end{theorem}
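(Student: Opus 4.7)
The plan is to exploit the reduction to the i.i.d.\ hypothesis test \eqref{hypothesis testing: new} already carried out in the text, and then invoke the classical Chernoff information theorem for i.i.d.\ data on the finite alphabet $\mathcal{X}\times\mathcal{X}$. Crucially, under each hypothesis the samples $Z_1,\ldots,Z_n$ are genuinely i.i.d.\ (from $Q_0$ or from $Q_1$), since each pair $(X_i,Y_i)$ is drawn independently with $X_i\sim\pi$ and $Y_i\mid X_i\sim P(X_i,\cdot)$. Thus the problem reduces to a classical one on the alphabet $\mathcal{X}\times\mathcal{X}$, and the conclusion will follow once I match the resulting i.i.d.\ Chernoff exponent with the Markov-chain quantity $C(P_0,P_1)$.

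For the upper bound on $P_e^{\ast}(n)$ (equivalently, a lower bound on the exponent), I would use the elementary inequality $\min(a,b)\leq a^{\alpha}b^{1-\alpha}$ valid for $a,b\geq 0$ and $\alpha\in[0,1]$. The optimal Bayes rule yields
$$P_e^{\ast}(n)=\sum_{\mathbf{z}\in(\mathcal{X}\times\mathcal{X})^n}\min\bigl(\pi_0 Q_0^n(\mathbf{z}),\pi_1 Q_1^n(\mathbf{z})\bigr)\leq \pi_0^{\alpha}\pi_1^{1-\alpha}\Bigl(\sum_{z}Q_0(z)^{\alpha}Q_1(z)^{1-\alpha}\Bigr)^n,$$
so $\liminf_n -\tfrac{1}{n}\ln P_e^{\ast}(n)\geq -\ln\sum_z Q_0^{\alpha}Q_1^{1-\alpha}$ for every $\alpha\in[0,1]$; taking the supremum over $\alpha$ produces the Chernoff exponent from below.

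The matching lower bound on $P_e^{\ast}(n)$ is the main obstacle and is handled by the method of types on the finite alphabet $\mathcal{X}\times\mathcal{X}$. Let $\alpha^{\ast}$ be the optimizer in \eqref{chernoff information: formal} and $Q_{\alpha^{\ast}}(z)\propto Q_0(z)^{\alpha^{\ast}}Q_1(z)^{1-\alpha^{\ast}}$ the tilted edge measure. Restricting the Bayes-error sum to the type class of sequences whose empirical distribution is close to $Q_{\alpha^{\ast}}$ gives a set of polynomial size on which both $Q_0^n$ and $Q_1^n$ carry mass $e^{-nC(P_0,P_1)+o(n)}$, which yields $P_e^{\ast}(n)\geq \tfrac{1}{2}\min(\pi_0,\pi_1)e^{-nC(P_0,P_1)-o(n)}$. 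Alternatively, since the setting is purely i.i.d., one may simply cite the classical Chernoff theorem (e.g.\ Cover and Thomas, Theorem 11.9.1) for this step, which the authors have already used to set up the likelihood-ratio test.

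Finally, I would verify that the classical i.i.d.\ Chernoff exponent for $(Q_0,Q_1)$ coincides with $C(P_0,P_1)$. A direct substitution of $Q_i(x,y)=\pi(x)P_i(x,y)$ gives
$$\sum_{(x,y)}Q_0(x,y)^{\alpha}Q_1(x,y)^{1-\alpha}=\sum_{x,y}\pi(x)P_0(x,y)^{\alpha}P_1(x,y)^{1-\alpha},$$
whose negative logarithm equals $(1-\alpha)R_{\alpha}(P_0\|P_1)$ by \eqref{eq:Renyi dovergence for transition matrices}. Maximizing over $\alpha\in[0,1]$ reproduces $C(P_0,P_1)$ via \eqref{chernoff information:expansion} and closes the proof. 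Beyond the Sanov-type lower bound, the remainder is essentially an algebraic identification.
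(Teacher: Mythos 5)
Your proposal is correct and follows essentially the same route as the paper: reduce to the i.i.d.\ binary test between the edge measures $Q_0,Q_1$ on $\mathcal{X}\times\mathcal{X}$, invoke the classical Chernoff information theorem (the paper cites exactly \parencite[Theorem 11.9.1]{cover2012elements}, which you also mention as the shortcut), and then identify $-\ln\sum_{x,y}Q_0^{\alpha}(x,y)Q_1^{1-\alpha}(x,y)$ with $(1-\alpha)R_{\alpha}(P_0\|P_1)$ to recover $C(P_0,P_1)$. Your additional sketch of the achievability and converse bounds for the i.i.d.\ theorem is sound but supplementary, since the paper simply cites that result.
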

\begin{proof}
    According to \parencite[Theorem 11.9.1]{cover2012elements}, denote 
    \begin{equation*}
        Q_{\lambda}(x,y):=\dfrac{Q_0^{\lambda}(x,y)Q_1^{1-\lambda}(x,y)}{\sum_{x,y}Q_0^{\lambda}(x,y)Q_1^{1-\lambda}(x,y)},
    \end{equation*}
    then we have 
    \begin{align*}
        \lim_{n\rightarrow \infty}-\dfrac{1}{n}\ln P_e^*(n)&=\max_{0\leq \lambda\leq 1}\min \left\{\widetilde R_1(Q_\lambda\|Q_0), \widetilde R_1(Q_\lambda\|Q_1)\right\}\\
        &=\max_{0\leq \alpha\leq 1}-\ln \sum_{x,y}Q_0^{\alpha}(x,y)Q_1^{1-\alpha}(x,y).
    \end{align*}
    Since 
    \begin{equation*}
        \sum_{x,y}Q_0^{\alpha}(x,y)Q_1^{1-\alpha}(x,y)=\sum_{x,y}\pi(x)P_0^{\alpha}(x,y)P_1^{1-\alpha}(x,y),
    \end{equation*}
    and by the definition of $C(P_0,P_1)$ in \eqref{chernoff information: formal}, the statement is verified.
\end{proof}

\subsection{Information-theoretic mixing times and their spectral bounds}\label{mixing time to stationary distribution}
For given $\varepsilon>0$ and transition matrix $P\in \mathcal{L}$ which admits $\pi$ as its stationary distribution, we first introduce three notions of average mixing times based upon suitable information divergences. We define the $\pi$-weighted $\alpha$-divergence mixing time, $\pi$-weighted R\'enyi divergence mixing time of the order $\alpha \in (0,1)\cup (1,+\infty)$ and the $\pi$-weighted total variation distance mixing time to be respectively
\begin{align}
    \bar{t}_{\mathrm{mix}}^{(D_\alpha)}(\varepsilon,P)&:=\inf\left\{t>0:D_{\alpha}(P^t\|\Pi)<\varepsilon\right\},\label{t mix D}\\
   \bar{t}_{\mathrm{mix}}^{(R_\alpha)}(\varepsilon,P)&:=\inf\left\{t>0:R_{\alpha}(P^t\|\Pi)<\varepsilon\right\},\label{t mix R} \\
   \bar{t}_{\mathrm{mix}}^{(\mathrm{TV})}(\varepsilon,P)&:=\inf\left\{t>0:\mathrm{TV}(P^t,\Pi)<\varepsilon\right\},\label{t mix TV}
\end{align}
where $\Pi$ is a $|\mathcal{X}|\times |\mathcal{X}|$ matrix with each row being $\pi$, i.e. $\Pi(x,y)=\pi(y)$, for all $x,y \in \mathcal{X}$. These average mixing times can be used to assess the quantitative convergence rate of the transition matrix $P^t$ towards $\Pi$. In the first main result below, we obtain spectral bounds of these mixing times in terms of $\gamma_*$, the absolute spectral gap of $P$, which is defined as 
\begin{equation*}
    \lambda_*:=\max \left\{|\lambda|:\lambda\enspace \text{is an eigenvalue of}\enspace P,\enspace \lambda\neq 1\right\},
\end{equation*}
and we also call $\gamma_*:=1-\lambda_*$ as the absolute spectral gap of $P$. By \parencite[Lemma 12.1]{levin2017markov}, if $P$ is aperiodic and irreducible, then $\gamma_*>0$.

\begin{theorem}\label{upper and lower bound of D_alpha}
    Let $P\in\mathcal{L}(\pi)$ be an irreducible, aperiodic and $\pi$-reversible transition matrix, and we denote $\pi_{\mathrm{min}}:=\min_{x}\pi(x)>0$. For $\alpha\in (0,1)\cup (1,+\infty)$, if $\varepsilon<\min \left\{\frac{1}{2|\alpha-1|}, \frac{4\alpha}{|\alpha-1|}\right\}$, we have
    \begin{align*}
        \bar{t}_{\mathrm{mix}}^{(D_\alpha)}(\varepsilon,P)&\leq \begin{cases}
            \dfrac{1}{\gamma_*}\left(\ln \dfrac{2\alpha}{(\alpha-1)\pi_{\mathrm{min}}}+\ln \dfrac{1}{\varepsilon}\right), \quad \textrm{if }\, \alpha>1,\\
            \dfrac{1}{\gamma_*}\left(\ln \dfrac{8\alpha}{(1-\alpha)\pi_{\mathrm{min}}}+\ln \dfrac{1}{\varepsilon}\right), \quad \textrm{if }\, 0<\alpha<1,
        \end{cases}    
    \end{align*}
    and 
    \begin{align*}
        \bar{t}_{\mathrm{mix}}^{(D_\alpha)}(\varepsilon,P) &\geq \begin{cases}
            \dfrac{1}{2\ln \frac{1}{1-\gamma_*}}\left(\ln \dfrac{\pi_{\mathrm{min}}^2}{8}+\ln\dfrac{1}{\varepsilon}\right), \quad \textrm{if }\, \alpha>1,\\
            \dfrac{1}{2\ln \frac{1}{1-\gamma_*}}\left(\ln \dfrac{\pi_{\mathrm{min}}^2\alpha}{16}+\ln \dfrac{1}{\varepsilon}\right), \quad \textrm{if }\, 0<\alpha<1,
        \end{cases}     
    \end{align*}
    where $\gamma_*$ is the absolute spectral gap of $P$. 
\end{theorem}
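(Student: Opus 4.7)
My plan is to prove the upper and lower bounds on $\bar t_{\mathrm{mix}}^{(D_\alpha)}(\varepsilon,P)$ separately, both times relying on the spectral decomposition of the $\pi$-reversible matrix $P$. Let $1 = \lambda_1 > \lambda_2 \geq \cdots$ denote its real eigenvalues, with $L^2(\pi)$-orthonormal eigenfunctions $\{\phi_i\}$ and $\phi_1 \equiv 1$. The expansion $P^t(x,y)/\pi(y) = \sum_i \lambda_i^t\phi_i(x)\phi_i(y)$, combined with Cauchy--Schwarz and the completeness identity $\sum_i \phi_i(x)^2 = 1/\pi(x)$, will give the uniform $L^\infty$ estimate
\[
  \bigl|P^t(x,y)/\pi(y) - 1\bigr| \;\leq\; \lambda_*^t/\sqrt{\pi(x)\pi(y)} \;\leq\; \lambda_*^t/\pi_{\min},
\]
as well as the $L^2(\pi)$ estimate $\widetilde\chi^2(P^t(x,\cdot)\|\pi) \leq \lambda_*^{2t}/\pi(x)$. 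These two facts about $P$ are the only spectral input I would use.

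For the \textbf{upper bound} I would set $M := \lambda_*^t/\pi_{\min}$ and $r_t(x,y) := P^t(x,y)/\pi(y) \in [0, 1+M]$. Since $u \mapsto u^\alpha$ is convex on $[0, 1+M]$ for $\alpha > 1$ (concave for $\alpha \in (0,1)$), the chord from $(0,0)$ to $(1+M, (1+M)^\alpha)$ yields $u^\alpha \leq u(1+M)^{\alpha-1}$ (resp.\ $\geq$) on $[0, 1+M]$; combined with $\sum_y \pi(y) r_t(x,y) = 1$, this produces, for every $x$ and in both sign regimes,
\[
  \widetilde D_\alpha(P^t(x,\cdot)\|\pi) \;\leq\; \frac{(1+M)^{\alpha-1}-1}{\alpha-1}.
\]
Averaging over $x$ via $D_\alpha(P^t\|\Pi) = \sum_x \pi(x)\widetilde D_\alpha(P^t(x,\cdot)\|\pi)$, the same estimate transfers to $D_\alpha(P^t\|\Pi)$. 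Imposing this bound to be $<\varepsilon$ and linearizing $(1+M)^{\alpha-1}-1$ --- legitimate because the assumption $\varepsilon < \min\{1/(2|\alpha-1|),4\alpha/|\alpha-1|\}$ keeps $M$ small --- yields an explicit constraint $M \leq c_\alpha\varepsilon$ matching the constants $2\alpha/((\alpha-1)\pi_{\min})$ and $8\alpha/((1-\alpha)\pi_{\min})$ in the two cases. Solving $\lambda_*^t/\pi_{\min}\leq c_\alpha\varepsilon$ using $\ln(1/(1-\gamma_*)) \geq \gamma_*$ then produces the claimed upper bound.

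For the \textbf{lower bound} I would combine a Pinsker-type inequality $D_\alpha \geq k_\alpha \mathrm{TV}^2$ with a spectral lower bound $\mathrm{TV}(P^t,\Pi)\geq \pi_{\min}\lambda_*^t/2$. For $\alpha > 1$, the chain $D_\alpha \geq R_\alpha \geq R_1 = D_{KL} \geq 2\mathrm{TV}^2$ (using $e^x\geq 1+x$ applied to the defining formula for $R_\alpha$, Proposition \ref{prop:monotonicity}, and classical Pinsker, respectively) gives $k_\alpha = 2$. For $\alpha \in (0,1)$, Proposition \ref{inequalities with other divergences}(ii) gives $R_\alpha \geq (\alpha/2)\mathrm{TV}^2$, which I would convert into a lower bound on $D_\alpha$ via the formula $R_\alpha = -\tfrac{1}{1-\alpha}\ln\bigl(1-(1-\alpha)D_\alpha\bigr)$ together with $-\ln(1-x)\leq x/(1-x)$, using the hypothesis on $\varepsilon$ to keep $1-(1-\alpha)D_\alpha$ uniformly away from $0$. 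For the spectral lower bound on $\mathrm{TV}$, pick an $L^2(\pi)$-normalized eigenfunction $\phi$ of $P$ for an eigenvalue $\lambda$ with $|\lambda|=\lambda_*$; since $\pi(\phi) = 0$ and $P^t\phi = \lambda^t\phi$, the triangle inequality gives $|\lambda^t\phi(x)|\leq \|\phi\|_\infty \sum_y|P^t(x,y)-\pi(y)|$. Averaging in $x$ with weight $\pi(x)$ and using $\|\phi\|_\infty^2 \leq 1/\pi_{\min}$ together with $\sum_x \pi(x)|\phi(x)|\geq 1/\|\phi\|_\infty$ (Cauchy--Schwarz applied to $\sum_x \pi(x)\phi(x)^2 = 1$) delivers $\mathrm{TV}(P^t,\Pi) \geq \pi_{\min}\lambda_*^t/2$. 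Combining the two ingredients yields $D_\alpha(P^t\|\Pi) \geq (k_\alpha\pi_{\min}^2/4)\lambda_*^{2t}$, and imposing this to be $<\varepsilon$ produces the stated lower bound, with the prefactor $1/(2\ln(1/(1-\gamma_*)))$ arising because $\mathrm{TV}$ decays like $\lambda_*^t$ and is then squared.

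The \textbf{main obstacle} I expect is the Pinsker-type step for $\alpha \in (0,1)$: Proposition \ref{inequalities with other divergences}(ii) is formulated for $R_\alpha$, yet $D_\alpha \leq R_\alpha$ when $\alpha < 1$, so the lower bound on $R_\alpha$ does not transfer directly. Controlling $1-(1-\alpha)D_\alpha$ uniformly away from $0$ is what is needed to convert it into a bound on $D_\alpha$, and this is exactly the content of the hypothesis $\varepsilon < \min\{1/(2|\alpha-1|), 4\alpha/|\alpha-1|\}$. A parallel bookkeeping difficulty arises on the upper-bound side in linearizing $(1+M)^{\alpha-1}-1$ sharply enough to reproduce the $\alpha/(\alpha-1)$-type constants in the statement, but that is again a matter of tracking the smallness of $M = \lambda_*^t/\pi_{\min}$ permitted by the same hypothesis.
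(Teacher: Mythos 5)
Your overall architecture is the same as the paper's: a uniform spectral estimate $|P^t(x,y)/\pi(y)-1|\le \lambda_*^t/\sqrt{\pi(x)\pi(y)}$ for the upper bound, and Pinsker plus the spectral lower bound $\mathrm{TV}(P^t,\Pi)\ge \pi_{\min}\lambda_*^t/2$ for the lower bound. Your lower-bound half is sound: the chain $D_\alpha\ge R_\alpha\ge R_1\ge 2\,\mathrm{TV}^2$ for $\alpha>1$, the conditional conversion of $R_\alpha\ge(\alpha/2)\mathrm{TV}^2$ into $D_\alpha\ge(\alpha/4)\mathrm{TV}^2$ using $-\ln(1-x)\le x/(1-x)$ and the hypothesis $\varepsilon<\tfrac{1}{2(1-\alpha)}$, and your self-contained eigenfunction derivation of $\mathrm{TV}(P^t,\Pi)\ge\pi_{\min}\lambda_*^t/2$ (the paper cites Levin--Peres for this) all reproduce or slightly improve the stated constants. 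The $\alpha>1$ upper bound also goes through: with $M\le\frac{(\alpha-1)\varepsilon}{2\alpha}$ one checks $(\alpha-1)M<1/4$ from $\varepsilon<\tfrac{1}{2(\alpha-1)}$, so $\frac{(1+M)^{\alpha-1}-1}{\alpha-1}\le\frac{e^{(\alpha-1)M}-1}{\alpha-1}\le 2M<\varepsilon$.

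The genuine gap is the upper bound for $0<\alpha<1$. Your chord from $(0,0)$ to $(1+M,(1+M)^\alpha)$ gives $\widetilde D_\alpha(P^t(x,\cdot)\|\pi)\le\frac{1-(1+M)^{\alpha-1}}{1-\alpha}$, and this quantity equals $M+O(M^2)$ with \emph{no} factor of $\alpha$; it cannot be linearized into anything smaller than $\approx M$. Consequently your constraint is essentially $M\le\varepsilon$, i.e.\ $t\gtrsim\frac{1}{\gamma_*}\ln\frac{1}{\pi_{\min}\varepsilon}$, whereas the theorem asserts the strictly smaller threshold $\frac{1}{\gamma_*}\bigl(\ln\frac{8\alpha}{(1-\alpha)\pi_{\min}}+\ln\frac{1}{\varepsilon}\bigr)$, which for $\frac{8\alpha}{1-\alpha}<1$ (i.e.\ $\alpha<1/9$) is not implied by $M\le\varepsilon$. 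The missing ingredient is the local Lipschitz constant $\alpha$ of $u\mapsto u^\alpha$ near $u=1$: the paper writes $1-h^\alpha=h^\alpha\bigl((1+g/h)^\alpha-1\bigr)$ with $g=1-h$, $h\in[1/2,3/2]$ for $t>\frac{1}{\gamma_*}\ln\frac{2}{\pi_{\min}}$, and bounds $(1+|g/h|)^\alpha-1\le e^{\alpha|g/h|}-1\lesssim \alpha M$, which is where the factor $\alpha$ inside the logarithm comes from. Your method is repairable within your own framework --- take the secant of the concave map $u\mapsto u^\alpha$ over $[1-M,1+M]$ rather than over $[0,1+M]$, which together with $\sum_y\pi(y)r_t(x,y)=1$ yields $\widetilde D_\alpha\le\frac{1}{1-\alpha}\bigl(1-\tfrac{(1+M)^\alpha+(1-M)^\alpha}{2}\bigr)=O(\alpha M^2)$ --- but as written the step ``matching the constant $8\alpha/((1-\alpha)\pi_{\min})$'' fails for small $\alpha$, and this is a loss inherent to the chord you chose, not mere bookkeeping.
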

\begin{proof}
    $P$ is irreducible and aperiodic implies $\gamma_*>0$.
    According to \parencite[Eq.12.13]{levin2017markov}, we have 
    \begin{equation}
        \left |\dfrac{P^t(x,y)}{\pi(y)}-1\right|\leq \dfrac{(1-\gamma_*)^t}{\sqrt{\pi(x)\pi(y)}}\leq \dfrac{e^{-\gamma_*t}}{\pi_{\text{min}}}.\label{eq:original upper bound}
    \end{equation}
    Then, for $\alpha\in (0,1)\cup (1,+\infty)$, we rewrite $D_{\alpha}(P^t\|\Pi)$ as 
    \begin{align*}
        D_{\alpha}(P^t\|\Pi)&=\dfrac{1}{\alpha-1}\sum_x\pi(x)\sum_y \Pi(x,y)\left(\left(\dfrac{P^t(x,y)}{\Pi(x,y)}\right)^\alpha-1\right)\nonumber\\
        &=\dfrac{1}{\alpha-1}\sum_x\pi(x)\sum_y\pi(y)\left(\left(\dfrac{P^t(x,y)}{\pi(y)}\right)^\alpha-1\right).\label{eq:rewrite D(Pt,pi)}
    \end{align*}
    Note that for $x\geq-1$, when $\alpha>1$, $\left|(1+x)^\alpha-1\right|\leq (1+|x|)^\alpha-1$. Moreover, for $x,y>0$, it can be easily verified that $\left(1+\frac{x}{y}\right)^y<e^x$. Therefore, using \eqref{eq:original upper bound}, for $\alpha>1$, we have
    \begin{align*}
        D_{\alpha}(P^t\|\Pi)&=\dfrac{1}{\alpha-1}\sum_x\pi(x)\sum_y\pi(y)\left(\left(1+\dfrac{P^t(x,y)}{\pi(y)}-1\right)^\alpha-1\right)\\
        &\leq \dfrac{1}{\alpha-1}\sum_x\pi(x)\sum_y\pi(y)\left(\left(1+\left|\dfrac{P^t(x,y)}{\pi(y)}-1\right|\right)^\alpha-1\right)\\
        &\leq \dfrac{1}{\alpha-1}\sum_x\pi(x)\sum_y\pi(y)\left(\exp \left(\alpha\left|\dfrac{P^t(x,y)}{\pi(y)}-1\right|\right)-1\right)\\
        &\leq \dfrac{1}{\alpha-1}\left(\exp \left(\dfrac{\alpha}{\pi_{\mathrm{min}}}e^{-\gamma_*t}\right)-1\right).
    \end{align*}
    Therefore, in order to have $D_{\alpha}(P^t\|\Pi)<\varepsilon$, we have
    \begin{align*}
        \bar{t}_{\mathrm{mix}}^{(D_\alpha)}(\varepsilon,P)&\leq \dfrac{1}{\gamma_*}\left(\ln \dfrac{\alpha}{\pi_{\mathrm{min}}}+\ln \dfrac{1}{\ln \left(1+(\alpha-1)\varepsilon\right)}\right)\\
        &\leq \dfrac{1}{\gamma_*}\left(\ln \dfrac{2\alpha}{(\alpha-1)\pi_{\mathrm{min}}}+\ln \dfrac{1}{\varepsilon}\right), \quad \alpha>1,
    \end{align*}
    where the last inequality comes from $\varepsilon<\frac{1}{\alpha-1}$ and $\ln (1+x)\geq x-\frac{x^2}{2}\geq \frac{x}{2}$ for $x<1$.
    
    When $0<\alpha<1$, denote $h_t=h_t(x,y):=P^t(x,y)/\pi(y)\geq 0$ and $g_t=g_t(x,y):=1-h_t(x,y)\leq 1$ as shorthand notation. From \eqref{eq:original upper bound}, we directly have for all $x,y\in\mathcal{X}$,
    \begin{equation}
        |g_t(x,y)|\leq \dfrac{e^{-\gamma_*t}}{\pi_{\mathrm{min}}}, \quad 1-\dfrac{e^{-\gamma_*t}}{\pi_{\mathrm{min}}}\leq h_t(x,y) \leq 1+\dfrac{e^{-\gamma_*t}}{\pi_{\mathrm{min}}}.\label{eq:x_t,y_t}
    \end{equation}
    Therefore, for $t>\frac{1}{\gamma_*}\ln \frac{2}{\pi_{\mathrm{min}}}$, we have $\frac{1}{2}\leq h_t(x,y)\leq \frac{3}{2}$, in which case
    \begin{align*}
        D_{\alpha}(P^t\|\Pi)&=\dfrac{1}{\alpha-1}\sum_x\pi(x)\sum_y\pi(y)\left(\left(\dfrac{P^t(x,y)}{\pi(y)}\right)^\alpha-1\right)\\
        &=\dfrac{1}{1-\alpha}\sum_x\pi(x)\sum_y\pi(y)\left(1-h_t^\alpha(x,y)\right)\\
        &=\dfrac{1}{1-\alpha}\sum_x\pi(x)\sum_y\pi(y)\left(\left(g_t(x,y)+h_t(x,y)\right)^\alpha-h_t^\alpha(x,y)\right)\\
        &=\dfrac{1}{1-\alpha}\sum_x\pi(x)\sum_y\pi(y)h_t^{\alpha}\left(\left(1+\dfrac{g_t}{h_t}\right)^{\alpha}-1\right)\\
        &\leq \dfrac{1}{1-\alpha}\sum_x\pi(x)\sum_y\pi(y)h_t^{\alpha}\left(\left(1+\left|\dfrac{g_t}{h_t}\right|\right)^{\alpha}-1\right)\\
        &\leq \dfrac{1}{1-\alpha}\sum_x\pi(x)\sum_y\pi(y)h_t^{\alpha}\left(\exp\left(\alpha \left|\dfrac{g_t}{h_t}\right|\right)-1\right)\\
        &\leq \dfrac{1}{1-\alpha}\left(\dfrac{3}{2}\right)^{\alpha}\left(\exp\left(\dfrac{2\alpha}{\pi_{\mathrm{min}}}e^{-\gamma_*t}\right)-1\right)\\
        &< \dfrac{2}{1-\alpha}\left(\exp\left(\dfrac{2\alpha}{\pi_{\mathrm{min}}}e^{-\gamma_*t}\right)-1\right),
    \end{align*}
    where the second inequality comes from $\left(1+\frac{x}{y}\right)^y<e^x$ for $x,y>0$, the third inequality comes from \eqref{eq:x_t,y_t}, and the last inequality comes from $\alpha<1$. Hence, in order to have the last term above smaller than $\varepsilon$, apart from $t>\frac{1}{\gamma_*}\ln \frac{2}{\pi_{\mathrm{min}}}$, we still need 
    \begin{equation*}
        t>\dfrac{1}{\gamma_*}\left(\ln \dfrac{2\alpha}{\pi_{\mathrm{min}}}+\ln \dfrac{1}{\ln \left(1+\frac{1-\alpha}{2}\varepsilon\right)}\right),
    \end{equation*}
    which implies 
    \begin{align*}
        \bar{t}_{\mathrm{mix}}^{(D_\alpha)}(\varepsilon,P)&\leq \max \left\{\dfrac{1}{\gamma_*}\ln \dfrac{2}{\pi_{\mathrm{min}}}, \dfrac{1}{\gamma_*}\left(\ln \dfrac{2\alpha}{\pi_{\mathrm{min}}}+\ln \dfrac{1}{\ln \left(1+\frac{1-\alpha}{2}\varepsilon\right)}\right)\right\}\\
        &\leq \dfrac{1}{\gamma_*} \max \left\{\ln \dfrac{2}{\pi_{\mathrm{min}}}, \left(\ln \dfrac{8\alpha}{(1-\alpha)\pi_{\mathrm{min}}}+\ln \dfrac{1}{\varepsilon}\right)\right\}\\
        &=\dfrac{1}{\gamma_*}\left(\ln \dfrac{8\alpha}{(1-\alpha)\pi_{\mathrm{min}}}+\ln \dfrac{1}{\varepsilon}\right), \quad 0<\alpha<1,
    \end{align*}
    where the second inequality comes from $\varepsilon<\frac{1}{1-\alpha}$ and $\ln (1+x)\geq \frac{x}{2}$ for $x<1$, and the equality comes from $\varepsilon<\frac{4\alpha}{1-\alpha}$.
    
    On the other hand, we first recall the Pinsker's inequality Proposition \ref{inequalities with other divergences} which implies for $\alpha> 1$,
    \begin{equation}
        \dfrac{1}{2}\mathrm{TV}^2(P^t,\Pi)\leq R_{\alpha}(P^t\|\Pi)=\dfrac{1}{\alpha-1}\ln \left(1+(\alpha-1)D_{\alpha}(P^t\|\Pi)\right),\label{Pinkser inequality for alpha>1}
    \end{equation}
    and for $0<\alpha<1$,
    \begin{equation}
        \dfrac{\alpha}{2}\mathrm{TV}^2(P^t,\Pi)\leq R_{\alpha}(P^t\|\Pi)=\dfrac{1}{\alpha-1}\ln \left(1+(\alpha-1)D_{\alpha}(P^t\|\Pi)\right).\label{Pinsker inequality for alpha<1}
    \end{equation}
    Then we use the statement in \parencite[Remark 12.6]{levin2017markov} that if $\lambda$ is an eigenvalue of $P$ and $\lambda\neq 1$, there exists
    \begin{equation}
        d^{(\mathrm{TV})}(t):=\max_x \dfrac{1}{2}\sum_y\left|P^t(x,y)-\pi(y)\right|\geq \dfrac{1}{2}|\lambda|^t,\label{eq:original lower bound}
    \end{equation}
    which yields
    \begin{equation*}
        \mathrm{TV}(P^t\|\Pi)=\dfrac{1}{2}\sum_{x,y}\pi(x)\left|P^t(x,y)-\pi(y)\right|\geq \pi_{\mathrm{min}}d^{(\mathrm{TV})}(t)\geq \dfrac{\pi_{\mathrm{min}}}{2}|\lambda|^t,
    \end{equation*}
    since $\lambda$ is arbitrary, we have $\mathrm{TV}(P^t\|\Pi)\geq \frac{\pi_{\mathrm{min}}}{2}(1-\gamma_*)^t$. Combining with the inequalities \eqref{Pinkser inequality for alpha>1} and \eqref{Pinsker inequality for alpha<1} above, we obtain that when $\alpha>1$,
    \begin{align*}
        D_{\alpha}(P^t\|\Pi)&\geq \dfrac{1}{\alpha-1}\left(\exp \left(\dfrac{\alpha-1}{2}\mathrm{TV}^2(P^t,\Pi)\right)-1\right)\\
        &\geq \dfrac{1}{\alpha-1}\left(\exp \left(\dfrac{(\alpha-1)\pi_{\mathrm{min}}^2}{8}(1-\gamma_*)^{2t}\right)-1\right),
    \end{align*}
    and when $0<\alpha <1$, similarly we have
    \begin{equation*}
         D_{\alpha}(P^t\|\Pi)\geq \dfrac{1}{\alpha-1}\left(\exp \left(\dfrac{\alpha(\alpha-1)\pi_{\mathrm{min}}^2}{8}(1-\gamma_*)^{2t}\right)-1\right).
    \end{equation*}
    Therefore, we can obtain
    \begin{align*}
        \bar{t}_{\mathrm{mix}}^{(D_\alpha)}(\varepsilon,P)&\geq \dfrac{1}{2\ln \frac{1}{1-\gamma_*}}\left(\ln \dfrac{\pi_{\mathrm{min}}^2(\alpha-1)}{8}+\ln\dfrac{1}{\ln (1+(\alpha-1)\varepsilon)}\right)\\
        &\geq \dfrac{1}{2\ln \frac{1}{1-\gamma_*}}\left(\ln \dfrac{\pi_{\mathrm{min}}^2}{8}+\ln\dfrac{1}{\varepsilon}\right), \quad \alpha>1,
    \end{align*}
    where the second inequality comes from $\ln (1+x)\leq x$. Similary for $0<\alpha<1$, we have 
    \begin{align*}
        \bar{t}_{\mathrm{mix}}^{(D_\alpha)}(\varepsilon,P)&\geq
        \dfrac{1}{2\ln \frac{1}{1-\gamma_*}}\left(\ln \dfrac{\pi_{\mathrm{min}}^2\alpha(1-\alpha)}{8}+\ln \left|\dfrac{1}{\ln (1+(\alpha-1)\varepsilon)}\right|\right)\\
        &\geq \dfrac{1}{2\ln \frac{1}{1-\gamma_*}}\left(\ln \dfrac{\pi_{\mathrm{min}}^2\alpha}{16}+\ln \dfrac{1}{\varepsilon}\right), \quad 0<\alpha<1,
    \end{align*}
    where the second inequality comes from $\ln (1+x)\geq \frac{x}{1+x}$ for $x>-1$ and $\varepsilon<\frac{1}{2(1-\alpha)}$.
\end{proof}

From the definition of $\alpha$-divergence and R\'enyi divergence of Markov chains, we see that 
\begin{equation*}
        R_{\alpha}(P^t\|\Pi)<\varepsilon \iff D_{\alpha}(P^t\|\Pi)<\dfrac{1}{\alpha-1}\left(e^{(\alpha-1)\varepsilon}-1\right),
\end{equation*}
and hence the $\pi$-weighted $\alpha$-divergence mixing time and the R\'enyi divergence mixing time are related via 
    \begin{equation}\label{eq:relationship between t D_alpha and t R_alpha}
       \bar{t}_{\mathrm{mix}}^{(R_\alpha)}(\varepsilon,P)=\bar{t}_{\mathrm{mix}}^{(D_\alpha)}\left(\dfrac{1}{\alpha-1}\left(e^{(\alpha-1)\varepsilon}-1\right),P\right).
    \end{equation}
Thus, to give spectral bounds on $\bar{t}_{\mathrm{mix}}^{(R_\alpha)}(\varepsilon,P)$, using \eqref{eq:relationship between t D_alpha and t R_alpha} yields the following corollary of Theorem \ref{upper and lower bound of D_alpha}:
\begin{corollary}\label{corollary of t D_alpha}
Let $P\in\mathcal{L}(\pi)$ be an irreducible, aperiodic, reversible transition matrix and we denote by $\pi_{\mathrm{min}}:=\min_{x}\pi(x)>0$. For $\alpha\in (0,1)\cup (1,+\infty)$, if $\varepsilon<\min \left\{\frac{1}{2|\alpha-1|}, \frac{4\alpha}{|\alpha-1|}\right\}$, we have
    \begin{align*}
       \bar{t}_{\mathrm{mix}}^{(R_\alpha)}(\varepsilon,P)&\leq 
       \begin{cases}
           \dfrac{1}{\gamma_*}\left(\ln \dfrac{2\alpha}{\pi_{\mathrm{min}}}+\ln \dfrac{1}{e^{(\alpha-1)\varepsilon}-1}\right), &\quad \textrm{if }\, \alpha>1,\\
           \dfrac{1}{\gamma_*}\left(\ln \dfrac{8\alpha}{\pi_{\mathrm{min}}}+\ln \dfrac{1}{1-e^{(\alpha-1)\varepsilon}}\right), &\quad \textrm{if }\, 0<\alpha<1,
       \end{cases}
    \end{align*}
    and 
    \begin{align*}
       \bar{t}_{\mathrm{mix}}^{(R_\alpha)}(\varepsilon,P)&\geq  
       \begin{cases}
           \dfrac{1}{2\ln \frac{1}{1-\gamma_*}}\left(\ln \dfrac{(\alpha-1)\pi_{\mathrm{min}}^2}{8}+\ln\dfrac{1}{e^{(\alpha-1)\varepsilon}-1}\right), &\quad \textrm{if }\, \alpha>1,\\
           \dfrac{1}{2\ln \frac{1}{1-\gamma_*}}\left(\ln \dfrac{\pi_{\mathrm{min}}^2\alpha(1-\alpha)}{16}+\ln \dfrac{1}{1-e^{(\alpha-1)\varepsilon}}\right), &\quad \textrm{if }\, 0<\alpha<1,
       \end{cases}
    \end{align*}
    where $\gamma_*$ is the absolute spectral gap of $P$.    
\end{corollary}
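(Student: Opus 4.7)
The plan is to derive the bounds by direct substitution into Theorem \ref{upper and lower bound of D_alpha} via the identity
\begin{equation*}
    \bar{t}_{\mathrm{mix}}^{(R_\alpha)}(\varepsilon,P)=\bar{t}_{\mathrm{mix}}^{(D_\alpha)}\!\left(\varepsilon',P\right), \qquad \varepsilon' := \dfrac{1}{\alpha-1}\!\left(e^{(\alpha-1)\varepsilon}-1\right),
\end{equation*}
recorded in \eqref{eq:relationship between t D_alpha and t R_alpha}. This identity holds because $t \mapsto \frac{1}{\alpha-1}\ln(1+(\alpha-1)t)$ is a strictly increasing bijection of the relevant interval, so the sublevel sets of $R_\alpha(P^t\|\Pi)$ and $D_\alpha(P^t\|\Pi)$ are matched under this change of threshold. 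Thus the whole corollary reduces to inserting $\varepsilon'$ in place of $\varepsilon$ in the four bounds of Theorem \ref{upper and lower bound of D_alpha} and simplifying the logarithms.

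First I would carry out the algebra for $\alpha>1$. Here $\varepsilon'=(e^{(\alpha-1)\varepsilon}-1)/(\alpha-1)>0$, so $\ln(1/\varepsilon') = \ln(\alpha-1) + \ln(1/(e^{(\alpha-1)\varepsilon}-1))$. Substituting into the upper bound of Theorem \ref{upper and lower bound of D_alpha} merges the $\ln(\alpha-1)$ with the $-\ln(\alpha-1)$ hidden in $\ln\frac{2\alpha}{(\alpha-1)\pi_{\min}}$, leaving $\frac{1}{\gamma_*}(\ln\frac{2\alpha}{\pi_{\min}}+\ln\frac{1}{e^{(\alpha-1)\varepsilon}-1})$, exactly as claimed. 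The same substitution in the lower bound simply shifts $\ln\frac{\pi_{\min}^2}{8}$ to $\ln\frac{(\alpha-1)\pi_{\min}^2}{8}$.

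Next I would treat $0<\alpha<1$. Here both $\alpha-1<0$ and $e^{(\alpha-1)\varepsilon}-1<0$, so $\varepsilon' = (1-e^{(\alpha-1)\varepsilon})/(1-\alpha)>0$, and $\ln(1/\varepsilon')=\ln(1-\alpha)+\ln(1/(1-e^{(\alpha-1)\varepsilon}))$. Inserting this into the $0<\alpha<1$ row of Theorem \ref{upper and lower bound of D_alpha} cancels the $\ln(1-\alpha)$ in the upper-bound constant and inserts a $\ln(1-\alpha)$ into the lower-bound constant, yielding the two displayed expressions.

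The main (small) obstacle is checking that the assumption $\varepsilon<\min\{\frac{1}{2|\alpha-1|},\frac{4\alpha}{|\alpha-1|}\}$ on $\varepsilon$ in the corollary forces $\varepsilon'$ to satisfy the corresponding hypothesis of Theorem \ref{upper and lower bound of D_alpha}, namely $\varepsilon'<\min\{\frac{1}{2|\alpha-1|},\frac{4\alpha}{|\alpha-1|}\}$. This is a straightforward monotonicity check: for $\alpha>1$ one has $\varepsilon'\leq e^{(\alpha-1)\varepsilon}\varepsilon$ by the mean value theorem applied to $s\mapsto(e^{(\alpha-1)s}-1)/(\alpha-1)$ on $[0,\varepsilon]$, and the smallness assumption on $\varepsilon$ keeps $e^{(\alpha-1)\varepsilon}$ bounded by a constant close to $1$; for $0<\alpha<1$ one instead has $\varepsilon'\leq \varepsilon$ directly since $(1-e^{-s})/s\leq 1$. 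With this verification in place, Theorem \ref{upper and lower bound of D_alpha} applies and the corollary follows.
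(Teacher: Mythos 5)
Your route is exactly the paper's: the paper's entire proof of this corollary is the substitution $\varepsilon\mapsto\varepsilon'=\frac{1}{\alpha-1}(e^{(\alpha-1)\varepsilon}-1)$ into Theorem \ref{upper and lower bound of D_alpha} via \eqref{eq:relationship between t D_alpha and t R_alpha}, and your logarithm bookkeeping in all four cases is correct (the $\ln(\alpha-1)$, resp.\ $\ln(1-\alpha)$, terms migrate exactly as you describe).

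The one place your write-up does not go through as stated is the hypothesis transfer for $\alpha>1$. You claim $\varepsilon'<\min\{\frac{1}{2(\alpha-1)},\frac{4\alpha}{\alpha-1}\}$, justified by $\varepsilon'\leq e^{(\alpha-1)\varepsilon}\varepsilon$ with $e^{(\alpha-1)\varepsilon}$ ``close to $1$''. But under the corollary's assumption one only knows $(\alpha-1)\varepsilon<\tfrac12$, so $e^{(\alpha-1)\varepsilon}$ can be as large as $e^{1/2}\approx 1.65$, and for $\varepsilon$ near $\frac{1}{2(\alpha-1)}$ one gets $\varepsilon'\approx\frac{e^{1/2}-1}{\alpha-1}>\frac{1}{2(\alpha-1)}$; the stated hypothesis of Theorem \ref{upper and lower bound of D_alpha} therefore fails for $\varepsilon'$, and a black-box application is not licensed. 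The corollary is nevertheless true, because the theorem's hypothesis is not tight for $\alpha>1$: inspecting its proof, the upper bound only uses $\varepsilon<\frac{1}{\alpha-1}$ (to justify $\ln(1+x)\geq x/2$ for $x=(\alpha-1)\varepsilon<1$) and the lower bound uses no smallness condition at all. Since $\varepsilon'<\frac{1}{\alpha-1}$ is equivalent to $(\alpha-1)\varepsilon<\ln 2$, which is implied by $(\alpha-1)\varepsilon<\tfrac12$, the substitution is legitimate once one either cites this weaker effective hypothesis or reruns the two relevant lines of the theorem's proof at $\varepsilon'$. Your treatment of $0<\alpha<1$ via $\varepsilon'\leq\varepsilon$ is fine as is.
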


\begin{remark}
    To study the order of $\gamma_*$ in the upper and lower bound of  $\bar{t}_{\mathrm{mix}}^{(D_\alpha)}(\varepsilon,P)$ in Theorem \ref{upper and lower bound of D_alpha} and $\bar{t}_{\mathrm{mix}}^{(R_\alpha)}(\varepsilon,P)$ in Corollary \ref{corollary of t D_alpha}, we can see that if $\gamma_*$ is close to $0$, we have
    \begin{equation*}
        \lim_{\gamma_*\rightarrow 0^+}\dfrac{1}{\gamma_*}\bigg /\dfrac{1}{\ln \frac{1}{1-\gamma_*}}=1,
    \end{equation*}
    which implies the upper bound and lower bound are approximately the same order with respect to $\gamma_*$ in this situation. Indeed, in many statistical physics models or Markov chains of interest, $\gamma_*=\mathcal{O}(1/N^a)$ for some $a>0$ and $N$ being a problem-dependent parameter which is large.  One example is random walk on $N$-dimensional hypercube which will be explained in Example \ref{example:random walk on hypercube}.
\end{remark}

Next, we recall the classical worst-case mixing times based upon $\alpha$-divergence, R\'enyi divergence and total variation distance to be respectively, for $\varepsilon > 0$,
\begin{align}
     t_{\mathrm{mix}}^{(\mathrm{D_{\alpha}})}(\varepsilon,P)&:=\inf\left\{t>0:d^{(D_{\alpha})}(t)<\varepsilon\right\}, \nonumber \\
     t_{\mathrm{mix}}^{(\mathrm{R_{\alpha}})}(\varepsilon,P)&:=\inf\left\{t>0:d^{(R_{\alpha})}(t)<\varepsilon\right\}, \nonumber \\
     t_{\mathrm{mix}}^{(\mathrm{TV})}(\varepsilon,P)&:=\inf\left\{t>0:d^{(\mathrm{TV})}(t)<\varepsilon\right\},\label{t mix worst case}
\end{align}
where the worst-case divergences are defined as, for $t \in \mathbb{N}$,
\begin{align*}
    d^{(D_{\alpha})}(t)&:=\max_{x\in \mathcal{X}}\widetilde D_{\alpha}(P^t(x,\cdot)\|\pi)=\max_{x\in \mathcal{X}}\dfrac{1}{\alpha-1}\left(\sum_{y\in \mathcal{X}}\pi(y)\left(\dfrac{P^t(x,y)}{\pi(y)}\right)^{\alpha}-1\right),\\
    d^{(R_{\alpha})}(t)&:=\max_{x\in \mathcal{X}}\widetilde R_{\alpha}(P^t(x,\cdot)\|\pi)=\max_{x\in \mathcal{X}}\dfrac{1}{\alpha-1}\ln \left(\sum_{y\in \mathcal{X}}\pi(y)\left(\dfrac{P^t(x,y)}{\pi(y)}\right)^{\alpha}\right), \\
    d^{(\mathrm{TV})}(t)&:=\max_{x\in\mathcal{X}}\left\|P^t(x,\cdot)-\pi\right\|_{\mathrm{TV}}=\max_{x\in\mathcal{X}} \dfrac{1}{2}\sum_{y\in\mathcal{X}}\left|P^t(x,y)-\pi(y)\right|.
\end{align*}
We prove the following relationships between these $\pi$-weighted mixing times and and their worst-case mixing times counterpart, and in particular implies that the $\pi$-weighted mixing times and worst-case mixing times are of the same order under some reasonable assumptions on the absolute spectral gap $\gamma_*$:

\begin{theorem}\label{comparison between TV and classical}
    Let $P\in\mathcal{L}(\pi)$ be an irreducible, aperiodic, $\pi$-reversible transition matrix, and we denote $\pi_{\mathrm{min}}:=\min_{x}\pi(x)>0$, then we have the following relationships:
    \begin{enumerate}[label=(\roman*).]
        \item\label{it:averagemix1} $t_{\mathrm{mix}}^{(\mathrm{TV})}\left(\varepsilon,P\right)$ and $\bar{t}_{\mathrm{mix}}^{(\mathrm{TV})}(\varepsilon,P)$: if $\varepsilon<\frac{\pi_{\mathrm{min}}^3}{4}$, then
        \begin{equation*}
            \dfrac{1-\gamma_*}{2}t_{\mathrm{mix}}^{(\mathrm{TV})}\left(\varepsilon,P\right)\leq \bar{t}_{\mathrm{mix}}^{(\mathrm{TV})}(\varepsilon,P)\leq t_{\mathrm{mix}}^{(\mathrm{TV})}(\varepsilon,P).
        \end{equation*}
  
        \item\label{it:averagemix2} $t_{\mathrm{mix}}^{(D_{\alpha})}\left(\varepsilon,P\right)$ and $\bar{t}_{\mathrm{mix}}^{(D_\alpha)}(\varepsilon,P)$: if $\varepsilon<\varepsilon_0:=\min \left\{\frac{1}{2|\alpha-1|}, \frac{4\alpha}{|\alpha-1|}, \frac{|\alpha-1|\pi_{\mathrm{min}}^6}{128\alpha}, \frac{\alpha |\alpha-1|\pi_{\mathrm{min}}^6}{2048}\right\}$, then 
        \begin{equation*}
            \dfrac{1-\gamma_*}{4}t_{\mathrm{mix}}^{(D_\alpha)}(\varepsilon,P)
            \leq \bar{t}_{\mathrm{mix}}^{(D_\alpha)}(\varepsilon,P)\leq t_{\mathrm{mix}}^{(D_\alpha)}(\varepsilon,P), \quad \alpha\in (0,1)\cup (1,\infty).
        \end{equation*}

        \item\label{it:averagemix3} $t_{\mathrm{mix}}^{(R_{\alpha})}\left(\varepsilon,P\right)$ and $\bar{t}_{\mathrm{mix}}^{(R_\alpha)}(\varepsilon,P)$: if $\varepsilon<\frac{1}{\alpha-1}\ln (1+(\alpha-1)\varepsilon_0)$ with $\varepsilon_0$ defined in item \eqref{it:averagemix2}, then 
        \begin{equation*}
            \dfrac{1-\gamma_*}{4}t_{\mathrm{mix}}^{(R_\alpha)}(\varepsilon,P)
            \leq \bar{t}_{\mathrm{mix}}^{(R_\alpha)}(\varepsilon,P)\leq t_{\mathrm{mix}}^{(R_\alpha)}(\varepsilon,P), \quad \alpha\in (0,1)\cup (1,\infty).
        \end{equation*}
    \end{enumerate}
    
\end{theorem}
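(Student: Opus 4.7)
The plan separates the elementary upper bounds from the more involved lower bounds.

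For the upper bounds $\bar{t}_{\mathrm{mix}}^{\bullet}(\varepsilon, P) \leq t_{\mathrm{mix}}^{\bullet}(\varepsilon, P)$ in all three items, I would exploit that each weighted divergence is a $\pi$-average of its pointwise counterpart and is therefore dominated by the pointwise maximum. Concretely, $\mathrm{TV}(P^t, \Pi) = \sum_x \pi(x)\|P^t(x,\cdot)-\pi\|_{\mathrm{TV}} \leq d^{(\mathrm{TV})}(t)$ and $D_\alpha(P^t\|\Pi) = \sum_x \pi(x)\widetilde D_\alpha(P^t(x,\cdot)\|\pi) \leq d^{(D_\alpha)}(t)$, which settle items (i) and (ii). For item (iii), I would use that $d^{(R_\alpha)}(t) < \varepsilon$ implies $\widetilde D_\alpha(P^t(x,\cdot)\|\pi) < (e^{(\alpha-1)\varepsilon}-1)/(\alpha-1)$ uniformly in $x$, then average over $\pi$ and apply the monotone map $s \mapsto \frac{1}{\alpha-1}\ln(1+(\alpha-1)s)$ to conclude $R_\alpha(P^t\|\Pi) < \varepsilon$.

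For the lower bounds, the strategy is to sandwich the two mixing times between matching spectral inequalities. In item (i), I would combine the upper estimate $d^{(\mathrm{TV})}(t) \leq (1-\gamma_*)^t/(2\pi_{\min})$ (from \cite[Eq.~12.13]{levin2017markov}) with the lower estimate $d^{(\mathrm{TV})}(t) \geq (1-\gamma_*)^t/2$ (from \cite[Remark~12.6]{levin2017markov}, already used in the proof of Theorem~\ref{upper and lower bound of D_alpha}). These yield
\begin{equation*}
t_{\mathrm{mix}}^{(\mathrm{TV})}(\varepsilon) \leq 1 + \frac{\ln(1/(2\varepsilon\pi_{\min}))}{\ln(1/(1-\gamma_*))}, \quad \bar{t}_{\mathrm{mix}}^{(\mathrm{TV})}(\varepsilon) \geq \frac{\ln(\pi_{\min}/(2\varepsilon))}{\ln(1/(1-\gamma_*))},
\end{equation*}
where the second uses $\mathrm{TV}(P^t, \Pi) \geq \pi_{\min} d^{(\mathrm{TV})}(t)$. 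Taking the ratio and invoking the hypothesis $\varepsilon < \pi_{\min}^3/4$ gives $t_{\mathrm{mix}}^{(\mathrm{TV})}(\varepsilon) \leq \frac{2}{1-\gamma_*}\bar{t}_{\mathrm{mix}}^{(\mathrm{TV})}(\varepsilon)$ via a routine logarithmic manipulation.

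Items (ii) and (iii) would proceed in parallel. The analogue $D_\alpha(P^t\|\Pi) \geq \pi_{\min} d^{(D_\alpha)}(t)$ is the starting point, and the spectral upper bound on $d^{(D_\alpha)}(t)$ is a direct worst-case rerun of the computation in the proof of Theorem~\ref{upper and lower bound of D_alpha}, built from $|P^t(x,y)/\pi(y)-1|\leq (1-\gamma_*)^t/\pi_{\min}$. The spectral lower bound on $d^{(D_\alpha)}(t)$ is obtained by chaining Pinsker's inequality from Proposition~\ref{inequalities with other divergences} with $d^{(\mathrm{TV})}(t) \geq (1-\gamma_*)^t/2$; the quadratic nature of Pinsker produces an exponent $(1-\gamma_*)^{2t}$, which is exactly what degrades the factor from $(1-\gamma_*)/2$ to $(1-\gamma_*)/4$. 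Item (iii) then follows from item (ii) by the identity $t_{\mathrm{mix}}^{(R_\alpha)}(\varepsilon) = t_{\mathrm{mix}}^{(D_\alpha)}((e^{(\alpha-1)\varepsilon}-1)/(\alpha-1))$, obtained in the worst-case exactly as eq.~\eqref{eq:relationship between t D_alpha and t R_alpha} was obtained in the $\pi$-weighted case, so the $\varepsilon$-threshold of item (iii) is the image of the threshold of item (ii) under that monotone substitution.

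The main obstacle is twofold. First, items (ii) and (iii) force a separate treatment of $\alpha > 1$ and $\alpha \in (0,1)$, since the sign of $\alpha-1$ flips the monotonicity and convexity of the nonlinear maps and necessitates a parallel set of inequalities (as in Theorem~\ref{upper and lower bound of D_alpha}). Second, the precise $\varepsilon$-thresholds such as $\pi_{\min}^6/(\alpha|\alpha-1|)$ must be calibrated to absorb in a single logarithmic inequality the $\pi_{\min}$ factor from the dropping-to-maximum step, the additional $\pi_{\min}^2$ introduced by Pinsker, and the $\alpha$-dependent constants, in order to deliver the clean $(1-\gamma_*)/c$ factor.
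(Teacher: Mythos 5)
Your proposal is correct and follows essentially the same route as the paper: averaging versus $\pi_{\min}$-weighted comparison between $\bar{t}_{\mathrm{mix}}$ and $t_{\mathrm{mix}}$, matching spectral upper/lower bounds (the lower one via Pinsker, whose quadratic form produces the $(1-\gamma_*)^{2t}$ decay and hence the degradation from $1/2$ to $1/4$), and transfer of item (iii) from item (ii) by the monotone reparametrization $\varepsilon \mapsto \frac{1}{\alpha-1}(e^{(\alpha-1)\varepsilon}-1)$. The only cosmetic difference is that you bound $d^{(D_\alpha)}(t)$ directly by a worst-case rerun of the spectral computation, whereas the paper reuses Theorem \ref{upper and lower bound of D_alpha} through the inequality $t_{\mathrm{mix}}^{(D_\alpha)}(\varepsilon,P)\leq \bar{t}_{\mathrm{mix}}^{(D_\alpha)}(\pi_{\mathrm{min}}\varepsilon,P)$; both are compatible with the stated $\varepsilon$-thresholds.
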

\begin{proof}
    Item \ref{it:averagemix1}: Recalling the definition of $\mathrm{TV}(P^t,\Pi)$ in Definition \ref{other divergences}, we have 
    \begin{align*}
        \mathrm{TV}(P^t,\Pi)&=\dfrac{1}{2}\sum_x\pi(x)\sum_y\left|P^t(x,y)-\pi(y)\right|\\
        &\leq \sum_x\pi(x)\left(\max_x\dfrac{1}{2}\sum_y\left|P^t(x,y)-\pi(y)\right|\right)\\
        &=d^{(\mathrm{TV})}(t),
    \end{align*}
    and 
    \begin{align*}
        \mathrm{TV}(P^t,\Pi)&=\dfrac{1}{2}\sum_x\pi(x)\sum_y\left|P^t(x,y)-\pi(y)\right|\\
        &\geq \pi_{\mathrm{min}}d^{(\mathrm{TV})}(t),
    \end{align*}
    therefore we can obtain
    \begin{equation}\label{eq:connection between TV and d(t) initial}
        t_{\mathrm{mix}}^{(\mathrm{TV})}\left(\dfrac{\varepsilon}{\pi_{\mathrm{min}}},P\right)\leq \bar{t}_{\mathrm{mix}}^{(\mathrm{TV})}(\varepsilon,P)\leq t_{\mathrm{mix}}^{(\mathrm{TV})}(\varepsilon,P).
    \end{equation}
    According to \parencite[Theorem 12.4, 12.5]{levin2017markov}, there exists
    \begin{equation}\label{eq:upper and lower bound for classical}
        \left(\dfrac{1}{\gamma_*}-1\right)\ln \dfrac{1}{2\varepsilon}
        \leq t_{\mathrm{mix}}^{(\mathrm{TV})}(\varepsilon,P)\leq \dfrac{1}{\gamma_*} \ln \dfrac{1}{\varepsilon \pi_{\mathrm{min}}},
    \end{equation}
    which implies
    \begin{align*}
        t_{\mathrm{mix}}^{(\mathrm{TV})}\left(\dfrac{\varepsilon}{\pi_{\mathrm{min}}},P\right) \geq  \left(\dfrac{1}{\gamma_*}-1\right)\ln \dfrac{\pi_{\mathrm{min}}}{2\varepsilon}
        &\geq (1-\gamma_*)\left(1+\dfrac{\ln \frac{\pi_{\mathrm{min}}^2}{2}}{\ln \frac{1}{\varepsilon}-\ln \pi_{\mathrm{min}}}\right)t_{\mathrm{mix}}^{(\mathrm{TV})}(\varepsilon,P)\\
        &\geq \dfrac{1-\gamma_*}{2}t_{\mathrm{mix}}^{(\mathrm{TV})}(\varepsilon,P),
    \end{align*}
    where the last inequality is from $\varepsilon<\dfrac{\pi_{\mathrm{min}}^3}{4}$. Plugging into \eqref{eq:connection between TV and d(t) initial}, we have
    \begin{equation*}
         \dfrac{1-\gamma_*}{2}t_{\mathrm{mix}}^{(\mathrm{TV})}\left(\varepsilon,P\right)\leq \bar{t}_{\mathrm{mix}}^{(\mathrm{TV})}(\varepsilon,P)\leq t_{\mathrm{mix}}^{(\mathrm{TV})}(\varepsilon,P).
    \end{equation*}
    
    Item \ref{it:averagemix2}: Similar to analysis in item \ref{it:averagemix1}, we have
    \begin{equation*}        t_{\mathrm{mix}}^{(D_\alpha)}\left(\dfrac{\varepsilon}{\pi_{\mathrm{min}}},P\right)\leq \bar{t}_{\mathrm{mix}}^{(D_\alpha)}(\varepsilon,P)\leq t_{\mathrm{mix}}^{(D_\alpha)}(\varepsilon,P).
    \end{equation*}
    Therefore, recalling Theorem \ref{upper and lower bound of D_alpha}, for $\alpha>1$ we have 
    \begin{align*}
        \bar{t}_{\mathrm{mix}}^{(D_\alpha)}(\varepsilon,P)\leq t_{\mathrm{mix}}^{(D_\alpha)}(\varepsilon,P)&\leq \bar{t}_{\mathrm{mix}}^{(D_\alpha)}(\pi_{\mathrm{min}}\varepsilon,P)
        \leq \dfrac{1}{\gamma_*}\left(\ln \dfrac{2\alpha}{(\alpha-1)\pi_{\mathrm{min}}^2}+\ln \dfrac{1}{\varepsilon}\right)\\
        &\leq \bar{t}_{\mathrm{mix}}^{(D_\alpha)}(\varepsilon,P)\dfrac{2\ln \frac{1}{1-\gamma_*}}{\gamma_*}\dfrac{\ln \frac{1}{\varepsilon}+\ln \frac{2\alpha}{(\alpha-1)\pi_{\mathrm{min}}^2}}{\ln \frac{1}{\varepsilon}+\ln \frac{\pi_{\mathrm{min}}^2}{8}}\\
        &\leq \bar{t}_{\mathrm{mix}}^{(D_\alpha)}(\varepsilon,P)\cdot \dfrac{4}{1-\gamma_*}, \quad \alpha>1,
    \end{align*}
    where the last inequality comes from $\ln \frac{1}{1-\gamma_*}\leq \frac{\gamma_*}{1-\gamma_*}$ and $\varepsilon<\frac{(\alpha-1)\pi_{\mathrm{min}}^6}{128\alpha}$. For $0<\alpha<1$, we also have 
    \begin{align*}
        \bar{t}_{\mathrm{mix}}^{(D_\alpha)}(\varepsilon,P)\leq t_{\mathrm{mix}}^{(D_\alpha)}(\varepsilon,P)&\leq \bar{t}_{\mathrm{mix}}^{(D_\alpha)}(\pi_{\mathrm{min}}\varepsilon,P)
        \leq \dfrac{1}{\gamma_*}\left(\ln \dfrac{8\alpha}{(1-\alpha)\pi_{\mathrm{min}}^2}+\ln \dfrac{1}{\varepsilon}\right)\\
        &\leq \bar{t}_{\mathrm{mix}}^{(D_\alpha)}(\varepsilon,P)\dfrac{2\ln \frac{1}{1-\gamma_*}}{\gamma_*}\dfrac{\ln \frac{1}{\varepsilon}+\ln \frac{8\alpha}{(1-\alpha)\pi_{\mathrm{min}}^2}}{\ln \frac{1}{\varepsilon}+\ln \frac{\alpha\pi_{\mathrm{min}}^2}{16}}\\
        &\leq \bar{t}_{\mathrm{mix}}^{(D_\alpha)}(\varepsilon,P)\cdot \dfrac{4}{1-\gamma_*}, \quad 0<\alpha<1,
    \end{align*}
    where in the last inequality we have used $\varepsilon<\frac{\alpha(1-\alpha)\pi_{\mathrm{min}}^6}{2048}$.

    Item \ref{it:averagemix3}: Similar to \eqref{eq:relationship between t D_alpha and t R_alpha}, since the mapping $t\mapsto \frac{1}{\alpha-1}\ln (1+(\alpha-1)t)$ is increasing, we have 
    \begin{equation*}
        d^{(D_{\alpha})}(t)<\varepsilon \iff d^{(R_{\alpha})}(t)<\dfrac{1}{\alpha-1}\left(e^{(\alpha-1)\varepsilon}-1\right),
    \end{equation*}
    which implies 
    \begin{equation*}
        t_{\mathrm{mix}}^{(R_\alpha)}(\varepsilon,P)=t_{\mathrm{mix}}^{(D_\alpha)}\left(\dfrac{1}{\alpha-1}\left(e^{(\alpha-1)\varepsilon}-1\right),P\right).
    \end{equation*}
    Therefore, plugging into item \ref{it:averagemix2} we get the result.
\end{proof}

Theorem \ref{comparison between TV and classical} illustrates that the mixing time under total variation distance and worst case distance are of the same order. As an illustration for Theorem \ref{upper and lower bound of D_alpha} and \ref{comparison between TV and classical}, we consider the following lazy random walk on high-dimensional hypercube.

\begin{example}
[Random walk on the $N$-dimensional hypercube]\label{example:random walk on hypercube}

    Consider $\mathcal{X}=\{-1,1\}^N$ where $N>1$. In this state space, at each step the chain uniformly select a coordinate from $N$ coordinates and update the chosen coordinate to $-1$ and $1$ with probability $\frac{1}{2}$ respectively. For $\bold x, \bold y\in \mathcal{X}$, we denote $\bold x=(x_1,x_2,...,x_N)$ and $\bold y=(y_1,y_2,...,y_N)$, and the transition matrix $P$ of the random walk on $\mathcal{X}$ can be written as
    \begin{equation}\label{eq:transition matrix of random walk}
        P(\bold x,\bold y)=\sum_{i=1}^N\dfrac{1}{N}Q(x_i,y_i)\mathbf{1}_{\{x_j=y_j, j\neq i\}}(\bold x,\bold y),
    \end{equation}
    where $Q$ is the transition matrix defined on $\mathcal{X}_0=
    \{-1,1\}$ such that $Q(x,y)=\frac{1}{2}$ for $x,y\in \mathcal{X}_0$, and therefore the chain under $P$ is lazy. According to \parencite[Example 12.16]{levin2017markov}, there exists $\gamma_*=\frac{1}{N}$ and $\pi_{\mathrm{min}}=2^{-N}$. Plugging into Theorem \ref{upper and lower bound of D_alpha}, if $\varepsilon<\min \left\{\frac{1}{2|\alpha-1|}, \frac{4\alpha}{|\alpha-1|}\right\}$, we have for $\alpha>1$,
    \begin{align*}
        \bar{t}_{\mathrm{mix}}^{(D_\alpha)}(\varepsilon,P)&\leq N\left(N\ln 2+\ln \dfrac{2\alpha}{(\alpha-1)}+\ln \dfrac{1}{\varepsilon}\right), \quad \alpha>1,\\
        \bar{t}_{\mathrm{mix}}^{(D_\alpha)}(\varepsilon,P)&\geq \dfrac{N-1}{2}\left(\ln\dfrac{1}{\varepsilon}-(2N+3)\ln 2\right), \quad \alpha>1,
    \end{align*}
    where we use $\ln \frac{N}{N-1}\leq \frac{1}{N-1}$ in the second inequality. For $\alpha<1$, we similarly have 
    \begin{align*}
        \bar{t}_{\mathrm{mix}}^{(D_\alpha)}(\varepsilon,P)&\leq N\left(N\ln 2+\ln \dfrac{8\alpha}{(1-\alpha)}+\ln \dfrac{1}{\varepsilon}\right), \quad 0<\alpha<1,\\
         \bar{t}_{\mathrm{mix}}^{(D_\alpha)}(\varepsilon,P)&\geq \dfrac{N-1}{2}\left(\ln \alpha+\ln \dfrac{1}{\varepsilon}-(2N+4)\ln 2\right), \quad 0<\alpha<1,
    \end{align*}
    Moreover, plugging $\gamma_*=\frac{1}{N}$ and $\pi_{\mathrm{min}}=2^{-N}$ into Theorem \ref{comparison between TV and classical} and \eqref{eq:upper and lower bound for classical}, when $\varepsilon<2^{-(3N+2)}$, we have
    \begin{equation*}
        \dfrac{(N-1)^2}{2N}\ln \dfrac{1}{2\varepsilon}
        \leq \bar{t}_{\mathrm{mix}}^{(\mathrm{TV})}(\varepsilon,P)\leq N \left(N\ln 2+\ln \dfrac{1}{\varepsilon}\right).
    \end{equation*}
\end{example}

\subsection{Ces\`aro mixing time and ergodicity coefficients}\label{cesaro mixing time}
In Section \ref{mixing time to stationary distribution}, we have investigated various average or worst-case mixing times of Markov chains with irreducible, aperiodic and $\pi$-reversible stochastic matrices. However, the convergence results similar to Theorem \ref{upper and lower bound of D_alpha} and \ref{comparison between TV and classical} may not hold for a periodic transition matrix $P$, as $P$ can have an eigenvalue of $-1$, which implies $\gamma_*=0$. This motivates us to study other notions of mixing times that are suitable for periodic chains. 

One such alternative is to consider the so-called Ces\'aro mixing. Following similar notations in Section \ref{mixing time to stationary distribution}, for irreducible transition matrix $P\in \mathcal{L}$ that is $\pi$-stationary and $\varepsilon>0$, we define several Ces\`aro mixing times that are based upon information divergences as follows:
\begin{align}
    \bar{t}_{\mathrm{Ces}}^{(\mathrm{TV})}(\varepsilon,P)&:=\inf \left\{t>0: \mathrm{TV}\left(\dfrac{1}{t}\sum_{s=1}^t P^s,\Pi\right)<\varepsilon\right\},\label{eq:definition of t_Ces^TV}\\
    \bar{t}_{\mathrm{Ces}}^{(\mathrm{D_\alpha})}(\varepsilon,P)&:=\inf \left\{t>0: D_{\alpha}\left(\dfrac{1}{t}\sum_{s=1}^t P^s\bigg\|\Pi\right)<\varepsilon\right\}, \nonumber\\
    \bar{t}_{\mathrm{Ces}}^{(\mathrm{R_\alpha})}(\varepsilon,P)&:=\inf \left\{t>0: R_{\alpha}\left(\dfrac{1}{t}\sum_{s=1}^t P^s\bigg\|\Pi\right)<\varepsilon\right\}.\nonumber
\end{align}

Next, we offer inequalities between Ces\`aro mixing time and related mixing times as introduced earlier. These for instance can be used for upper bounding the Ces\'aro mixing time. Note that the following results in Theorem \ref{relationship between Cesaro mixing time and classical mixing time} hold without assuming $\pi$-reversibility or aperiodicity of $P$.
\begin{theorem}\label{relationship between Cesaro mixing time and classical mixing time}
    For irreducible $\pi$-stationary $P\in \mathcal{L}$ on $\mathcal{X}$ and $\varepsilon>0$, we have 
    \begin{equation}
        \bar{t}_{\mathrm{Ces}}^{(\mathrm{TV})}(\varepsilon,P)\leq \dfrac{1}{\varepsilon (1-2\varepsilon)}t_{\mathrm{mix}}^{(\mathrm{TV})}(\varepsilon,P),
    \end{equation} 
    where $\bar{t}_{\mathrm{Ces}}^{(\mathrm{TV})}(\varepsilon,P)$  and $t_{\mathrm{mix}}^{(\mathrm{TV})}(\varepsilon,P)$ are introduced respectively in \eqref{eq:definition of t_Ces^TV} and \eqref{t mix worst case}. 
    
    Furthermore, we assume that there exists $0<\theta<1$ such that the ergodicity coefficient $\eta^{(\mathrm{TV})}(P) \leq \theta$ (i.e. $P$ is scrambling). If $\alpha\in (0,1)\cup(1,+\infty)$ and $D_{\alpha}(P\|\Pi)<\infty$, we have 
    \begin{align}
        \bar{t}_{\mathrm{Ces}}^{(\mathrm{TV})}(\varepsilon,P)&\leq \dfrac{1}{\varepsilon}\cdot\bar t_{\mathrm{mix}}^{(\mathrm{TV})}(\varepsilon,P)+\dfrac{1}{1-\theta},\label{eq:ces TV with eta<1}\\
        \bar{t}_{\mathrm{Ces}}^{(D_\alpha)}(\varepsilon,P)&\leq \dfrac{D_{\alpha}(P\|\Pi)}{\varepsilon}\cdot\bar t_{\mathrm{mix}}^{(D_\alpha)}(\varepsilon,P)+\dfrac{1}{1-\theta},\label{eq:ces D_alpha with eta<1}\\
        \bar{t}_{\mathrm{Ces}}^{(R_\alpha)}(\varepsilon,P)&\leq \frac{1+\varepsilon}{\varepsilon}D_{\alpha}(P\|\Pi)\cdot\bar t_{\mathrm{mix}}^{(R_\alpha)}\left(\varepsilon,P\right)+\dfrac{1}{1-\theta},\label{eq:ces R_alpha with eta<1}
    \end{align}
    where we recall that $\bar t_{\mathrm{mix}}^{(\mathrm{TV})}(\varepsilon,P), \bar t_{\mathrm{mix}}^{(D_\alpha)}(\varepsilon,P), \bar t_{\mathrm{mix}}^{(R_\alpha)}(\varepsilon,P)$ are respectively defined in \eqref{t mix TV}, \eqref{t mix D} and \eqref{t mix R}.
\end{theorem}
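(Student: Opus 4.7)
The plan is to establish each inequality via Jensen-type convexity bounds that reduce the Ces\`aro divergence to a sum of instantaneous divergences $D(P^s \| \Pi)$, and then to split the sum at the mixing time into a short pre-mixing part and a geometrically decaying post-mixing tail. Throughout, the identity $\Pi P = \Pi$ together with the appropriate contraction (submultiplicativity of $\bar d$ in the first part, ergodicity coefficients in the remainder) is what makes the tail geometric.

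For the first bound on $\bar{t}_{\mathrm{Ces}}^{(\mathrm{TV})}(\varepsilon,P)$, I would first exploit the joint convexity of $\mathrm{TV}$ together with the identity $\frac{1}{t}\sum_{s=1}^t \Pi = \Pi$ to obtain
\[
\mathrm{TV}\!\left(\frac{1}{t}\sum_{s=1}^t P^s,\,\Pi\right) \leq \frac{1}{t}\sum_{s=1}^t \mathrm{TV}(P^s,\Pi) \leq \frac{1}{t}\sum_{s=1}^t d^{(\mathrm{TV})}(s),
\]
where the second step uses $\mathrm{TV}(P^s,\Pi)=\sum_x \pi(x)\|P^s(x,\cdot)-\pi\|_{\mathrm{TV}} \leq d^{(\mathrm{TV})}(s)$. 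Next I would invoke the standard submultiplicativity $\bar d(s+u) \leq \bar d(s)\bar d(u)$ with $\bar d(u) \leq 2 d^{(\mathrm{TV})}(u)$ to deduce $d^{(\mathrm{TV})}(jt_0) \leq (2\varepsilon)^j$, where $t_0 := t_{\mathrm{mix}}^{(\mathrm{TV})}(\varepsilon,P)$, together with the monotonicity of $d^{(\mathrm{TV})}(\cdot)$. Writing $t = k t_0$ and partitioning the sum into $k$ consecutive blocks of length $t_0$ then yields
\[
\frac{1}{t}\sum_{s=1}^t d^{(\mathrm{TV})}(s) \leq \frac{1}{k}\sum_{j=0}^{k-1}(2\varepsilon)^j \leq \frac{1}{k(1-2\varepsilon)},
\]
and the choice $k > 1/(\varepsilon(1-2\varepsilon))$ delivers the first inequality.

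For the three scrambling-based inequalities, I would handle \eqref{eq:ces TV with eta<1} and \eqref{eq:ces D_alpha with eta<1} in parallel. For $D \in \{\mathrm{TV}, D_\alpha\}$, joint convexity (Proposition \ref{prop:convexity}) again gives $D(\frac{1}{t}\sum_s P^s \|\Pi) \leq \frac{1}{t}\sum_s D(P^s\|\Pi)$. Since $\Pi P = \Pi$, the bound $\eta^{(D_\alpha)}(P) \leq \eta^{(\mathrm{TV})}(P) \leq \theta$ from Proposition \ref{prop:ergocoef}\ref{it:ergcoef1} supplies the one-step contraction $D(P^{s+1}\|\Pi) \leq \theta\, D(P^s\|\Pi)$. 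Setting $t_0 := \bar t_{\mathrm{mix}}^{(D)}(\varepsilon,P)$, monotonicity bounds each of the first $t_0 - 1$ terms by $D(P\|\Pi)$ (which equals $1$ for $D=\mathrm{TV}$), while iterating the contraction from step $t_0$ onward gives $D(P^s\|\Pi) \leq \theta^{s-t_0}\varepsilon$ for $s \geq t_0$. Summing,
\[
\frac{1}{t}\sum_{s=1}^t D(P^s\|\Pi) \leq \frac{t_0\, D(P\|\Pi)}{t} + \frac{\varepsilon}{t(1-\theta)},
\]
which is strictly less than $\varepsilon$ whenever $t > \frac{t_0\, D(P\|\Pi)}{\varepsilon} + \frac{1}{1-\theta}$.

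Finally, I would deduce \eqref{eq:ces R_alpha with eta<1} from \eqref{eq:ces D_alpha with eta<1} via the strictly increasing transformation $t \mapsto \frac{1}{\alpha-1}\ln(1+(\alpha-1)t)$, which yields
\[
\bar t_{\mathrm{Ces}}^{(R_\alpha)}(\varepsilon,P) = \bar t_{\mathrm{Ces}}^{(D_\alpha)}(\varepsilon',P), \qquad \bar t_{\mathrm{mix}}^{(R_\alpha)}(\varepsilon,P) = \bar t_{\mathrm{mix}}^{(D_\alpha)}(\varepsilon',P),
\]
with $\varepsilon' := \frac{1}{\alpha-1}(e^{(\alpha-1)\varepsilon}-1)$. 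The elementary inequalities $e^x \geq 1+x$ (covering $\alpha>1$) and $1-e^{-x} \geq \frac{x}{1+x}$ (covering $0<\alpha<1$) both yield the two-sided estimate $\varepsilon' \geq \frac{\varepsilon}{1+\varepsilon}$, so $\frac{D_\alpha(P\|\Pi)}{\varepsilon'} \leq \frac{(1+\varepsilon)D_\alpha(P\|\Pi)}{\varepsilon}$ and \eqref{eq:ces R_alpha with eta<1} follows by substitution. The main source of technical friction in the whole argument is merely keeping track of the direction of the relevant inequalities for $\alpha<1$ versus $\alpha>1$ and verifying that the looser of the two pre-mixing estimates (namely $t_0 D(P\|\Pi)$ rather than the geometric $D(P\|\Pi)/(1-\theta)$) is the one that produces the stated constants; otherwise the plan is a sequence of convexity and geometric-series bookkeeping.
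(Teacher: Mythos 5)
Your proposal is correct and follows essentially the same route as the paper's proof: convexity reduces the Ces\`aro divergence to an average of instantaneous divergences, submultiplicativity of $\bar d$ gives the geometric decay for the first bound, the one-step contraction $D(P^{s+1}\|\Pi)\leq \theta\, D(P^s\|\Pi)$ via $\eta^{(D_\alpha)}(P)\leq \eta^{(\mathrm{TV})}(P)\leq\theta$ and $\Pi P=\Pi$ handles the scrambling bounds, and the reparametrization $\varepsilon'=\frac{1}{\alpha-1}\left(e^{(\alpha-1)\varepsilon}-1\right)$ with $\varepsilon'\geq \varepsilon/(1+\varepsilon)$ yields the R\'enyi case. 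The only differences are cosmetic (slightly looser but sufficient geometric-series constants in the first block decomposition).
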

\begin{proof}
    Denote $t=t_{\mathrm{mix}}^{(\mathrm{TV})}(\varepsilon,P)$, and hence $d^{(\mathrm{TV})}(t)=\max_{x\in\mathcal{X}} \frac{1}{2}\sum_{y\in\mathcal{X}}\left|P^t(x,y)-\pi(y)\right|< \varepsilon$. 
    For any positive integer $l>0$, we have 
    \begin{align*}
        \mathrm{TV}\left(\dfrac{1}{lt}\sum_{s=1}^{lt} P^s,\Pi\right)&=\dfrac{1}{2}\sum_{x,y}\pi(x)\left|\dfrac{1}{lt}\sum_{s=1}^{lt} P^s(x,y)-\pi(y)\right|\\
        &\leq \dfrac{1}{lt}\sum_{x}\pi(x)\sum_{s=1}^{lt}\left\| P^s(x,\cdot)-\pi(\cdot)\right\|_{\mathrm{TV}}\\
        &\leq \dfrac{1}{lt}\sum_{s=1}^{lt}\max_x\left\| P^s(x,\cdot)-\pi(\cdot)\right\|_{\mathrm{TV}}\\
        &= \dfrac{1}{lt}\sum_{s=1}^{lt} d^{(\mathrm{TV})}(s).
    \end{align*}
    According to \parencite[Lemma 4.10,4.11,Ex.4.2]{levin2017markov}, $d(t)$ is non-increasing, and there exists
    \begin{equation*}
        d^{(\mathrm{TV})}(cs)\leq (2d^{(\mathrm{TV})}(s))^c, \quad \text{c is any positive integer},
    \end{equation*}
    hence we have 
    \begin{align*}
        \mathrm{TV}\left(\dfrac{1}{lt}\sum_{s=1}^{lt} P^s,\Pi\right)&\leq \dfrac{1}{lt}\left(t+t\varepsilon+t(2\varepsilon)^2+...+t(2\varepsilon)^{l-1}\right)\\
        &\leq \dfrac{1}{l}\left(\dfrac{1}{1-2\varepsilon}-\varepsilon\right),
    \end{align*}
    and in order for the right hand side above smaller than $\varepsilon$, it suffices to have
    \begin{equation*}
        l\geq \dfrac{1}{\varepsilon(1-2\varepsilon)}\geq \left\lceil \dfrac{1-\varepsilon+2\varepsilon^2}{\varepsilon-2\varepsilon^2}\right\rceil.
    \end{equation*}
    Therefore, 
    \begin{equation*}
        \bar{t}_{\mathrm{Ces}}^{(\mathrm{TV})}(\varepsilon,P)\leq \dfrac{t}{\varepsilon(1-2\varepsilon)}= \dfrac{1}{\varepsilon (1-2\varepsilon)}t_{\mathrm{mix}}^{(\mathrm{TV})}(\varepsilon,P).
    \end{equation*}

    To prove \eqref{eq:ces TV with eta<1} and \eqref{eq:ces D_alpha with eta<1}, by convexity of $D_{\alpha}(\cdot\|\cdot)$ from Proposition \ref{prop:convexity}, for any integer $T>0$, it is easy to verify 
    \begin{align*}
        \mathrm{TV}\left(\dfrac{1}{T}\sum_{s=1}^{T} P^s,\Pi\right)&\leq \dfrac{1}{T}\sum_{s=1}^{T} \mathrm{TV}(P^s,\Pi),\\
        D_{\alpha}\left(\dfrac{1}{T}\sum_{s=1}^{T} P^s \bigg \|\Pi\right)&\leq \dfrac{1}{T}\sum_{s=1}^{T} D_{\alpha}(P^s\|\Pi).
    \end{align*}
    Denote $t=\bar t_{\mathrm{mix}}^{(\mathrm{TV})}(\varepsilon,P)$, which gives $\mathrm{TV}(P^t\|\Pi)\leq \varepsilon$. According to Proposition \ref{prop:ergocoef} item \ref{it:ergcoef1}, we thus have, for $s>t$,
    \begin{align*}
        \mathrm{TV}(P^s,\Pi)\leq \theta^{s-t}\varepsilon, 
    \end{align*}
    which yields 
    \begin{align*}
        \mathrm{TV}\left(\dfrac{1}{T}\sum_{s=1}^{T} P^s,\Pi\right)&\leq \dfrac{1}{T}\left(\sum_{s=1}^{t}\mathrm{TV}(P^s,\Pi)+\sum_{s=t+1}^T \mathrm{TV}(P^s,\Pi)\right)\\
        &\leq \frac{1}{T}\left(t+\frac{\varepsilon \theta}{1-\theta}\right).
    \end{align*}
    We set the right hand side above to be smaller than $\varepsilon$. As such we require that
    \begin{equation*}
        T\geq \frac{t}{\varepsilon}+\frac{1}{1-\theta}\geq\left\lceil \frac{t}{\varepsilon}+\frac{\theta}{1-\theta}\right\rceil,
    \end{equation*}
    and therefore \eqref{eq:ces TV with eta<1} is obtained. The proof for \eqref{eq:ces D_alpha with eta<1} is similar by recalling that $\eta^{(D_\alpha)}(P) \leq \eta^{(\mathrm{TV})}(P)\leq \theta$ as stated in Proposition \ref{prop:ergocoef} item \ref{it:ergcoef1}. For \eqref{eq:ces R_alpha with eta<1}, analogous to the proof of \eqref{eq:relationship between t D_alpha and t R_alpha} we see that
    \begin{equation*}
        \bar{t}_{\mathrm{Ces}}^{(R_\alpha)}(\varepsilon,P)=\bar{t}_{\mathrm{Ces}}^{(D_\alpha)}\left(\dfrac{1}{\alpha-1}\left(e^{(\alpha-1)\varepsilon}-1\right),P\right),
    \end{equation*}
    then by \eqref{eq:ces D_alpha with eta<1} and \eqref{eq:relationship between t D_alpha and t R_alpha} we have
    \begin{align*}
        \bar{t}_{\mathrm{Ces}}^{(R_\alpha)}(\varepsilon,P)&=\bar{t}_{\mathrm{Ces}}^{(D_\alpha)}\left(\dfrac{1}{\alpha-1}\left(e^{(\alpha-1)\varepsilon}-1\right),P\right)\\
        &\leq \dfrac{\alpha-1}{e^{(\alpha-1)\varepsilon}-1}D_{\alpha}(P\|\Pi)\bar t_{\mathrm{mix}}^{(D_\alpha)}\left(\dfrac{1}{\alpha-1}\left(e^{(\alpha-1)\varepsilon}-1\right),P\right)+\dfrac{1}{1-\theta}\\
        &=\dfrac{\alpha-1}{e^{(\alpha-1)\varepsilon}-1}D_{\alpha}(P\|\Pi)\bar t_{\mathrm{mix}}^{(R_\alpha)}\left(\varepsilon,P\right)+\dfrac{1}{1-\theta}\\
        &\leq \frac{1+\varepsilon}{\varepsilon}D_{\alpha}(P\|\Pi)\bar t_{\mathrm{mix}}^{(R_\alpha)}\left(\varepsilon,P\right)+\dfrac{1}{1-\theta},
    \end{align*}
    where the last inequality comes from $e^{(\alpha-1)\varepsilon}\geq (\alpha-1)\varepsilon+1$ for $\alpha>1$ and $e^{(1-\alpha)\varepsilon}\geq (1-\alpha)\varepsilon +1$ for $0<\alpha<1$. 
\end{proof}

In \eqref{eq:ces TV with eta<1}, \eqref{eq:ces D_alpha with eta<1} and \eqref{eq:ces R_alpha with eta<1}, we require that $P$ is a scrambling matrix so that $\eta^{(\mathrm{TV})}(P)<1$ (see Proposition \ref{prop:ergocoef} item \ref{it:ergcoef7} and Remark \ref{remark: scrambling}), while in many cases this condition is hard to guarantee. An alternative consideration is, if we can show that $\eta^{(\mathrm{TV})}(P^t)<1$ for some $t>0$, then by  submultiplicativity of $\eta^{(\mathrm{TV})}(\cdot)$ from Proposition \ref{prop:ergocoef} item \ref{it:submulti}, it is a direct corollary that for all $s\geq t$, $\eta^{(\mathrm{TV})}(P^s)\leq \eta^{(\mathrm{TV})}(P^t)<1$, and we can still obtain similar bound for Ces\`aro mixing time. In the following example, we will show that in general Metropolis-Hastings algorithms under low temperature, the first time that the ergodicty/Dobrushin coefficient $\eta^{(\mathrm{TV})}(P^t)<\varepsilon$ for any $\varepsilon>0$ has an exponential dependence on the exit height.
\begin{example}
    [Ergodicity/Dobrushin coefficient of Metropolis-Hastings chain]

    Let $Q$ be a reversible transition matrix with respect to probability measure $\mu$ on finite state space $\mathcal{X}$. Let $U:\mathcal{X}\rightarrow \mathbb R$ be the target function, $\beta$ be the inverse temperature, and we use the Metropolis-Hastings algorithm to sample from the Gibbs distribution $\pi_\beta(x) \propto \mu(x)e^{-\beta U(x)}$. The transition matrix of the chain can be written as
    \begin{equation*}
        P_{\beta}(x,y)=\begin{cases}
            Q(x,y)\exp \left(-\beta (U(y)-U(x))_+\right), \quad & x\neq y,\\
            1-\sum_{z\neq x}P_{\beta}(x,z), \quad & x=y.
        \end{cases}
    \end{equation*}
    Suppose the chain generated by $P_\beta$ is $\{X_n\}_{n\geq 0}$. Following \parencite{catoni2006simulated} and \parencite{miclo2002relaxation}, for $A\subseteq \mathcal{X}$, we define the first hitting time of the set $A$ to be $\tau_A:=\inf\{n: X_n\in A\}$, and the exit height $H(A)$ as
    \begin{align*}
        h(x,A)&:=\lim_{\beta\rightarrow \infty} \dfrac{1}{\beta} \ln \left(\mathbb E_{x} [\tau_{A^c}]\right),\\
        H(A)&:=\max_{x\in A} h(x,A).
    \end{align*}
    Note that there is a dependency on the inverse temperature $\beta$ of $\mathbb E_{x} [\tau_{A^c}]$ via the underlying Metropolis-Hastings chain. Next, we define the product chain
    \begin{equation*}
        P_{\beta}^{\otimes 2}\left((x_1,x_2),(y_1,y_2)\right):=P_\beta (x_1,y_1)P_\beta (x_2,y_2), \quad (x_1,x_2),(y_1,y_2)\in \mathcal{X}^2,
    \end{equation*}
    and the constant 
    \begin{equation*}
        H_3:=H(\mathcal{X}^2\setminus\Delta), \quad \Delta:=\{(x,x)\in \mathcal{X}^2:x\in \mathcal{X}\},
    \end{equation*}
    where $H_3$ is an exit height with respect to the product chain $P_{\beta}^{\otimes 2}$.

    For any $0<\varepsilon<1$, we define the first time that $\eta^{(\mathrm{TV})}(P_\beta^t)<\varepsilon$ as 
    \begin{align*}
        t_{\mathrm{Dob}}^{(\mathrm{TV})}(\varepsilon, P_{\beta})&:=\inf\{t\geq 0: \eta^{(\mathrm{TV})}(P_{\beta}^t)<\varepsilon\},\\
        \widetilde t_{\mathrm{Dob}}^{(\mathrm{TV})}(\varepsilon, P_{\beta})&:=\inf\{t\geq 0: \widetilde \eta^{(\mathrm{TV})}(P_{\beta}^t)<\varepsilon\},
    \end{align*}
    then according to \parencite[Proposition 2.1]{miclo2002relaxation}, there exists
    \begin{equation*}
        \lim_{\beta\rightarrow \infty}\dfrac{1}{\beta} \ln \left( \widetilde t_{\mathrm{Dob}}^{(\mathrm{TV})}(\varepsilon, P_{\beta})\right)=H_3.
    \end{equation*}
    Since $\eta^{(\mathrm{TV})}(P_{\beta}^t)=\widetilde \eta^{(\mathrm{TV})}(P_{\beta}^t)$ by Proposition \ref{prop:ergocoef} item \ref{it:ergcoef2}, we then have 
    \begin{equation*}
        \lim_{\beta\rightarrow \infty}\dfrac{1}{\beta} \ln \left( t_{\mathrm{Dob}}^{(\mathrm{TV})}(\varepsilon, P_{\beta})\right)=H_3.
    \end{equation*}
\end{example}

\section*{Acknowledgements}
Youjia Wang gratefully acknowledges the financial support from National University of Singapore via the Presidential Graduate Fellowship. Michael Choi acknowledges the financial support of the project “MAPLE: Mechanistic Accelerated Prediction of Protein Secondary Structure via LangEvin Monte Carlo” with grant number 22-5715-P0001 under the National University of Singapore Faculty of Science Ministry of Education Tier 1 grant Data for Science and Science for Data collaborative scheme, as well as the startup funding of the National University of Singapore.

\printbibliography


\end{document}